\documentclass[]{lmcs}

\usepackage[utf8]{inputenc}
\usepackage{amsmath,
	    amsfonts,
	    amstext,
	    amssymb,
	    mathrsfs,
	    mathpartir,
	    stmaryrd,
	    latexsym,
	    enumitem,
	    proof,
	    graphicx,
	    xcolor,
	    hyperref,
	    comment}

\usepackage{booktabs}
\usepackage{bm}
\usepackage{mymacros}

\usetikzlibrary{arrows, positioning, arrows.meta, automata, shapes, shapes.misc, shapes.geometric}

\tikzset{modal/.style = {>= stealth', shorten >= 0pt, shorten <= 0pt, auto,
			 node distance = 1cm, semithick},
	 point/.style = {circle, draw, fill = black, inner sep = 0.5mm},
	 elliptic state/.style = {draw, rounded rectangle, rounded rectangle arc length = 180}}

\begin{document}

\title[A Lumpability-Driven Taxonomy of Stochastic Bisimilarities]
      {A Lumpability-Driven Taxonomy of \\
       Strong and Weak Stochastic Bisimilarities \\
       with Their Congruence Properties}

\author[R.~Romanello]{Riccardo Romanello\lmcsorcid{0000-0002-2855-1221}}[a]
\author[A.~Esposito]{Andrea Esposito\lmcsorcid{0009-0009-2259-902X}}[b]
\author[M.~Bernardo]{Marco Bernardo\lmcsorcid{0000-0003-0267-6170}}[c]
\author[C.~Piazza]{Carla Piazza\lmcsorcid{0000-0002-2072-1628}}[a]
\author[S.~Rossi]{Sabina Rossi\lmcsorcid{0000-0002-1189-4439}}[d]

\address{Dipartimento di Scienze Matematiche, Informatiche e Fisiche, Universit\`a di Udine, Italy}
\email{riccardo.romanello@uniud.it, carla.piazza@uniud.it}

\address{Dipartimento di Ingegneria Elettrica e delle Tecnologie dell'Informazione, Universit\`a  di Napoli
Federico II, Italy}
\email{andrea.esposito9@unina.it}

\address{Dipartimento di Scienze Pure e Applicate, Universit\`a di Urbino, Italy}
\email{marco.bernardo@uniurb.it}

\address{Dipartimento di Scienze Ambientali, Informatica e Statistica, Universit\`a Ca' Foscari, Venezia,
Italy}
\email{sabina.rossi@unive.it}

\keywords{Continuous-Time Markov Chains, Lumpability, Time Reversibility, Stochastic Process Algebras,
Stochastic Bisimilarities, Compositionality, Noninterference}


\begin{abstract}
The aim of this paper is to study the relationships among the stochastic bisimulation-style equivalences
over PEPA -- Performance Evaluation Process Algebra definable according to the well known notions of
lumpability for the continuous-time Markov chains (CTMCs) underlying process terms. Lumpability is a central
tool in the analysis of a CTMC, because it results in aggregations of the state space enjoying properties
that are useful for efficiently computing the state probability distribution of the original chain.
\linebreak At the level of process terms, various stochastic bisimilarities accounting for activity types
and cumulative rates can be defined over PEPA, which induce different kinds of lumping. Since the
formalisations of some of them are scattered across the literature, where they appear under different, and
sometimes clashing, names, we collect them within a single, uniform framework, renaming each bisimilarity in
a consistent way after the kind of lumping it induces. Recalling that bisimilarities are said weak or strong
depending on whether they abstract from activities of unobservable types or not, we present strong and weak
variants of what we call ordinary, exact, and strict bisimilarities and show that they respectively induce
ordinary, exact, and strict lumpings. We then organise the six bisimilarities into a taxonomy establishing
all and only the inclusions holding among them, by means of some distinguishing examples. We also analyse
how the taxonomy changes in three special cases: process terms whose underlying CTMCs are time reversible,
process terms with no activities of unobservable types, and process terms with no recursion. The paper
concludes by investigating the compositionality properties of the six bisimilarities. Some of them are not
congruences with respect to the prefix and/or choice operators of PEPA. In that case we single out either a
set of process terms over which congruence with respect to those operators is achieved, or the coarsest
congruence with respect to them that is contained in the considered bisimilarity. This is especially
important for the three weak bisimilarities due to their use in equivalence-checking-based noninterference
analysis.
\end{abstract}

\maketitle

%
%
\section{Introduction}
\label{sec:intro}
%
%

Behavioural equivalences are a cornerstone of concurrency theory. They formalise when two systems exhibit
the same observable behaviour, support compositional reasoning, and enable state space minimisation up to
the chosen notion of observation. The paradigmatic examples are strong and weak bisimilarities over labelled
transition systems, introduced by Milner to express the capability of mimicking each other's behaviour
stepwise, possibly abstracting from unobservable activities, in the setting of nondeterministic process
algebra~\cite{Mil89a}. This is a compositional specification language including operators such as activity
prefix, choice, parallel composition, activity hiding, and recursion.

In the quantitative setting of stochastic process algebras, in which every activity is accompanied by the
rate (of an exponentially distributed random variable) at which it is executed, behavioural equivalences
play a twofold role~\cite{Ber07b}. On the one hand, as in the classical setting, they allow one to compare
and substitute process terms at the syntactic level. On the other hand, since the semantics of a process
term is an action-labelled variant of a continuous-time Markov chain (CTMC), an equivalence over terms
induces a partition of the state space of the underlying chain. When this partition satisfies suitable
conditions, such as ordinary (originally called strong)
lumpability~\cite{kemeny76:basic.concepts,buchholz:lumpability}, exact
lumpability~\cite{schweitzer:aggregations,buchholz:lumpability}, or their intersection named strict
lumpability~\cite{sumita.reiders:lumpability,buchholz:lumpability}, the aggregate process is still a CTMC
from which the precise performance indices of the original chain can be computed more efficiently, as the
induced aggregations enjoy useful properties with respect to the state probability distribution. For
instance, under ordinary lumpability the steady-state probability of each macro-state is the sum of the
probabilities of the states it aggregates, while under exact lumpability the states belonging to the same
macro-state are equiprobable at steady state.

Two levels of abstraction thus coexist: the level of the chains, where lumpability is a property of a
partition of the state space, and the level of the terms, where behavioural equivalences are defined on the
basis of a notion of observation. The aim of this paper is to investigate the relationships between these
two levels, focusing on the stochastic bisimilarities definable over PEPA -- Performance Evaluation Process
Algebra~\cite{hillston:book}. Some of them have been proposed in the literature, each inducing a different
kind of lumping on the underlying CTMC: the strong equivalence of~\cite{hillston:book}, the lumpable
bisimilarity of~\cite{marin:valuetools13}, the exact (performance) equivalence of~\cite{Buc94b,inf18}, the
exact lumpable bisimilarity of~\cite{valuetools25-exact}, and the Markovian forward-reverse bisimilarity
of~\cite{BR23}. Their definitions and properties are scattered across the literature, where the terminology
does not always reflect the kind of induced lumping, thus making it hard to compare the aforementioned
equivalences with each other and choose the right one for a given application.

This paper, which is an integrated and revised version of~\cite{marin:valuetools13,inf18,
valuetools25-exact} and the second part of~\cite{BR23}, aims to collect in a single work all the definitions
concerning the above notions and relate them in a unifying taxonomy. More precisely:

	\begin{itemize}

\item We gather, within a uniform framework, the theory of lumpability for CTMCs -- i.e., ordinary, exact,
and strict lumpabilities -- together with their interplay with time
reversibility~\cite{marin:mascots14,MR-acta17}, as well as six stochastic bisimilarities over PEPA, proving
for each equivalence the kind of lumping it induces. In doing so, we rename the five aforementioned
bisimilarities in a coherent and unambiguous way, so that the name of every equivalence explicitly recalls
the kind of lumping it induces on the underlying CTMC: strong ordinary bisimilarity $\sim_{\rm o}$, weak
ordinary bisimilarity $\approx_{\rm o}$, strong exact bisimilarity $\sim_{\rm e}$, weak exact bisimilarity
$\approx_{\rm e}$, strong strict bisimilarity $\sim_{\rm s}$, and weak strict bisimilarity $\approx_{\rm s}$
(which has not appeared yet in the literature). Weak and strong respectively refer to the capability of
abstracting from activities of unobservable types or not. The forthcoming Table~\ref{tab:bisim_renaming}
maps our terminology to the names appeared in the literature.

\item We organise the six lumpability-driven bisimilarities into a taxonomy, depicted in the forthcoming
Figure~\ref{fig:spectrum}, proving that some strict inclusions, and no other inclusions, hold among them.
The proof relies on some distinguishing examples, as a consequence of which $\sim_{\rm o}$, $\sim_{\rm e}$,
and $\approx_{\rm s}$ as well as $\approx_{\rm o}$ and $\approx_{\rm e}$ turn out to be pairwise
incomparable.

\item We analyse how the taxonomy changes in three special cases: process terms whose underlying CTMCs are
time reversible, process terms with no activities of unobservable types, and process terms with no
recursion.

	\end{itemize}

A further contribution of this paper is the investigation of the compositionality properties of the six
bisimilarities in a process algebraic setting. Some of them are not congruences with respect to the prefix
and/or choice operators. In that case we single out either a set of process terms over which congruence with
respect to those operators is achieved, or the coarsest congruence with respect to them that is contained in
the considered bisimilarity.

Besides providing a systematic account of a body of definitions that are currently dispersed, our taxonomy
acts as a reference map for the equivalence-based verification of stochastic systems. Knowing how the
discriminating powers of the six bisimilarities compare to each other, as well as their compositionality
properties, allows one to select, for a given application, the coarsest equivalence that is still adequate
for the properties of interest. A prominent example is the security analysis of stochastic systems, where
following~\cite{FG01} weak stochastic bisimilarities can be used to define noninterference
properties~\cite{GM82}. We will return to this application, which originally motivated the present work, in
Section~\ref{sec:conclusion}.

%
\subsection*{Related Work}
%

The classical references for bisimulation-based equivalences, both in their strong and weak variants, go
back to Milner~\cite{Mil89a}. Taxonomies of behavioural equivalences are valuable because they relate the
discriminating powers of relations that observe different aspects of computation. The prominent work in this
direction is van Glabbeek's linear-time/branching-time spectrum~\cite{Gla01}, which brought order to the
wide variety of equivalences over labelled transition systems. In the nondeterministic and probabilistic
setting, spectra of behavioural equivalences have been studied in~\cite{CR11}, in the case of (bi)simulation
relations only, and in~\cite{BDL14b}, which includes testing and trace relations too along with several
approaches to the definition of each family of equivalences. Taxonomies have also been investigated for
truly concurrent equivalences~\cite{GG01,Fec04,PU12,EB26}, which are not covered by van Glabbeek's
interleaving spectrum.

On the CTMC side, ordinary and exact lumpabilities were respectively introduced
in~\cite{kemeny76:basic.concepts} and~\cite{schweitzer:aggregations} and further investigated
in~\cite{sumita.reiders:lumpability,buchholz:lumpability,franceschinis:bounds1,franceschinis:lumping.peva,
inf18}. The relationships between lumpability and reversibility of CTMCs, which we will exploit in
Section~\ref{sec:taxonomy}, were studied in~\cite{marin:mascots14,MR-acta17}. For the sake of completeness,
we also mention \linebreak W-lumpability~\cite{Ber15}, which is the aggregation induced by a variant of
ordinary bisimilarity that abstracts from suitable combinations of activities of unobservable types by
preserving their expected durations and execution probabilities, and T-lumpability~\cite{Ber07a}, which is
the aggregation induced by stochastic versions of testing and trace equivalences~\cite{BC00}.

To the best of our knowledge, no previous work organised into a single spectrum the stochastic
bisimilarities definable over PEPA according to the kind of lumping they induce.

%
\subsection*{Structure of the Paper}
%

Section~\ref{sec:ctmc} recalls the terminology on CTMCs followed by ordinary, exact, and strict
lumpabilities, reversibility, and their mutual relationships, while Section~\ref{sec:pepa} recalls syntax,
operational semantics, and underlying CTMCs for PEPA process terms. Section~\ref{sec:bisim} presents, under
our uniform naming scheme, the six bisimilarities $\sim_{\rm o}$, $\approx_{\rm o}$, $\sim_{\rm e}$,
$\approx_{\rm e}$, $\sim_{\rm s}$, and $\approx_{\rm s}$ and proves the lumpings they induce.
Section~\ref{sec:taxonomy} develops the taxonomy of the six bisimilarities and analyses how it changes in
the three special cases. Section~\ref{sec:congruence} investigates the compositionality properties of the
six bisimilarities with respect to all PEPA operators. Section~\ref{sec:conclusion} concludes the paper and
discusses future work.

%
%
\section{Continuous-Time Markov Chains}
\label{sec:ctmc}
%
%

In this section we recall the notions about CTMCs that will be used throughout the paper. We first fix the
terminology concerning ergodicity, equilibrium distributions, and infinitesimal generators (Section
\ref{subsec:ctmc}). We then review the three notions of lumpability around which our taxonomy is built:
ordinary (Section~\ref{subsec:ordinary_lump}), exact (Section~\ref{subsec:exact_lump}), and strict
(Section~\ref{subsec:strict_lump}). Finally, we recall the concept of time reversibility
(Section~\ref{subsec:time_rev}) and its relationships with lumpability (Section~\ref{subsec:rev_lump}). The
expert reader may safely skim through this section, whose purpose is to fix notation and collect in one
place results that are scattered in the literature.

%
\subsection{CTMC Terminology}
\label{subsec:ctmc} 
%

A stochastic process is a collection of random variables describing the evolution of a system. A CTMC is a
stochastic process $X(t)$ for $t \in \mathbb{R}_{\ge 0}$ taking values in a discrete state space
$\mathcal{S}$ that enjoys the \emph{Markov or memoryless property}, i.e., the conditional (on both past and
present states) probability distribution of its future behaviour is independent of its past evolution until
the present state. Formally, for all $n \in \mathbb{N}$, time instants $t_0 < t_1 < \cdots < t_n < t_{n + 1}
\in \mathbb{R}_{\ge 0}$, and states $s_0, s_1, \ldots, s_n, s_{n + 1} \in \mathcal{S}$ it holds that:
\begin{multline*}
\mathit{Prob}(X(t_{n+1}) = s_{n+1} \mid X(t_{0}) = s_{0}, X(t_{1}) = s_{1}, \ldots, X(t_{n}) = s_{n}) \: =
\\
\mathit{Prob}(X(t_{n+1}) = s_{n+1} \mid X(t_{n}) = s_{n})
\end{multline*}

A CTMC can be represented as a state-transition graph or a state-indexed matrix. \linebreak In the first
case, each transition is labelled with some probabilistic information describing the evolution from the
source state to the target state of the transition. In the second case, the same information is stored into
an entry, indexed by those two states, of a matrix. The value of this probabilistic information is a
function of the time at which the state change takes place.

For the sake of simplicity, we restrict ourselves to \emph{time-homogeneous} CTMCs, in which conditional
probabilities of the form $\mathit{Prob}(X(t + t') = s' \mid X(t) = s)$ do not depend on $t$, so that the
aforementioned information is simply a positive real number $r = \lim_{t' \rightarrow 0}
\frac{\mathit{Prob}(X(t + t') = s' \mid X(t) = s)}{t'}$. This is called the \emph{rate} at which the CTMC
moves from state~$s$ to state~$s'$ and uniquely characterizes the exponentially distributed time taken by
the considered move: the probability that this time is at most $t$ is $1 - e^{- r \cdot t}$. It can be shown
that the sojourn time in any state $s \in \mathcal{S}$ is exponentially distributed with rate given by the
sum of the rates of the moves out of $s$. The average sojourn time in $s$ is the inverse of such a sum and
the probability of moving from $s$ to $s'$ is proportional to the corresponding rate.

A CTMC is \emph{irreducible} iff each of its states is reachable from every other state with probability
greater than $0$. A state $s \in \mathcal{S}$ is \emph{recurrent} iff the CTMC will eventually return to $s$
with probability $1$, in which case $s$ is \emph{positive recurrent} iff the expected time until the CTMC
returns to it is finite. A CTMC is \emph{ergodic} iff it is irreducible and all of its states are positive
recurrent; ergodicity coincides with irreducibility in the case that the CTMC has finitely many states, as
they form a finite strongly connected component in the graph.

Every time-homogeneous and ergodic CTMC $X(t)$ is \emph{stationary}, which means that $(X(t_{i} + t'))_{1
\le i \le n}$ has the same joint distribution as $(X(t_{i}))_{1 \le i \le n}$ for all $n \in \mathbb{N}_{\ge
1}$ and $t_{1} < \dots < t_{n}, t' \in \mathbb{R}_{\ge 0}$. In this case, $X(t)$ has a unique
\emph{equilibrium (or steady-state) probability distribution} $\bm{\pi}$ that for all $s \in \mathcal{S}$
fulfills:
\[
\pi(s) \: = \: \lim_{t \rightarrow \infty} \mathit{Prob}(X(t) = s \mid X(0) = s')
\]
where its values are independent of the specific initial state $s' \in \mathcal{S}$. These probabilities can
be computed by solving the linear system of \emph{global balance equations} $\bm{\pi} \cdot \mathbf{Q} =
\mathbf{0}$ subject to $\sum_{s \in \mathcal{S}} \pi(s) = 1$ and $\pi(s) \in \mathbb{R}_{> 0}$ for all $s
\in \mathcal{S}$. The \emph{infinitesimal generator matrix} $\mathbf{Q}$ contains for each pair of distinct
states $s$ and $s'$ the rate $q(s, s')$ of the corresponding move, which is~$0$ in the absence of a direct
move between them, while $q(s, s) = - \sum_{s' \neq s} q(s, s')$ for all $s \in \mathcal{S}$, i.e., every
diagonal element contains the opposite of the total exit rate of the corresponding state, so that each row
of $\mathbf{Q}$ sums up to $0$.

In the context of performance and reliability analysis, the notion of \emph{lumpability} provides a model
simplification technique that can be used for generating an aggregate CTMC that is smaller than the original
one but allows one to determine exact results for the original stochastic process. The concept of
lumpability can be formalized in terms of equivalence relations over the state space of the CTMC. Any such
equivalence induces a \emph{partition} of the state space of the CTMC and aggregation is achieved by
clustering equivalent states into macro-states, thus reducing the overall state space.

Intuitively, the coarser the partition, the greater the reduction of the state space; the art consists of
aggregating as many states as possible while preserving the exactness of the analysis. The three notions
recalled in Sections~\ref{subsec:ordinary_lump} to~\ref{subsec:strict_lump} differ in the direction of the
transition rates they constrain: ordinary lumpability looks at the flows \emph{leaving} the states of a
class, exact lumpability at the flows \emph{entering} them, and strict lumpability at both. This simple
observation is the key to most of the results relating lumpability and time reversibility presented in
Section~\ref{subsec:rev_lump}, as reversing time swaps incoming and outgoing flows.

%
\subsection{Ordinary Lumpability}
\label{subsec:ordinary_lump}
%

Ordinary lumpability was introduced in \cite{kemeny76:basic.concepts} and further studied in
\cite{sumita.reiders:lumpability,buchholz:lumpability,MR-acta17,inf18}. Notice that in the literature
ordinary lumpability is also referred to as strong lumpability, which we prefer to avoid so as not to
generate confusion with the use of strong and weak for bisimilarities.

	\begin{defi}[Ordinary lumpability]\label{def:ordinary_lump}
Let $X(t)$ be a CTMC with state space $\mathcal{S}$ and infinitesimal generator $\mathbf{Q}$ and $\sim$ be
an equivalence relation over $\mathcal{S}$. We say that $X(t)$ is \emph{ordinarily lumpable} with respect to
$\sim$ -- or equivalently that $\sim$ is an \emph{ordinary lumpability} for $X(t)$ -- iff $\sim$ induces a
partition of $\mathcal{S}$ -- called an \emph{ordinary lumping} -- such that for all equivalence classes
$C, C' \in \mathcal{S} / {\sim}$ with $C \neq C'$ and for all states $s_{1}, s_{2} \in C$:
\[
\sum_{s' \in C'} q(s_{1}, s') \: = \: \sum_{s' \in C'} q(s_{2}, s')
\]
	\end{defi}

Thus, an equivalence relation over the state space of a CTMC is an ordinary lumpability when it induces a
partition into equivalence classes such that, for any two states within an equivalence class, their
cumulative transition rates to any other class are the same. We refer to the equality above as the
\emph{outgoing condition}, since it constrains the total rate at which the states of a class move
\emph{towards} each other class. Notice that every CTMC is ordinarily lumpable with respect to the identity
relation -- which brings no advantage in terms of state space reduction -- as well as the relation having
only one equivalence class -- which is not useful either because the contribution to performance measures
typically differ from state to~state.

In \cite{kemeny76:basic.concepts} the authors prove that, for an equivalence relation $\sim$ over the state
space $\mathcal{S}$ of a CTMC $X(t)$, the aggregate process is still a CTMC for every initial distribution
if, and only if, $\sim$ is an ordinary lumpability for $X(t)$. Moreover, when the ordinary lumpability
condition holds, the infinitesimal generator of the lumped chain can be directly computed from the original
generator, as expressed by the following proposition (see also~\cite{buchholz:lumpability}). Its proof
follows from the general aggregation equation, where $q_{\mathrm{ag}}(C, C')$ defines the transition rate
between two aggregate states $C, C' \in \mathcal{S} / {\sim}$:
\[
q_{\mathrm{ag}}(C, C') \: = \: \frac{\sum_{s \in C} (\pi(s) \cdot \sum_{s' \in C'} q(s, s'))}{\sum_{s \in C}
\pi(s)}
\]

Intuitively, $q_{\mathrm{ag}}(C, C')$ is the steady-state conditional expectation of the transition rate
from $C$ to $C'$, given that the chain is in $C$. In general, this quantity depends on the equilibrium
distribution $\bm{\pi}$ and the aggregate process is not even Markovian, whereas under ordinary lumpability
it simplifies to an expression that does not involve $\bm{\pi}$ as shown by the following proposition.

	\begin{prop}[Aggregate process for ordinary lumpability]\label{prop:aggr_proc_ordinary_lump}
Let $X(t)$ be a CTMC with state space $\mathcal{S}$, infinitesimal generator $\mathbf{Q}$, and equilibrium
distribution $\bm{\pi}$. If $\sim$ is an ordinary lumpability for $X(t)$, then the aggregate process
$\widetilde{X}(t)$ with state space $\mathcal{S} / {\sim}$ has infinitesimal generator
$\widetilde{\mathbf{Q}}$ defined by letting for all equivalence classes $C, C' \in \mathcal{S} / {\sim}$
with $C \neq C'$ and for an arbitrary state $s \in C$:
\[
\widetilde{q}(C, C') \: = \: \sum_{s' \in C'} q(s, s')
\]
Moreover, the equilibrium distribution $\widetilde{\bm{\pi}}$ of $\widetilde{X}(t)$ exists and is such that
$\widetilde{\pi}(C) = \sum_{s \in C} \pi(s)$ for all equivalence classes $C \in \mathcal{S} / {\sim}$.
	\end{prop}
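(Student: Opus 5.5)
The plan is to start from the general aggregation equation recalled just before the statement and show that, under ordinary lumpability, it collapses to the claimed expression. Fix classes $C \neq C'$ in $\mathcal{S} / {\sim}$. By Definition~\ref{def:ordinary_lump}, the inner sum $\sum_{s' \in C'} q(s, s')$ takes the same value, call it $\widetilde{q}(C, C')$, for every $s \in C$. This quantity can therefore be pulled out of the outer sum in the numerator of $q_{\mathrm{ag}}(C, C')$, giving $\widetilde{q}(C, C') \cdot \sum_{s \in C} \pi(s)$. Since $\pi(s) > 0$ for all $s$, the denominator is positive and cancels. Hence $q_{\mathrm{ag}}(C, C') = \widetilde{q}(C, C')$, which does not depend on $\bm{\pi}$ or on the representative chosen.

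The diagonal entries are set as $\widetilde{q}(C, C) = -\sum_{C' \neq C} \widetilde{q}(C, C')$, so that $\widetilde{\mathbf{Q}}$ is a proper generator. To justify that $\widetilde{X}(t)$ really is a CTMC with this generator, I will invoke the result of~\cite{kemeny76:basic.concepts} recalled above: under ordinary lumpability the aggregate process is Markovian for every initial distribution. If a self-contained argument is preferred, I will use the collector matrix $\mathbf{V}$, with $v(s, C) = 1$ iff $s \in C$. The outgoing condition is exactly $\mathbf{Q}\mathbf{V} = \mathbf{V}\widetilde{\mathbf{Q}}$. On the diagonal block this follows because row sums of $\mathbf{Q}$ vanish, so $\sum_{s' \in C} q(s, s') = -\sum_{C' \neq C} \widetilde{q}(C, C')$ is also constant on $C$. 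From $\mathbf{Q}\mathbf{V} = \mathbf{V}\widetilde{\mathbf{Q}}$ one gets $e^{\mathbf{Q}t}\mathbf{V} = \mathbf{V}e^{\widetilde{\mathbf{Q}}t}$, so transition probabilities into classes depend only on the current class.

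For the equilibrium part, define $\widetilde{\pi}(C) = \sum_{s \in C} \pi(s)$, i.e.\ $\widetilde{\bm{\pi}} = \bm{\pi}\mathbf{V}$. Then $\widetilde{\bm{\pi}}\widetilde{\mathbf{Q}} = \bm{\pi}\mathbf{V}\widetilde{\mathbf{Q}} = \bm{\pi}\mathbf{Q}\mathbf{V} = \mathbf{0}$. Its entries are positive and sum to $1$. Irreducibility and positive recurrence of $\widetilde{X}(t)$ are inherited from $X(t)$, since any path between states projects onto a path between their classes. Hence the equilibrium distribution of $\widetilde{X}(t)$ exists, is unique, and coincides with $\widetilde{\bm{\pi}}$.

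The main obstacle is the rigorous claim that the lumped process is Markov with generator $\widetilde{\mathbf{Q}}$, rather than just that the aggregation formula simplifies. I would first rely on the cited Kemeny–Snell theorem and use the identity $\mathbf{Q}\mathbf{V} = \mathbf{V}\widetilde{\mathbf{Q}}$ as the concrete bridge. I would also take care with the diagonal entries and with infinite state spaces, where uniqueness of $\widetilde{\bm{\pi}}$ relies on the ergodicity inherited from $X(t)$.
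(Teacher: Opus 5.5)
Your proposal is correct, and its core is the paper's own argument: the paper only observes that, under ordinary lumpability, the inner sum $\sum_{s' \in C'} q(s, s')$ in the general aggregation equation is constant over $s \in C$, so the weights $\pi(s)$ cancel, and it relies on Kemeny and Snell for the Markovianity of the lumped process. Your identity $\mathbf{Q}\mathbf{V} = \mathbf{V}\widetilde{\mathbf{Q}}$ soundly and self-containedly justifies both the generator and $\widetilde{\bm{\pi}} = \bm{\pi}\mathbf{V}$. One caveat is that projecting paths only gives irreducibility of $\widetilde{X}(t)$, and irreducibility implies positive recurrence only for finite state spaces. The equilibrium claim also follows more directly: $\mathit{Prob}(\widetilde{X}(t) = C \mid \widetilde{X}(0) = C_{0}) = \sum_{s \in C} \mathit{Prob}(X(t) = s \mid X(0) = s_{0})$ for any $s_{0} \in C_{0}$, so letting $t \rightarrow \infty$ yields $\widetilde{\pi}(C) = \sum_{s \in C} \pi(s)$.
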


%
\subsection{Exact Lumpability}
\label{subsec:exact_lump}
%

Exact lumpability, introduced in~\cite{schweitzer:aggregations} and further investigated
in~\cite{buchholz:lumpability}, yields an aggregation in which not only the probability of any macro-state
is the sum of the probabilities of the states it contains, but all states within any macro-state are
equiprobable as well.

	\begin{defi}[Exact lumpability]\label{def:exact_lump}
Let $X(t)$ be a CTMC with state space $\mathcal{S}$ and infinitesimal generator $\mathbf{Q}$ and $\sim$ be
an equivalence relation over $\mathcal{S}$. We say that $X(t)$ is \emph{exactly lumpable} with respect to
$\sim$ -- or equivalently that $\sim$ is an \emph{exact lumpability} for $X(t)$ -- iff $\sim$ induces a
partition of $\mathcal{S}$ -- called an \emph{exact lumping} -- such that for all equivalence classes $C,
C' \in \mathcal{S} / {\sim}$ and for all states $s_{1}, s_{2} \in C$:
\[
\sum_{s' \in C'} q(s', s_{1}) \: = \: \sum_{s' \in C'} q(s', s_{2})
\]
	\end{defi}

An equivalence relation is an exact lumpability when it induces a partition of the state space of a CTMC
such that, for any two states within an equivalence class, the cumulative transition rates into such states
from any class are the same. Note the symmetry with Definition~\ref{def:ordinary_lump}: the outgoing
condition of ordinary lumpability constrains the rates from each state of a class, whereas the
\emph{incoming condition} of exact lumpability constrains the rates \emph{into} each state of a class.

Two further differences are worth pointing out. First, in Definition~\ref{def:exact_lump} the condition is
required to hold also for $C = C'$, i.e., within each class; as we shall see below in
Proposition~\ref{prop:exact_equiprobable}, it is precisely this within-class requirement that guarantees the
equiprobability, at steady state, of the states belonging to the same class. Second, the two notions behave
differently on the trivial partitions: the identity relation is both an ordinary and an exact lumpability
for every chain, whereas the relation having only one equivalence class, which is always an ordinary
lumpability, is not an exact one in general, because it requires all the column sums of $\mathbf{Q}$ to
coincide due to the within-class requirement.

The next proposition establishes a sufficient condition for exact
lumpability~\cite{franceschinis:lumping.peva}.

	\begin{prop}\label{prop:exact_sufficient}
Let $X(t) $ be a CTMC with state space $\mathcal{S}$ and infinitesimal generator $\mathbf{Q}$ and $\sim$ be
an equivalence relation over $\mathcal{S}$. Then $X(t)$ is exactly lumpable with respect to $\sim$
\linebreak if $\sim$ induces a partition of $\mathcal{S}$ such that:
		\begin{itemize}
\item $\sum_{s' \in C'} q(s', s_{1}) = \sum_{s' \in C'} q(s', s_{2})$ for all $C, C' \in \mathcal{S} /
{\sim}$ with $C \neq C'$ and for all $s_{1}, s_{2} \in C$.

\item $\sum_{s' \in C \setminus \{ s_{1} \}} q(s', s_{1}) = \sum_{s' \in C \setminus \{ s_{2} \}} q(s',
s_{2})$ for all $C \in \mathcal{S} / {\sim}$ and $s_{1}, s_{2} \in C$.

\item $q(s_{1}, s_{1}) = q(s_{2}, s_{2})$ for all $C \in \mathcal{S} / {\sim}$ and $s_{1}, s_{2} \in C$.
		\end{itemize}
	\end{prop}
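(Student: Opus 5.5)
The plan is to check the condition of Definition~\ref{def:exact_lump} directly, for all classes $C, C' \in \mathcal{S} / {\sim}$ and all $s_{1}, s_{2} \in C$, by splitting into the cases $C' \neq C$ and $C' = C$. Nothing beyond the definitions is needed. The work is to show that the three hypotheses together rebuild the full incoming sum over a class, including the diagonal term.

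\emph{Case $C' \neq C$.} Here the required equality $\sum_{s' \in C'} q(s', s_{1}) = \sum_{s' \in C'} q(s', s_{2})$ is exactly the first hypothesis. Nothing further needs to be done.

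\emph{Case $C' = C$.} Since $s_{1} \in C$, the sum over $C$ of the rates into $s_{1}$ contains the diagonal entry $q(s_{1}, s_{1})$. I therefore split off the term $s' = s_{1}$:
\[
\sum_{s' \in C} q(s', s_{1}) \: = \: \sum_{s' \in C \setminus \{ s_{1} \}} q(s', s_{1}) \: + \: q(s_{1}, s_{1})
\]
The same decomposition holds for $s_{2}$. The off-diagonal parts agree by the second hypothesis, and the diagonal parts agree by the third. Adding the two equalities gives $\sum_{s' \in C} q(s', s_{1}) = \sum_{s' \in C} q(s', s_{2})$.

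The only point needing care, and so the main obstacle, is this within-class case. In Definition~\ref{def:exact_lump} the sum for $C' = C$ includes the diagonal entries of $\mathbf{Q}$, and these are negative exit rates rather than genuine transition rates. The proof must isolate the diagonal term, as done above, so that the second and third hypotheses each control their own part. Once the two cases are combined, the partition induced by $\sim$ satisfies Definition~\ref{def:exact_lump}, so $X(t)$ is exactly lumpable with respect to $\sim$.
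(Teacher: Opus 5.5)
Your proof is correct: for $C' \neq C$ the required equality is exactly the first hypothesis, and for $C' = C$ splitting off the diagonal term $q(s_{1}, s_{1})$ lets the second hypothesis handle the off-diagonal part and the third handle the diagonal part. The paper states this result without proof, citing the literature, but your decomposition is the same one it uses in the proof of Theorem~\ref{thm:weak_exact_bisim_induces}, where $q(C, P_{1})$ is written as $q(C \setminus \{ P_{1} \}, P_{1}) + q(P_{1}, P_{1})$.
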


Since $q(s, s) = - \sum_{s' \neq s} q(s, s')$, the last condition of Proposition~\ref{prop:exact_sufficient}
amounts to requiring that all the states in the same class have the same total exit rate. It is worth
remarking that this requirement is \emph{not} implied by Definition~\ref{def:exact_lump}: taking $C = C'$
there, one obtains that, within each class, the difference between the cumulative rate entering any state
$s$ from its own class $C$ and the total exit rate of that state, i.e., $\sum_{s' \in C \setminus \{ s \}}
q(s', s) - \sum_{s' \neq s} q(s, s')$, is constant. Hence, states of the same class may well have different
exit rates, provided that the difference is compensated for by the flows they receive from their classmates.

The next proposition leads to equiprobability. An \emph{invariant measure} of a CTMC $X(t)$ is a vector of
positive real numbers $\bm{\mu}$ satisfying the system of global balance equations $\bm{\mu} \cdot
\mathbf{Q} = \mathbf{0}$. If $X(t)$ is irreducible and $\bm{\mu_1}$ and $\bm{\mu_2}$ are two invariant
measures of $X(t)$, then there exists a constant $k \in \mathbb{R}_{> 0}$ such that $\bm{\mu_1} = k \cdot
\bm{\mu_2}$. If the CTMC is ergodic, then there exists a unique invariant measure whose elements sum to $1$,
which is $\bm{\pi}$.

	\begin{prop}\label{prop:exact_equiprobable}
Let $X(t)$ be an ergodic CTMC with state space $\mathcal{S}$ and $\sim$ be an equivalence relation over
$\mathcal{S}$. If $X(t)$ is exactly lumpable with respect to $\sim$, then for all states $s_{1}, s_{2} \in
\mathcal{S}$ such that $s_{1} \sim s_{2}$ it holds that $\mu(s_{1}) = \mu(s_{2})$ for all invariant measures
$\bm{\mu}$ of $X(t)$.
	\end{prop}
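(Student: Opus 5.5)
Since every invariant measure of an irreducible chain is a positive multiple of $\bm{\pi}$, the plan is to build one invariant measure that is constant on classes and then invoke uniqueness up to scaling. I will build it from the equilibrium distribution of a suitable aggregate chain.

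Define for each pair of classes $C, C' \in \mathcal{S}/{\sim}$ the quantity
\[
\widehat{q}(C', C) \: = \: \sum_{s' \in C'} q(s', s)
\]
for an arbitrary $s \in C$. This is well defined, including for $C' = C$, precisely by Definition~\ref{def:exact_lump}. Let $\widehat{\mathbf{Q}}$ be the matrix with these entries.

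The key computation is that $\widehat{\mathbf{Q}}$ is, up to transposition and weighting by class sizes, a generator. Fix $C'$ and sum $\widehat{q}(C', C)\cdot |C|$ over all $C$. This equals $\sum_{s' \in C'} \sum_{s \in \mathcal{S}} q(s', s) = 0$, because rows of $\mathbf{Q}$ sum to zero. This gives a linear relation among the aggregated entries.

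Now look for a vector $\bm{\mu}$ of the form $\mu(s) = x_C$ for $s \in C$. The global balance equation at a state $s \in C$ reads
\[
\sum_{C'} x_{C'} \, \widehat{q}(C', C) \: = \: 0 .
\]
It is independent of the choice of $s \in C$, so the problem reduces to the finite-dimensional system $\mathbf{x} \cdot \widehat{\mathbf{Q}} = \mathbf{0}$ on classes.

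Set $y_C = x_C \cdot |C|$ and $r(C', C) = \widehat{q}(C', C)\cdot |C| / |C'|$. Then the system becomes $\mathbf{y} \cdot \mathbf{R} = \mathbf{0}$. By the computation above, $\mathbf{R}$ has zero row sums and nonnegative off-diagonal entries, so it is a CTMC generator on the class set. It is irreducible, since any path of $X(t)$ projects to a path between classes. Hence it has a positive stationary vector $\mathbf{y}$, which yields a positive $\bm{\mu}$ constant on classes with $\bm{\mu}\cdot\mathbf{Q}=\mathbf{0}$.

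By uniqueness of invariant measures up to scaling for irreducible chains, every invariant measure is a multiple of this $\bm{\mu}$ and is therefore constant on classes. This gives the claim.

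The main obstacle is the existence of a positive stationary vector for the aggregate generator $\mathbf{R}$. For finitely many classes this is immediate from irreducibility. If the class space is infinite, or the classes themselves are infinite so that $|C|$ is not finite, the weighting trick breaks down.

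In that case I would instead argue directly from $\bm{\pi}$. Using exact lumpability and ergodicity, consider $\bm{\pi}\circ\sigma$ for a class-preserving permutation $\sigma$, or average $\bm{\pi}$ over each class. Then show the class-averaged vector is again invariant, using the fact that incoming rates depend only on the class. Uniqueness would then force $\bm{\pi}$ to equal its class average.

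I expect that averaging step to need care, and I would restrict first to finite state spaces, which is the PEPA setting of interest.
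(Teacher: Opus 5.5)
The paper gives no proof of this proposition; it recalls it from the lumpability literature, where it was originally established for finite chains. For finite $\mathcal{S}$ your argument is correct and complete, and it is essentially the classical one:
\begin{itemize}
\item $\widehat{\mathbf{Q}}$ is well defined precisely because Definition~\ref{def:exact_lump} also covers $C' = C$.
\item The balance equations of a class-constant vector reduce to $\mathbf{x} \cdot \widehat{\mathbf{Q}} = \mathbf{0}$.
\item A single positive rate $q(s', s)$ with $s' \in C' \neq C \ni s$ gives $r(C', C) > 0$, because $\widehat{q}(C', C)$ may be evaluated at $s$.
\item Uniqueness of invariant measures up to scaling concludes.
\end{itemize}
Your $\mathbf{R}$ is exactly the aggregated generator $\widetilde{\mathbf{Q}}$ of Proposition~\ref{prop:aggr_proc_exact_lump}. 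The normalised $\mathbf{y}$ is its equilibrium distribution $\widetilde{\pi}(C) = |C| \cdot \pi(s)$, so the same computation proves that proposition as well.

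You should drop the fallback for infinite state spaces. Averaging does not yield an invariant vector. For the class averages $\bar{\pi}_{C'} = \frac{1}{|C'|} \sum_{s' \in C'} \pi(s')$ you would need $\sum_{C'} \bar{\pi}_{C'} \, \widehat{q}(C', C) = 0$. But $\bm{\pi} \cdot \mathbf{Q} = \mathbf{0}$ only gives $\sum_{C'} \sum_{s' \in C'} \pi(s') \, q(s', s) = 0$. Exact lumpability controls block-column sums with unit weights, not with weights $\pi(s')$, so this step presupposes the conclusion. Similarly, $\bm{\pi} \circ \sigma$ is not invariant in general, since exact lumpability does not make $\sigma$ a symmetry of $\mathbf{Q}$.

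More importantly, when some class is infinite no argument can work, because the statement is false. Take $\mathcal{S} = \mathbb{N}$ with $q(n, n + 1) = 1$ for $n \ge 0$ and $q(n, n - 1) = n + 1$ for $n \ge 1$. This birth--death chain is ergodic with $\pi(n) = \pi(0) / (n + 1)!$. Yet every column of $\mathbf{Q}$ sums to $1$: the sum is $1 + (n + 2) - (n + 2)$ for $n \ge 1$ and $2 - 1$ for $n = 0$. So the relation with the single class $\mathbb{N}$ is an exact lumpability, while $\bm{\pi}$ is not constant.

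This is exactly where your identity $\sum_{C} |C| \cdot \widehat{q}(C', C) = 0$ fails. In a finite chain, summing all entries of $\mathbf{Q}$ by rows and by columns forces the common column sum to be $0$; that rearrangement is not possible here. So finiteness is a genuinely needed hypothesis, not a convenience, and it does hold for the finite PEPA components to which the paper applies the result. With infinitely many finite classes your $\mathbf{R}$ still makes sense, but you would additionally have to show that it admits an invariant measure.
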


In the case of exact lumpability, if the CTMC is ergodic then its steady-state distribution is
equidistributed among the states of every class of the partition. Therefore, with the knowledge of the
lumped chain generator, one may compute the steady-state distribution and deduce (by local equidistribution)
the steady-state distribution of the original chain. \linebreak It must be emphasized that this last step is
impossible with ordinary lumpability, because it does not ensure equiprobability of the states in a
macro-state.

	\begin{prop}[Aggregate process for exact lumpability]\label{prop:aggr_proc_exact_lump}
Let $X(t)$ be a CTMC with state space $\mathcal{S}$, infinitesimal generator $\mathbf{Q}$, and equilibrium
distribution $\bm{\pi}$. If $\sim$ is an exact lumpability for $X(t)$, then the aggregate process
$\widetilde{X}(t)$ with state space $\mathcal{S} / {\sim}$ has infinitesimal generator
$\widetilde{\mathbf{Q}}$ defined by letting for all equivalence classes $C, C' \in \mathcal{S} / {\sim}$
with $C \neq C'$ and for an arbitrary state $s' \in C'$:
\[
\widetilde{q}(C, C') \: = \: \frac{|C'|}{|C|} \cdot \sum_{s \in C} q(s, s')
\]
Moreover, the equilibrium distribution $\widetilde{\bm{\pi}}$ of $\widetilde{X}(t)$ exists and is such that
$\widetilde{\pi}(C) = |C| \cdot \pi(s)$ for all equivalence classes $C \in \mathcal{S} / {\sim}$ and for an
arbitrary state $s \in C$.
	\end{prop}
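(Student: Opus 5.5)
The proof rests on Proposition~\ref{prop:exact_equiprobable}: within each class all states are equiprobable. Everything else is substitution into the general aggregation equation. We may assume the chain is ergodic, as the statement implicitly does by speaking of $\bm{\pi}$. Fix classes $C \neq C'$ and write $\pi_C$ for the common value $\pi(s)$, $s \in C$, and similarly $\pi_{C'}$.

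\textbf{Step 1 (aggregated rates).} Substitute into the general formula for $q_{\mathrm{ag}}(C, C')$, which is exact for the stationary aggregate:
\[
q_{\mathrm{ag}}(C, C') \: = \: \frac{\pi_C \sum_{s \in C} \sum_{s' \in C'} q(s, s')}{|C| \cdot \pi_C} \: = \: \frac{1}{|C|} \sum_{s' \in C'} \sum_{s \in C} q(s, s').
\]
By the incoming condition of Definition~\ref{def:exact_lump}, applied to the pair of classes $(C', C)$, the inner sum $\sum_{s \in C} q(s, s')$ takes the same value for every $s' \in C'$. The outer sum over $C'$ therefore equals $|C'|$ times the value at an arbitrary $s' \in C'$. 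This gives
\[
\widetilde{q}(C, C') = \frac{|C'|}{|C|} \sum_{s \in C} q(s, s').
\]
In particular, the value does not depend on which $s' \in C'$ is chosen.

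\textbf{Step 2 (aggregate distribution).} Let $\widetilde{\pi}(C) = |C| \cdot \pi_C = \sum_{s \in C} \pi(s)$. These values are positive and sum to $1$, so it remains to check that $\widetilde{\bm{\pi}}$ satisfies $\widetilde{\bm{\pi}} \cdot \widetilde{\mathbf{Q}} = \mathbf{0}$. For a class $C'$, compute the total inflow from other classes:
\[
\sum_{C \neq C'} |C| \, \pi_C \cdot \frac{|C'|}{|C|} \sum_{s \in C} q(s, s') \: = \: \sum_{C \neq C'} \sum_{s \in C} \sum_{s' \in C'} \pi(s) \, q(s, s').
\]
This is the total probability flow from $\mathcal{S} \setminus C'$ into $C'$ in the original chain. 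Summing the original global balance equations $\bm{\pi} \cdot \mathbf{Q} = \mathbf{0}$ over $s' \in C'$, the within-class terms cancel. Hence this inflow equals the outflow $\sum_{s \in C'} \pi(s) \sum_{s'' \notin C'} q(s, s'')$. By the definition of $q_{\mathrm{ag}}$ (its numerator), that outflow equals $\widetilde{\pi}(C') \sum_{C'' \neq C'} \widetilde{q}(C', C'')$. So the balance equation for $C'$ holds. Irreducibility of the aggregate follows from that of $X(t)$, which gives uniqueness of $\widetilde{\bm{\pi}}$.

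\textbf{Main obstacle.} The delicate point is justifying that the aggregate is a CTMC with these rates. Unlike the ordinary case, exact lumpability does not give a Markovian aggregate for every initial distribution. I would handle this as Buchholz does: argue that, starting from a distribution that is uniform within each class, the uniform-within-class form is preserved by the transient equation. To see this, note that the incoming condition makes $\sum_{s} p(s) q(s, s_1)$ equal for all $s_1$ in a class whenever $p$ is class-uniform; this includes the $C = C'$ within-class term. Under such initial distributions the aggregate evolves with the rates computed above, and this family includes $\bm{\pi}$ itself. I would state this restriction explicitly, in the same spirit as the phrase ``for an arbitrary state'' in the statement.
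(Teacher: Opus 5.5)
Your proposal is correct. It follows the route the paper indicates for this recalled result, for which it gives no separate proof: insert the within-class equiprobability of Proposition~\ref{prop:exact_equiprobable} into the general aggregation equation for $q_{\mathrm{ag}}(C, C')$, then use the incoming condition of Definition~\ref{def:exact_lump} to extract the factor $|C'|$. Your explicit check that $\widetilde{\bm{\pi}} \cdot \widetilde{\mathbf{Q}} = \mathbf{0}$, and your caveat that the aggregate is guaranteed Markovian only for class-uniform initial distributions (which include $\bm{\pi}$), add care that the paper's statement glosses over.
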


%
\subsection{Strict Lumpability}
\label{subsec:strict_lump}
%

Ordinary and exact lumpabilities are, in general, incomparable notions: neither condition implies the other,
as they constrain transition flows in opposite directions. It is then natural to consider their
intersection, which yields the notion of strict lumpability introduced in~\cite{sumita.reiders:lumpability}
and further examined in~\cite{buchholz:lumpability}.

	\begin{defi}[Strict lumpability]\label{def:strict_lump}
Let $X(t)$ be a CTMC with state space $\mathcal{S}$ and $\sim$ be an equivalence relation over
$\mathcal{S}$. We say that $X(t)$ is \emph{strictly lumpable} with respect to $\sim$ -- or equivalently that
$\sim$ is a \emph{strict lumpability} for $X(t)$ -- iff it is both ordinarily and exactly lumpable with
respect to $\sim$.
	\end{defi}

Notice that the identity relation is a strict lumpability for every CTMC, so the notion is never vacuous.
Non-trivial strict lumpabilities are however quite demanding, as they constrain the flows in both
directions.

	\begin{exa}\label{ex:ord_exact_strict_lump_diff}
Consider the CTMC $X(t)$ with state space $\mathcal{S} = \{ s_{0}, s_{1}, s'_{1}, s''_{1}, s_{2},
\bar{s}_{2} \}$ and rates $q(s_{0}, s_{1}) = r$, $q(s_{0}, s_{2}) = r$, $q(s_{1}, s'_{1}) = r'$, $q(s_{1},
s''_{1}) = r''$, and $q(s_{2}, \bar{s}_{2}) = r' + r''$. Then:

		\begin{itemize}
\item The partition $\{ \{ s_{0} \}, \{ s_{1}, s_{2} \}, \{ s'_{1}, s''_{1}, \bar{s}_{2} \} \}$ is the
coarsest non-trivial ordinary lumping.

\item The partition $\{ \{ s_{0} \}, \{ s_{1}, s_{2} \}, \{ s'_{1} \}, \{ s''_{1} \}, \{ \bar{s}_{2} \} \}$
is the only non-trivial exact lumping when $r' \neq r''$, while $\{ \{ s_{0} \}, \{ s_{1}, s_{2} \}, \{
s'_{1}, s''_{1} \}, \{ \bar{s}_{2} \} \}$ is the coarsest one when $r' = r''$.

\item None of them is a strict lumping, only the partition induced by the identity relation is.
		\end{itemize}
	\end{exa}

The intrinsically complex nature of strict lumpability comes with pros and cons. On the one hand, proving an
equivalence relation to be a non-trivial strict lumpability is a complex task. On the other hand, when such
a relation can be defined, it is possible to exploit nice properties of the resulting aggregate process.

	\begin{prop}\label{prop:strict_flow}
Let $X(t)$ be a CTMC with state space $\mathcal{S}$ and infinitesimal generator $\mathbf{Q}$ and let $\sim$
be a strict lumpability for $X(t)$. Then for all equivalence classes $C, C' \in \mathcal{S} / {\sim}$ and
for all states $s \in C$ and $s' \in C'$ it holds that:
\[
|C| \cdot \sum_{s'' \in C'} q(s, s'') \: = \: |C'| \cdot \sum_{s'' \in C} q(s'', s')
\]
	\end{prop}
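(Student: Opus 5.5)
The idea is to count the total flow from $C$ to $C'$ in two ways: once grouping by source state, which uses the outgoing condition, and once grouping by target state, which uses the incoming condition. Let
\[
F(C, C') \: = \: \sum_{s \in C} \sum_{s'' \in C'} q(s, s'') ,
\]
which is the sum of all entries of $\mathbf{Q}$ with row in $C$ and column in $C'$. Since $\sim$ is a strict lumpability, Definition~\ref{def:strict_lump} says it is both an ordinary lumpability (Definition~\ref{def:ordinary_lump}) and an exact lumpability (Definition~\ref{def:exact_lump}).

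\emph{Step 1 (rows).} I would first show that the inner sum $\sum_{s'' \in C'} q(s, s'')$ takes the same value for every $s \in C$. For $C \neq C'$ this is exactly the outgoing condition of Definition~\ref{def:ordinary_lump}.

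For $C = C'$ that definition gives nothing directly, so I would use the fact that every row of $\mathbf{Q}$ sums to $0$. This gives
\[
\sum_{s'' \in C} q(s, s'') \: = \: - \sum_{D \neq C} \sum_{s'' \in D} q(s, s'') ,
\]
where $D$ ranges over the classes in $\mathcal{S} / {\sim}$. Each summand on the right is independent of $s \in C$ by the outgoing condition, so the left-hand side is independent of $s$ as well. Summing over $s \in C$ then gives $F(C, C') = |C| \cdot \sum_{s'' \in C'} q(s, s'')$ for an arbitrary $s \in C$.

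\emph{Step 2 (columns).} Exchanging the order of summation gives $F(C, C') = \sum_{s' \in C'} \sum_{s'' \in C} q(s'', s')$. By Definition~\ref{def:exact_lump}, applied with the roles of the classes taken so that the target class is $C'$ and the source class is $C$, the inner sum is the same for every $s' \in C'$. That definition explicitly includes the case $C = C'$, so no extra argument is needed here. Hence $F(C, C') = |C'| \cdot \sum_{s'' \in C} q(s'', s')$ for an arbitrary $s' \in C'$.

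Equating the two expressions for $F(C, C')$ yields the claimed identity. For infinite state spaces the same argument goes through provided the classes are finite and the sums converge absolutely, which is implicit in the statement since $|C|$ appears there.

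The only delicate point is the diagonal case $C = C'$ in Step 1, which falls outside the scope of the outgoing condition. It is settled by the zero-row-sum property of $\mathbf{Q}$ as shown above.
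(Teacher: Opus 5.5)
Your proof is correct and follows the same double-counting argument the paper sketches right after the statement: the total rate from $C$ to $C'$ is computed once by rows, using the outgoing condition, and once by columns, using the incoming condition. Your extra step for $C = C'$ uses the zero row sums of $\mathbf{Q}$ to show that the diagonal-block row sums are constant, which fills in a detail the paper's one-line justification leaves implicit, since Definition~\ref{def:ordinary_lump} only covers $C \neq C'$.
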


The above equality amounts to computing the same number, which is the total transition rate from $C$ to
$C'$, in two different ways: summing the (common, by ordinary lumpability) outgoing rate over the $|C|$
states of $C$ or summing the (common, by exact lumpability) incoming rate over the $|C'|$ states of $C'$.

	\begin{prop}[Aggregate process for strict lumpability]\label{prop:aggr_proc_strict_lump}
Let $X(t)$ be a CTMC with state space $\mathcal{S}$, infinitesimal generator $\mathbf{Q}$, and equilibrium
distribution $\bm{\pi}$. If $\sim$ is a strict lumpability for $X(t)$, then the aggregate process
$\widetilde{X}(t)$ with state space $\mathcal{S} / {\sim}$ has infinitesimal generator
$\widetilde{\mathbf{Q}}$ defined by letting for all equivalence classes $C, C' \in \mathcal{S} / {\sim}$
with $C \neq C'$ and for an arbitrary state $s \in C$:
\[
\widetilde{q}(C, C') \: = \: \sum_{s' \in C'} q(s, s')
\]
Moreover, it holds that $\pi(s_{1}) = \pi(s_{2})$ for all states $s_{1}, s_{2} \in \mathcal{S}$ such that
$s_{1} \sim s_{2}$, hence the equilibrium distribution $\widetilde{\bm{\pi}}$ of $\widetilde{X}(t)$, which
exists, is such that $\widetilde{\pi}(C) = |C| \cdot \pi(s)$ for all equivalence classes $C \in \mathcal{S}
/ {\sim}$ and for an arbitrary state $s \in C$.
	\end{prop}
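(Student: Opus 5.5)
Since $\sim$ is a strict lumpability, it is by Definition~\ref{def:strict_lump} both an ordinary and an exact lumpability for $X(t)$, so the statement splits into three parts. The part about the infinitesimal generator follows immediately from Proposition~\ref{prop:aggr_proc_ordinary_lump}. That result already yields $\widetilde{q}(C, C') = \sum_{s' \in C'} q(s, s')$ for $C \neq C'$ and any representative $s \in C$, independently of the chosen $s$ thanks to the outgoing condition. The same proposition also gives existence of $\widetilde{\bm{\pi}}$ together with $\widetilde{\pi}(C) = \sum_{s \in C} \pi(s)$.

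The equiprobability claim comes from exact lumpability. Since $\bm{\pi}$ is an invariant measure, Proposition~\ref{prop:exact_equiprobable} gives $\pi(s_1) = \pi(s_2)$ whenever $s_1 \sim s_2$. That proposition assumes ergodicity, which is implicit in the hypothesis that the equilibrium distribution $\bm{\pi}$ exists, as in the rest of the section. Combining this with the previous paragraph, $\widetilde{\pi}(C) = \sum_{s \in C} \pi(s) = |C| \cdot \pi(s)$ for any $s \in C$. This agrees with Proposition~\ref{prop:aggr_proc_exact_lump}.

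As a consistency check, I would verify that the two generator formulas coincide under strictness. The formula from Proposition~\ref{prop:aggr_proc_exact_lump} is $\frac{|C'|}{|C|} \sum_{s \in C} q(s, s')$. The formula here is $\sum_{s'' \in C'} q(s, s'')$. Proposition~\ref{prop:strict_flow}, applied with the roles of the classes chosen as there, equates $|C|$ times the outgoing sum with $|C'|$ times the incoming sum, and this gives exactly the needed identity.

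The only delicate point I expect is the ergodicity assumption required to invoke Proposition~\ref{prop:exact_equiprobable}. I would make it explicit: I take existence of $\bm{\pi}$ to mean that the chain is ergodic, hence irreducible. Then all invariant measures are proportional to one another, and equiprobability holds.
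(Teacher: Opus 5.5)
Your proposal is correct and follows essentially the route the paper intends. The paper states this result from the literature without a separate proof, and your argument runs as follows: since a strict lumpability is both ordinary and exact, the generator and $\widetilde{\pi}(C) = \sum_{s \in C} \pi(s)$ come from Proposition~\ref{prop:aggr_proc_ordinary_lump}, and equiprobability within classes comes from Proposition~\ref{prop:exact_equiprobable}. You rightly read its ergodicity hypothesis off the paper's convention that $\bm{\pi}$ is a positive, initial-state-independent equilibrium distribution. Your consistency check via Proposition~\ref{prop:strict_flow}, showing that the ordinary and exact aggregate generators coincide, is also correct.
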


%
\subsection{Time Reversibility}
\label{subsec:time_rev}
%

The analysis of an ergodic CTMC with an equilibrium distribution is significantly simplified if the CTMC
satisfies the property of \emph{time reversibility}~\cite{kelly:reversibility.networks}. Indeed, the
equilibrium distribution of a reversible chain can be computed from the \emph{detailed (or partial) balance
equations} recalled below, which are considerably easier to solve than the global balance equations.
Classical models such as birth-death processes, and hence many queueing systems, fall within this category
\cite{kelly:reversibility.networks}.

Given a stationary CTMC $X(t)$, its \emph{reversed process} is defined as $X^{\mathrm{R}}(t) = X(t' - t)$
for a fixed $t' \in \mathbb{R}_{\ge 0}$: intuitively, $X^{\mathrm{R}}(t)$ is the process obtained by
observing the evolution of $X(t)$ backwards in time (by stationarity, the choice of the time origin $t'$ is
unimportant). \linebreak It can be demonstrated that $X^{\mathrm{R}}(t)$ is a stationary CTMC as well.
Notice that the reversed process exists for any stationary CTMC; its transition rates are determined by the
transition rates and the equilibrium distribution of the forward process, as stated by the following
proposition~\cite{kelly:reversibility.networks}.

	\begin{prop}\label{prop:reversibility}
Let $X(t)$ be a stationary CTMC with state space $\mathcal{S}$, infinitesimal generator $\mathbf{Q}$, and
equilibrium distribution $\bm{\pi}$. For all states $s, s' \in \mathcal{S}$ with $s \neq s'$, the transition
rates of $X^{\mathrm{R}}(t)$, whose infinitesimal generator is denoted by $\mathbf{Q}^{\mathrm{R}}$, are
given by:
\[
q^{\mathrm{R}}(s', s) \: = \: \frac{\pi(s)}{\pi(s')} \cdot q(s, s')
\]
with $X(t)$ and $X^{\mathrm{R}}(t)$ sharing the same equilibrium distribution $\bm{\pi}$.
	\end{prop}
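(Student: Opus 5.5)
The plan is to work directly with the defining limit of the rates together with stationarity. For the reversed process $X^{\mathrm{R}}(t) = X(t' - t)$, the conditional probabilities are time-homogeneous. Bayes' rule gives, for $s' \neq s$ and $h > 0$:
\[
\mathit{Prob}(X^{\mathrm{R}}(t+h) = s \mid X^{\mathrm{R}}(t) = s') \: = \: \frac{\mathit{Prob}(X(u) = s) \cdot \mathit{Prob}(X(u+h) = s' \mid X(u) = s)}{\mathit{Prob}(X(u+h) = s')}
\]
where $u = t' - t - h$. By stationarity, $\mathit{Prob}(X(u) = s) = \pi(s)$ and $\mathit{Prob}(X(u+h) = s') = \pi(s')$ for every $u$. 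So the right-hand side equals $\frac{\pi(s)}{\pi(s')} \cdot P_h(s, s')$, where $P_h$ is the transition function of $X(t)$. This shows at once that the reversed conditional probabilities depend only on $h$, so $X^{\mathrm{R}}(t)$ is time-homogeneous.

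The key steps are as follows. First, I will establish that $X^{\mathrm{R}}(t)$ is Markov. Reversing a Markov process preserves the Markov property, because past and future are conditionally independent given the present, and this relation is symmetric. This is the standard argument, and I will cite it. Second, I will divide the displayed identity by $h$ and let $h \to 0$. Since $P_h(s, s')/h \to q(s, s')$ by the definition of rate recalled in Section~\ref{subsec:ctmc}, this yields $q^{\mathrm{R}}(s', s) = \frac{\pi(s)}{\pi(s')} \cdot q(s, s')$. Here I use that $\pi(s') > 0$, which holds by the positivity of the equilibrium distribution.

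Third, I will show that $\bm{\pi}$ is also the equilibrium distribution of $X^{\mathrm{R}}(t)$. The quickest route is that $\mathit{Prob}(X^{\mathrm{R}}(t) = s) = \mathit{Prob}(X(t'-t) = s) = \pi(s)$ for all $t$, so $\bm{\pi}$ is stationary for the reversed chain. As an algebraic check, I will verify the global balance equations. For each $s$,
\[
\sum_{s' \neq s} \pi(s') \, q^{\mathrm{R}}(s', s) \: = \: \pi(s) \sum_{s' \neq s} q(s, s') \: = \: -\pi(s)\, q(s,s)
\]
and I must also show that this equals $-\pi(s)\, q^{\mathrm{R}}(s,s)$. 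That is, I need the exit rates to agree: $\sum_{s' \neq s} q^{\mathrm{R}}(s, s') = \sum_{s' \neq s} \frac{\pi(s')}{\pi(s)} q(s', s)$. This equals $\sum_{s' \neq s} q(s, s')$ exactly by the global balance equations $\bm{\pi} \cdot \mathbf{Q} = \mathbf{0}$ of the forward chain. Hence $\mathbf{Q}^{\mathrm{R}}$ is a proper generator (its rows sum to zero) and $\bm{\pi} \cdot \mathbf{Q}^{\mathrm{R}} = \mathbf{0}$. Uniqueness under ergodicity, which is inherited because reversal preserves irreducibility, then identifies $\bm{\pi}$ as the equilibrium distribution.

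I expect the main obstacle to be the limit interchange and regularity in the second step when $\mathcal{S}$ is infinite. The pointwise limit $P_h(s,s')/h \to q(s,s')$ is fine for a standard (non-explosive) chain. The delicate part is justifying that the diagonal entry of the reversed generator is the negated sum of its off-diagonal entries. I plan to avoid this issue by deriving the diagonal from global balance as above, instead of taking a limit of $(1 - P^{\mathrm{R}}_h(s,s))/h$.
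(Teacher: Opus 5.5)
Your argument is correct. It is essentially the standard proof in Kelly's book, which the paper cites for this proposition instead of proving it: the Bayes identity $P^{\mathrm{R}}_h(s',s) = \frac{\pi(s)}{\pi(s')} P_h(s,s')$ under stationarity, then division by $h$, then global balance to handle the diagonal. One shortcut: the same identity with $s = s'$ gives $P^{\mathrm{R}}_h(s,s) = P_h(s,s)$, hence $q^{\mathrm{R}}(s,s) = q(s,s)$ directly, and global balance then shows that the rows of $\mathbf{Q}^{\mathrm{R}}$ sum to zero.
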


A CTMC $X(t)$ is \emph{time reversible} iff it is stochastically identical to $X^{\mathrm{R}}(t)$, meaning
that $(X(t_{i}))_{1 \le i \le n}$ has the same joint distribution as $(X(t' - t_{i}))_{1 \le i \le n}$ for
all $n \in \mathbb{N}_{\ge 1}$ and $t_{1} < \dots < t_{n}, t' \in \mathbb{R}_{\ge 0}$. In other words, a
CTMC is time reversible when it coincides with its own reversed process: an observer of the process at
equilibrium cannot tell whether time is flowing forwards or backwards. While the reversed process always
exists, time reversibility is a distinguishing property which, for ergodic CTMCs, can be characterized in
terms of the equilibrium distribution $\bm{\pi}$ and the infinitesimal generator
$\mathbf{Q}$~\cite{kelly:reversibility.networks}.

	\begin{thm}[Detailed balance equations]\label{thm:detailed_balance}
A stationary CTMC $X(t)$ with state space $\mathcal{S}$ and infinitesimal generator $\mathbf{Q}$ is time
reversible iff there exists a vector $\bm{\pi}$ of positive real numbers summing to $1$ such that for all
states $s, s' \in \mathcal{S}$ with $s \neq s'$:
\[
\pi(s) \cdot q(s, s') \: = \: \pi(s') \cdot q(s', s)
\]
If such a vector exists, then it is the equilibrium distribution of $X(t)$.
	\end{thm}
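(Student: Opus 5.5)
The proof has two directions, and both rest on Proposition~\ref{prop:reversibility}, which gives the reversed rates $q^{\mathrm{R}}(s', s) = \frac{\pi(s)}{\pi(s')} \cdot q(s, s')$, where $\bm{\pi}$ is the equilibrium distribution shared by $X(t)$ and $X^{\mathrm{R}}(t)$. The key fact is that a time-homogeneous CTMC is determined in law by its generator and initial distribution. Hence the stationary process $X(t)$ is stochastically identical to $X^{\mathrm{R}}(t)$ iff $\mathbf{Q} = \mathbf{Q}^{\mathrm{R}}$. This works because both chains are started from the same stationary $\bm{\pi}$, so the finite-dimensional joint distributions match exactly when the off-diagonal generator entries match; the diagonals then agree automatically, since rows sum to zero.

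For the ``only if'' direction, I assume time reversibility, so $q(s', s) = q^{\mathrm{R}}(s', s)$ for all $s \neq s'$. Substituting the formula above gives $\pi(s') \cdot q(s', s) = \pi(s) \cdot q(s, s')$. So the equilibrium distribution $\bm{\pi}$ itself is the required positive vector summing to $1$.

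For the ``if'' direction, I start from a positive vector $\bm{\pi}$ summing to $1$ that satisfies the detailed balance equations, and proceed in three steps.
\begin{enumerate}
\item I show that $\bm{\pi}$ solves the global balance equations. For each $s'$, summing detailed balance over $s \neq s'$ gives $\sum_{s \neq s'} \pi(s) \cdot q(s, s') = \pi(s') \cdot \sum_{s \neq s'} q(s', s) = -\pi(s') \cdot q(s', s')$, which says exactly that $(\bm{\pi} \cdot \mathbf{Q})(s') = 0$.
\item I conclude that $\bm{\pi}$ is the equilibrium distribution, which is the final claim of the theorem. Under the standing ergodicity assumption, invariant measures are unique up to scaling, as recalled before Proposition~\ref{prop:exact_equiprobable}, and the normalisation fixes the scale.
\item Applying Proposition~\ref{prop:reversibility} with this $\bm{\pi}$ gives $q^{\mathrm{R}}(s', s) = \frac{\pi(s)}{\pi(s')} \cdot q(s, s') = q(s', s)$ by detailed balance. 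So $\mathbf{Q}^{\mathrm{R}} = \mathbf{Q}$, and by the key fact $X(t)$ is time reversible.
\end{enumerate}

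I expect the main obstacle to be identifying the abstract candidate $\bm{\pi}$ with the equilibrium distribution, which is needed in step~3 to invoke Proposition~\ref{prop:reversibility}. This requires uniqueness of the invariant probability measure, so I will state explicitly that I use the ergodicity (irreducibility) of the stationary chain assumed throughout this section. If the state space is infinite, the exchange of sums in step~1 needs care; I would handle it by noting that all terms are nonnegative and that the exit rates are finite.
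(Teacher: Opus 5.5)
Your proposal is correct and takes the same route the paper indicates right after the statement (the paper cites Kelly rather than giving a full proof). There, time reversibility is identified with $\mathbf{Q} = \mathbf{Q}^{\mathrm{R}}$, and by Proposition~\ref{prop:reversibility} this equality is exactly the detailed balance condition. Your extra work in the ``if'' direction fills in a point the paper's one-line justification leaves implicit: you derive global balance and then use uniqueness of the invariant measure under ergodicity, so that the candidate vector is the equilibrium distribution before Proposition~\ref{prop:reversibility} is applied.
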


The above proposition ties the reversibility of a CTMC to the values of its stationary distribution. Indeed,
it can be derived by imposing that the infinitesimal generators of $X(t)$ and $X^{\mathrm{R}}(t)$ coincide:
by Proposition~\ref{prop:reversibility}, the equality $q(s', s) = q^{\mathrm{R}}(s', s)$ amounts exactly to
the detailed balance equation for the pair of states $s$ and $s'$.

However, it turns out that another condition can be used to decide whether a CTMC is time reversible, known
as \emph{Kolmogorov's criterion}~\cite{kelly:reversibility.networks}, which has the advantage of not
requiring the knowledge of the equilibrium distribution $\bm{\pi}$.

	\begin{thm}[Kolmogorov's criterion]\label{thm:kolmogorov}
A stationary CTMC $X(t)$ with state space $\mathcal{S}$ and infinitesimal generator $\mathbf{Q}$ is time
reversible iff for all $n \in \mathbb{N}_{\ge 2}$ and distinct $s_1, \dots, s_n \in \mathcal{S}$:
\[
q(s_1, s_2) \cdot \ldots \cdot q(s_{n - 1}, s_{n}) \cdot q(s_n, s_1) \: = \: q(s_1, s_n) \cdot q(s_n, s_{n -
1}) \cdot \ldots \cdot q(s_2, s_1)
\]
	\end{thm}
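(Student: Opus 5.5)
The plan is to prove the two directions of Kolmogorov's criterion separately, using Theorem~\ref{thm:detailed_balance} as the bridge between time reversibility and a concrete algebraic condition on $\mathbf{Q}$. Throughout, $X(t)$ is stationary and, as in the setting where an equilibrium distribution $\bm{\pi}$ exists, irreducible. Irreducibility means every state is reachable from a fixed reference state $s_0$ along a path of positive rates.

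\emph{Only if.} Assume $X(t)$ is time reversible. By Theorem~\ref{thm:detailed_balance} there is a positive vector $\bm{\pi}$ with $\pi(s)\, q(s,s') = \pi(s')\, q(s',s)$ for all $s \neq s'$. Given distinct $s_1,\dots,s_n$, multiply the detailed balance equations for the consecutive pairs $(s_1,s_2),(s_2,s_3),\dots,(s_n,s_1)$. The left-hand side is $\prod_i \pi(s_i)$ times the forward cycle product. The right-hand side is $\prod_i \pi(s_i)$ times the backward cycle product. Since every $\pi(s_i) > 0$, the common factor cancels and the required equality follows.

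\emph{If.} Assume the cycle condition holds. The goal is to construct $\bm{\pi}$ explicitly and then invoke Theorem~\ref{thm:detailed_balance}.
\begin{enumerate}
\item Fix $s_0$. For each state $s$, choose a simple path $s_0 = u_0, u_1, \dots, u_k = s$ with $q(u_{j}, u_{j+1}) > 0$; it exists by irreducibility.
\item Set $\mu(s) = \prod_{j} q(u_j,u_{j+1}) / q(u_{j+1},u_j)$.
\item Show that $\mu$ is well defined, i.e.\ that the reverse rates along the path are positive and that the value does not depend on the chosen path.
\item Verify detailed balance $\mu(s)\, q(s,s') = \mu(s')\, q(s',s)$ for every pair $s \neq s'$.
\item Normalise $\mu$ to obtain $\bm{\pi}$.
\end{enumerate}

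For positivity of reverse rates in step~3, apply the criterion to the 2-cycle $(u_j,u_{j+1})$ together with a return path obtained by irreducibility. The forward product of that cycle is positive, so the backward product is positive as well, and hence $q(u_{j+1},u_j) > 0$.

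For path independence, take two paths from $s_0$ to $s$. Concatenate one with the reverse of the other to obtain a closed walk. The ratio of its forward and backward products must equal $1$. This follows from the criterion once the closed walk is decomposed into simple cycles: a closed walk splits into simple cycles together with edges traversed back and forth, and such back-and-forth edges contribute the factor $q(a,b)\,q(b,a)$ to both products, so they cancel.

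For step~4, consider first a pair with $q(s,s') > 0$. Extending a path to $s$ by the edge $(s,s')$ gives $\mu(s') = \mu(s)\, q(s,s')/q(s',s)$ by path independence, which is exactly detailed balance. If $q(s,s') = 0$, the symmetric argument above shows that $q(s',s) = 0$ as well, so both sides vanish.

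For step~5, divide $\mu$ by $\sum_s \mu(s)$. By Theorem~\ref{thm:detailed_balance} this yields time reversibility. Finiteness of $\sum_s \mu(s)$ is the point where stationarity is used: summing detailed balance shows that $\mu$ is an invariant measure, and for an ergodic chain the invariant measure is unique up to scaling, so $\mu$ is proportional to $\bm{\pi}$ and its sum is finite.

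\emph{Main obstacle.} I expect the delicate step to be the path-independence argument, specifically reducing an arbitrary closed walk, which may repeat states, to the simple cycles with distinct states that the statement quantifies over. I would handle it by induction on the walk length. At the first repeated state, split off the enclosed simple cycle and apply the criterion to it. What remains is a shorter closed walk, to which the induction hypothesis applies. The base case is a walk of length two, $a \to b \to a$, whose forward and backward products are both $q(a,b)\,q(b,a)$, so the ratio is trivially $1$.
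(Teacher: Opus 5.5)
The paper does not prove this statement: like Theorem~\ref{thm:detailed_balance}, it is recalled from \cite{kelly:reversibility.networks}. Your proposal is correct and follows the classical argument.
\begin{itemize}
\item Necessity: multiply the detailed balance equations around the cycle and cancel $\prod_i \pi(s_i) > 0$.
\item Sufficiency: build $\mu$ from products of rate ratios along paths from a reference state, verify detailed balance, normalise, and appeal to Theorem~\ref{thm:detailed_balance}.
\end{itemize}
The formulation used here does cost you an extra lemma. The criterion is classically stated for arbitrary finite sequences of states, and then path independence follows by applying it directly to the closed walk. This statement only quantifies over distinct states, and its $n = 2$ instance is vacuous. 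Your induction reducing closed walks to simple cycles is therefore exactly the missing ingredient. It works because forward and backward products are multiplicative over the decomposition, and each extracted simple cycle has equal forward and backward products by the criterion.

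A few points need tightening:
\begin{itemize}
\item In the positivity argument, the cycle to use is the edge $u_j \to u_{j+1}$ closed by a \emph{simple} return path from $u_{j+1}$ to $u_j$. Its states are distinct, and it has at least three of them unless $q(u_{j+1}, u_j) > 0$ already holds. The $2$-cycle by itself gives no information.
\item In the induction, the base case should be the empty closed walk. Splitting off a simple cycle that exhausts the whole walk leaves a walk of length $0$.
\item State path independence for arbitrary positive-rate walks from $s_0$, not just simple paths. In step~4, appending the edge $(s, s')$ to a simple path may revisit $s'$.
\item Uniqueness of invariant measures up to scaling holds for irreducible recurrent chains, which ergodicity guarantees. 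For the finite chains underlying PEPA components, summability of $\mu$ is immediate anyway.
\end{itemize}
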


%
\subsection{Relationship between Lumpability and Reversibility}
\label{subsec:rev_lump}
%

Since reversing time swaps outgoing and incoming flows, with ordinary and exact lumpabilities respectively
constraining precisely such flows, it is natural to expect a connection between lumpability and
reversibility, which was proved in~\cite{marin:mascots14,MR-acta17}. In particular, the authors showed how
ordinary and exact lumpabilities behave when related to a CTMC and its reversed counterpart.

	\begin{thm}\label{thm:exact_ordinary_reversed}
Let $X(t)$ be a stationary CTMC and $\sim$ be an exact lumpability for $X(t)$. Then $\sim$ is an ordinary
lumpability for $X^{\mathrm{R}}(t)$.
	\end{thm}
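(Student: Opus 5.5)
We argue directly from the definitions, using Proposition~\ref{prop:reversibility} to express the reversed rates and Proposition~\ref{prop:exact_equiprobable} to eliminate the stationary probabilities. One caveat first: Proposition~\ref{prop:exact_equiprobable} is stated for ergodic chains, whereas the statement assumes only stationarity with an equilibrium distribution $\bm{\pi}$ of positive entries. So we either restrict to the ergodic case, which is the setting in which the reversed process of Proposition~\ref{prop:reversibility} is usually considered, or re-prove the equiprobability step for the given $\bm{\pi}$.

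The equiprobability step goes as follows. Let $s_1 \sim s_2$ lie in a class $C$. By global balance, $0 = \sum_{s'} \pi(s') q(s', s)$ for every state $s$. If $\pi$ were constant on each class, with value $\pi_{C'}$ on $C'$, then
\[
\sum_{s'} \pi(s') q(s', s) \: = \: \sum_{C'} \pi_{C'} \sum_{s' \in C'} q(s', s).
\]
By the exact-lumpability condition, including the case $C' = C$, the inner sums do not depend on the choice of $s \in C$. In the ergodic case we can build such a $\pi$. Take the aggregated chain with rates $\widetilde{q}(C,C')$ from Proposition~\ref{prop:aggr_proc_exact_lump}, or equivalently solve the reduced balance equations in the unknowns $\pi_C$. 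The vector that is constant on classes then satisfies $\bm{\mu}\mathbf{Q} = \mathbf{0}$. By uniqueness of invariant measures up to scaling, it equals $\bm{\pi}$. This is exactly Proposition~\ref{prop:exact_equiprobable}, which we simply invoke.

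Next we verify the outgoing condition for $X^{\mathrm{R}}(t)$. Fix classes $C \neq C'$ and states $s_1, s_2 \in C$. By Proposition~\ref{prop:reversibility}, for $s \in C'$ (so $s \neq s_i$),
\[
\sum_{s \in C'} q^{\mathrm{R}}(s_i, s) \: = \: \sum_{s \in C'} \frac{\pi(s)}{\pi(s_i)}\, q(s, s_i).
\]
Since $\pi$ is constant on $C'$ and on $C$, say with values $\pi_{C'}$ and $\pi_C$, this equals
\[
\frac{\pi_{C'}}{\pi_C} \sum_{s \in C'} q(s, s_i).
\]
By Definition~\ref{def:exact_lump}, the sum $\sum_{s \in C'} q(s, s_i)$ is the same for $i = 1, 2$. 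Hence the cumulative reversed rates from $s_1$ and from $s_2$ into $C'$ coincide, which is the condition of Definition~\ref{def:ordinary_lump} for $\mathbf{Q}^{\mathrm{R}}$.

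The main obstacle is the equiprobability step, namely making sure the hypotheses of Proposition~\ref{prop:exact_equiprobable} hold. The reduction in the last paragraph is then a routine computation. If only stationarity is assumed, we would argue as follows. The equilibrium distribution restricted to each irreducible closed component is unique up to scaling. Applying the class-constant construction within each component gives the same equiprobability on every component, and the computation above then goes through unchanged.
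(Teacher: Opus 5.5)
For ergodic chains your proof is correct and follows the route the paper indicates; the paper gives only a sketch and defers the proof to the cited works. You rewrite $q^{\mathrm{R}}$ via Proposition~\ref{prop:reversibility}, then use the equiprobability of Proposition~\ref{prop:exact_equiprobable} to turn $\pi(s)/\pi(s_i)$ into a constant depending only on the two classes. The incoming condition of Definition~\ref{def:exact_lump} then becomes the outgoing condition for $X^{\mathrm{R}}(t)$. Like the paper, this needs ergodicity, since that is the hypothesis of Proposition~\ref{prop:exact_equiprobable}.

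Your last paragraph, which extends the argument to merely stationary chains, is wrong as stated. Without irreducibility, $\pi$ need not be constant on the classes of an exact lumping, so the computation does \emph{not} go through unchanged. Take two disjoint components $\{a_1,a_2\}$ and $\{b_1,b_2\}$ with $q(a_1,a_2)=q(a_2,a_1)=q(b_1,b_2)=q(b_2,b_1)=1$, the exact lumping $\{\{a_1,b_1\},\{a_2,b_2\}\}$, and the stationary distribution $(p/2,p/2,(1-p)/2,(1-p)/2)$ with $p \in (0,1)$, $p\neq 1/2$. Then $\pi(a_1)\neq\pi(b_1)$.

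The theorem still holds there, but for a reason you have not given:
\begin{itemize}
\item All states with $\pi>0$ are recurrent, so $q(s,s_i)>0$ forces $s$ into the closed component $K_i$ of $s_i$. Hence only intra-component ratios $\pi(s)/\pi(s_i)$ enter the sum.
\item The restricted partition is still an exact lumping of $K_i$, so $\pi$ is class-constant on $K_i$.
\item You must then show that the ratio of the values on $C'$ and on $C$ is the same in every component meeting $C$. If $C$ meets $K_1$ and $K_2$, the flow into states of $C$ from a class $C''$ is the same at all states of $C$, so it vanishes unless $C''$ meets both components. Irreducibility then forces $K_1$ and $K_2$ to meet exactly the same classes.
\item On that common set of classes, the reduced balance equations have the same coefficients $\sum_{s''\in C''} q(s'',s)$ in both components. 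Their solution space is one-dimensional, so the ratio agrees.
\end{itemize}
Either add this argument or restrict to ergodic chains, as the paper implicitly does.
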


Intuitively, by Proposition~\ref{prop:reversibility} the rates of $X^{\mathrm{R}}(t)$ are the reversed rates
of $X(t)$, weighted by the equilibrium probabilities; since exact lumpability guarantees, by
Proposition~\ref{prop:exact_equiprobable}, that equivalent states are equiprobable, the incoming-flow
condition on $X(t)$ translates exactly into the outgoing-flow condition on $X^{\mathrm{R}}(t)$.

Unfortunately, the symmetric result does not hold: an ordinary lumpability for $X(t)$ is not, in general, an
exact lumpability for $X^{\mathrm{R}}(t)$.

The tight relationship between exact and ordinary lumpabilities, when considered on a CTMC and its reversed
version, induces a non-trivial result for strict lumpability.

	\begin{thm}\label{thm:strict_reversed}
Let $X(t)$ be a stationary CTMC. An equivalence relation $\sim$ is a strict lumpability for $X(t)$ iff it is
a strict lumpability for $X^{\mathrm{R}}(t)$.
	\end{thm}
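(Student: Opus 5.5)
The plan is to prove the forward implication directly and to obtain the converse by symmetry, using the fact that the reversed process of $X^{\mathrm{R}}(t)$ is $X(t)$ itself. By Proposition~\ref{prop:reversibility}, $X^{\mathrm{R}}(t)$ is a stationary CTMC with the same equilibrium distribution $\bm{\pi}$. Its off-diagonal rates are $q^{\mathrm{R}}(s', s) = \frac{\pi(s)}{\pi(s')} \cdot q(s, s')$. Applying the same formula to $X^{\mathrm{R}}(t)$ returns $\mathbf{Q}$. Hence it suffices to show that if $\sim$ is a strict lumpability for $X(t)$, then it is also one for $X^{\mathrm{R}}(t)$.

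I will also record that the diagonals agree, $q^{\mathrm{R}}(s,s) = q(s,s)$. Indeed, $-q^{\mathrm{R}}(s,s) = \sum_{s' \neq s} \frac{\pi(s')}{\pi(s)} q(s', s)$, and the $s$-th global balance equation $\bm{\pi} \cdot \mathbf{Q} = \mathbf{0}$ turns this into $-q(s,s)$. Consequently $q^{\mathrm{R}}(s',s) = \frac{\pi(s)}{\pi(s')} q(s,s')$ holds for all pairs, diagonal included.

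Assume $\sim$ is strict for $X(t)$. Since $\sim$ is in particular exact for $X(t)$, Theorem~\ref{thm:exact_ordinary_reversed} immediately gives that $\sim$ is an ordinary lumpability for $X^{\mathrm{R}}(t)$. It remains to show that $\sim$ is exact for $X^{\mathrm{R}}(t)$. By Proposition~\ref{prop:exact_equiprobable}, exactness for $X(t)$ makes $\pi$ constant on each class; write $\pi_C$ for its value on $C$.

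Fix classes $C, C'$ and a state $s \in C$. The quantity to control is the incoming sum in the reversed chain:
\[
\sum_{s' \in C'} q^{\mathrm{R}}(s', s) \: = \: \sum_{s' \in C'} \frac{\pi(s)}{\pi(s')} \cdot q(s, s') \: = \: \frac{\pi_C}{\pi_{C'}} \cdot \sum_{s' \in C'} q(s, s').
\]
Here I use the diagonal-inclusive formula, so the case $C = C'$ is covered as well. The task is therefore to show that the outgoing sum $\sum_{s' \in C'} q(s, s')$ of the forward chain does not depend on the choice of $s \in C$.
\begin{itemize}
\item For $C \neq C'$, this is exactly the ordinary lumpability of $X(t)$.
\item For $C = C'$, the sum includes the diagonal term. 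Since rows of $\mathbf{Q}$ sum to $0$, it equals $-\sum_{C'' \neq C} \sum_{s' \in C''} q(s, s')$. Each inner sum is independent of $s \in C$ by ordinary lumpability, so the whole expression is too.
\end{itemize}
This establishes the incoming condition of Definition~\ref{def:exact_lump} for $X^{\mathrm{R}}(t)$, hence strictness. The converse follows by applying the same argument to $X^{\mathrm{R}}(t)$.

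The main obstacle is the within-class case $C = C'$, which requires the diagonal identity $q^{\mathrm{R}}(s,s) = q(s,s)$ and the row-sum trick. A secondary point is applying Proposition~\ref{prop:exact_equiprobable}, which is stated for ergodic chains. I expect to argue that stationarity, in the sense used here (an equilibrium distribution with positive entries), lets $\bm{\pi}$ play the role of the invariant measure in that proposition. Failing that, I would restrict to the ergodic setting implicit in Proposition~\ref{prop:reversibility}.
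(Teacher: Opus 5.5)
Your proof is correct for ergodic chains, and it is the route the paper points to. The paper gives no proof of its own and cites \cite{marin:mascots14,MR-acta17}, but its remarks around Theorem~\ref{thm:exact\_ordinary\_reversed} suggest exactly your decomposition:
\begin{itemize}
\item the ordinary half for $X^{\mathrm{R}}(t)$ comes from Theorem~\ref{thm:exact_ordinary_reversed};
\item the exact half comes from equiprobability together with ordinary lumpability of $X(t)$;
\item the converse follows by symmetry via $(X^{\mathrm{R}})^{\mathrm{R}} = X$.
\end{itemize}
Your diagonal identity $q^{\mathrm{R}}(s,s) = q(s,s)$, combined with the row-sum argument, correctly handles the within-class case $C = C'$.

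Take your fallback rather than your first idea. Proposition~\ref{prop:exact_equiprobable} depends on uniqueness of the invariant measure up to scaling, so mere stationarity with a positive $\bm{\pi}$ is not enough. For example, take two states with $\mathbf{Q} = \mathbf{0}$ and a single class. That partition is strict, yet every $(p, 1-p)$ with $0 < p < 1$ is a positive equilibrium distribution that is not constant on the class. The theorem's conclusion is trivial there, but your equiprobability step fails.

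Under the ergodic reading, which is how the paper introduces $\bm{\pi}$, add one line: $X^{\mathrm{R}}(t)$ is ergodic too, since its transition graph is that of $X(t)$ reversed and it has the same $\bm{\pi}$. That is what licenses applying the forward direction to $X^{\mathrm{R}}(t)$.
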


The result just introduced also allows us to establish a bridge between strict lumpability, reversed
processes, and the Markov property.

	\begin{thm}\label{thm:strict_markov}
Let $X(t)$ be a stationary CTMC and $\sim$ be an equivalence relation. If $\sim$ is a strict lumpability for
$X(t)$, then both aggregate processes $\widetilde{X}(t)$ and $\widetilde{X^{\mathrm{R}}}(t)$ satisfy the
Markov property.
	\end{thm}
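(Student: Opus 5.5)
The plan is to reduce the statement to the Kemeny--Snell characterisation recalled after Definition~\ref{def:ordinary_lump}: for an equivalence relation over the state space of a CTMC, the aggregate process is a CTMC, for every initial distribution, if and only if the relation is an ordinary lumpability. It therefore suffices to show that $\sim$ is an ordinary lumpability both for $X(t)$ and for $X^{\mathrm{R}}(t)$. The Markov property of $\widetilde{X}(t)$ and of $\widetilde{X^{\mathrm{R}}}(t)$ then follows by applying that characterisation to each of the two chains.

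For $X(t)$ the claim is immediate. By Definition~\ref{def:strict_lump}, a strict lumpability is in particular an ordinary lumpability for $X(t)$. Proposition~\ref{prop:aggr_proc_strict_lump} even supplies the generator $\widetilde{\mathbf{Q}}$ of the aggregate chain explicitly, so $\widetilde{X}(t)$ is a CTMC and in particular Markovian.

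For $X^{\mathrm{R}}(t)$ I would give two routes, which reinforce each other.
\begin{itemize}
\item \emph{Via the earlier theorems.} Since $\sim$ is strict, it is exact for $X(t)$. Theorem~\ref{thm:exact_ordinary_reversed} then makes $\sim$ an ordinary lumpability for $X^{\mathrm{R}}(t)$. Alternatively, Theorem~\ref{thm:strict_reversed} states that $\sim$ is a strict, and hence ordinary, lumpability for $X^{\mathrm{R}}(t)$. Recall that $X^{\mathrm{R}}(t)$ is itself a stationary CTMC, as noted before Proposition~\ref{prop:reversibility}, so the Kemeny--Snell result applies to it and $\widetilde{X^{\mathrm{R}}}(t)$ is Markovian.
\item \emph{Direct check.} For $s_1 \sim s_2$ in a class $C$ and another class $C'$, Proposition~\ref{prop:reversibility} gives $\sum_{s'\in C'} q^{\mathrm{R}}(s_1,s') = \sum_{s'\in C'} \frac{\pi(s')}{\pi(s_1)} q(s',s_1)$. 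By Proposition~\ref{prop:aggr_proc_strict_lump} (or Proposition~\ref{prop:exact_equiprobable}), $\pi$ is constant on classes. Hence the factor $\pi(s')/\pi(s_1)$ depends only on the pair $(C',C)$ and comes out of the sum. The remaining sum $\sum_{s'\in C'} q(s',s_1)$ is independent of the choice of $s_1 \in C$, by the exact condition.
\end{itemize}

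The main obstacle is an interpretive one: making precise what ``the aggregate process satisfies the Markov property'' means for the reversed chain. The Kemeny--Snell equivalence is stated for every initial distribution, whereas $X^{\mathrm{R}}(t)$ is defined only in the stationary regime. I would handle this by first showing that $\sim$ is an ordinary lumpability of $\mathbf{Q}^{\mathrm{R}}$ as a generator, which is the direct computation above. The sufficiency direction of Kemeny--Snell then gives Markovianity of the lumped process for any initial law, and in particular for the stationary one. That sufficiency direction is just the standard Dynkin-type argument: the transition probabilities into a class depend only on the class of the current state.
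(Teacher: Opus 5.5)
Your proposal is correct and follows the paper's own argument: a strict lumpability is ordinary for $X(t)$, and by Theorem~\ref{thm:strict_reversed} it is also ordinary for $X^{\mathrm{R}}(t)$, so the Kemeny--Snell characterisation makes both aggregates Markovian. Your extra direct check, which combines Proposition~\ref{prop:reversibility} with the class-wise equiprobability of $\pi$, is sound but simply unpacks Theorem~\ref{thm:exact_ordinary_reversed} rather than giving a different route.
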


This is an immediate consequence of the previous results: a strict lumpability for $X(t)$ is in particular
an ordinary lumpability for $X(t)$ and, by Theorem~\ref{thm:strict_reversed}, also for $X^{\mathrm{R}}(t)$,
with ordinary lumpability being precisely the condition ensuring that the aggregate process is a~CTMC.

Before introducing the last result relating reversibility and lumpability, recall that, while the reversed
process always exists, reversibility holds only under the special circumstances characterized by
Theorems~\ref{thm:detailed_balance} and~\ref{thm:kolmogorov}. The following result highlights a relationship
between exact and strict lumpabilities over reversible CTMCs.

	\begin{thm}\label{thm:rev_exact_strict}
Let $X(t)$ be a reversible CTMC and $\sim$ be an equivalence relation. $\sim$ is a strict lumpability for
$X(t)$ iff it is an exact lumpability for $X(t)$.
	\end{thm}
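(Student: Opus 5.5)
The left-to-right implication holds by definition. By Definition~\ref{def:strict_lump}, every strict lumpability is in particular an exact lumpability, and this requires no reversibility assumption. All the work lies in the converse. So assume $X(t)$ is reversible and $\sim$ is an exact lumpability for $X(t)$. We must show that $\sim$ is also an ordinary lumpability for $X(t)$.

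The plan is to combine Theorem~\ref{thm:exact_ordinary_reversed} with reversibility.

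\begin{itemize}
\item Since a reversible CTMC is in particular stationary (indeed ergodic, so that its equilibrium distribution $\bm{\pi}$ exists), Theorem~\ref{thm:exact_ordinary_reversed} applies. It yields that $\sim$ is an ordinary lumpability for the reversed process $X^{\mathrm{R}}(t)$.
\item By definition of time reversibility, $X(t)$ is stochastically identical to $X^{\mathrm{R}}(t)$. Hence the two processes have the same infinitesimal generator, $\mathbf{Q}^{\mathrm{R}} = \mathbf{Q}$.
\item To make this step concrete, I would invoke Theorem~\ref{thm:detailed_balance} together with Proposition~\ref{prop:reversibility}. For all $s \neq s'$ they give $q^{\mathrm{R}}(s',s) = \frac{\pi(s)}{\pi(s')} \cdot q(s,s') = q(s',s)$. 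The diagonal entries then agree because they are determined by the off-diagonal ones.
\item Ordinary lumpability (Definition~\ref{def:ordinary_lump}) depends only on the generator. So $\sim$ being an ordinary lumpability for $X^{\mathrm{R}}(t)$ means exactly that it is one for $X(t)$.
\item Combining this with the hypothesis that $\sim$ is an exact lumpability for $X(t)$, Definition~\ref{def:strict_lump} gives that $\sim$ is a strict lumpability for $X(t)$.
\end{itemize}

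As a fallback, and to make the argument self-contained, I would also give a direct calculation. Fix classes $C \neq C'$ and states $s_1, s_2 \in C$. Detailed balance gives
\[
\sum_{s' \in C'} q(s_i, s') \: = \: \frac{1}{\pi(s_i)} \sum_{s' \in C'} \pi(s') \cdot q(s', s_i).
\]
By Proposition~\ref{prop:exact_equiprobable}, $\pi$ is constant on each class, so $\pi(s')$ takes a single value $\pi_{C'}$ for all $s' \in C'$, and $\pi(s_1) = \pi(s_2) = \pi_C$. The right-hand side therefore equals $\frac{\pi_{C'}}{\pi_C} \sum_{s' \in C'} q(s', s_i)$. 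By the exact lumpability condition, this sum does not depend on whether $i = 1$ or $i = 2$, which establishes the outgoing condition.

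The only delicate point I expect is the justification that $\bm{\pi}$ exists and is strictly positive, so that Proposition~\ref{prop:exact_equiprobable} (which needs ergodicity) and the division by $\pi(s_i)$ are legitimate. I would handle it by noting that reversibility is formulated for stationary chains, and that Theorem~\ref{thm:detailed_balance} supplies a positive vector $\bm{\pi}$ which is the equilibrium distribution. In the absence of ergodicity, irreducibility would be applied within each communicating class.
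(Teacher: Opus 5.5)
Your proof is correct and is essentially the argument the paper has in mind: the theorem is recalled from \cite{marin:mascots14,MR-acta17} without proof, and the surrounding discussion sketches your reasoning, namely Theorem~\ref{thm:exact_ordinary_reversed} combined with $\mathbf{Q}^{\mathrm{R}} = \mathbf{Q}$ under reversibility, which unfolds into your detailed-balance computation based on the equiprobability given by Proposition~\ref{prop:exact_equiprobable}. Your closing remark on non-ergodic chains is not needed in the paper's implicitly ergodic setting, and as stated it would not complete the direct computation: constancy of $\bm{\pi}$ on equivalence classes inside each irreducible component does not by itself show that the ratio $\pi_{C'}/\pi_{C}$ is the same in the component containing $s_{1}$ as in the one containing $s_{2}$.
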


This final result can be interpreted as the property of exact lumpability to \emph{intrinsically} behave as
an ordinary one as well when the CTMC is reversible.

%
%
\section{Performance Evaluation Process Algebra}
\label{sec:pepa}
%
%

We now move from the level of CTMCs, where lumpability is a property of a relation over the state space, to
the level of process terms, whose underlying semantics in the form of action-labelled variants of CTMCs can
be compared via stochastic behavioral equivalences -- defined over all terms -- each inducing a different
kind of lumping. Since we will focus on stochastic bisimilarities over PEPA -- Performance Evaluation
Process Algebra~\cite{hillston:book}, in this section we recall the syntax (Section~\ref{subsec:syntax}),
operational semantics (Section~\ref{subsec:semantics}), and underlying CTMC
(Section~\ref{subsec:underl_ctmc}) of the considered language.

%
\subsection{Syntax of PEPA}
\label{subsec:syntax}
%

PEPA is an algebraic process calculus enhanced with stochastic timing information that may be used to
calculate performance measures in addition to prove functional properties. The basic elements of PEPA are
\emph{activities} and \emph{components}.

Each activity is represented as a pair $(\alpha, r)$ where $\alpha$ is the \emph{activity type} while $r$ is
the \emph{activity rate}. We assume that there is a countable set $\cA$ of activity types, including a
distinguished one, traditionally denoted by $\tau$, for \emph{unobservable} activities that are private to
the components executing them. An activity rate may be a positive real number, which uniquely identifies the
exponentially distributed random variable determining the duration of the activity, or the distinguished
symbol $\top$, which should be read as \emph{unspecified}; we denote by $\cR$ the set of activity rates.

The PEPA language provides a small set of operators. These allow process terms to be constructed by defining
the behaviour of components via the activities they undertake and the interactions among them. The syntax
for PEPA is defined by the following grammar:
\[
P \: ::= \: \underline{0} \mid A \mid (\alpha, r) \, . \, P \mid P + P \mid P \sync{L} P \mid P \, / \, L
\]
where $P$ denotes a component, $A$ is a constant, and $L \subseteq \cA \setminus \{ \tau \}$. We write $\cC$
for the set of all components. The meaning of the various operators (whose precedence from highest to lowest
is determined by the order in which they appear in the syntax) can be informally described as follows:

	\begin{itemize}
\item Inaction: $\underline{0}$ performs no activity.

\item Constant: each constant $A$ is assumed to have a defining equation of the form $A \rmdef P$
establishing that the behavior of $A$ is given by the behavior of $P$. Since $A$ can occur within $P$, the
defining equation may be recursive. Each constant present in $P$ is assumed to be \emph{guarded}, i.e., to
occur in the scope of a prefix.

\item Prefix: $(\alpha, r) \, . \, P$ can carry out the activity $(\alpha, r)$ of type $\alpha$ at rate $r$
and subsequently behaves as $P$. When $a = (\alpha, r)$, component $(\alpha, r) \, . \, P$ will be written
as $a \, . \, P$.

\item Choice: $P_{1} + P_{2}$ behaves as either $P_{1}$ or $P_{2}$ depending on whether an activity of
$P_{1}$ or an activity of $P_{2}$ terminates first. In other words, all activities executable by $P_{1}$ and
$P_{2}$ compete under a \emph{race policy}. It is worth recalling that the minimum of several exponentially
distributed random variables is still exponentially distributed with rate given by the sum of the rates of
the original variables. As a consequence, each activity involved in the race has a probability of
terminating first that is proportional to its rate. Moreover, the continuous nature of exponential
distributions ensures that the probability of $P_{1}$ and $P_{2}$ completing an activity at the same time is
$0$.

\item Cooperation: $P_{1} \sync{L} P_{2}$ is the parallel execution of $P_{1}$ and $P_{2}$. They proceed
independent of each other based on the race policy in the case of activities whose type does not belong to
the \emph{cooperation set} $L$ (\emph{individual activities}), while they have to synchronise on activities
of the same type when this belongs to $L$ (\emph{shared activities}). In the latter case, the resulting
activity has the same type as the two original ones and a rate reflecting the rate of the slower
subcomponent (if either original activity has an unspecified rate, then that activity is passive and the
rate is completely determined by the one of the other original activity). Unlike choice, here no
subcomponent is discarded after a race.

\item Hiding: $P \, / \, L$ behaves as $P$ except that any executed activity of type belonging to $L$
\linebreak is hidden, i.e., changed to the unobservable type $\tau$.
	\end{itemize}

%
\subsection{Operational Semantics}
\label{subsec:semantics}
%

The operational semantic rules for PEPA are given in Table~\ref{tab:sem_rules}. The semantics of every
component $P$ is a labelled \emph{multi-transition} system $\cD\cG(P)$, named \emph{derivation graph} of
$P$, which is obtained by inductively applying the rules to the syntactical structure of $P$. Multiplicities
of transitions, intended as all possible different ways of deriving a transition from the semantic rules,
are significant: for instance, $(\alpha, r_{1}) \, . \, P + (\alpha, r_{2}) \, . \, P$ has two transitions
to $P$ even when $r_{1} = r_{2}$. The set $\mathit{ds}(P)$ of states reachable from $P$ is termed the
\emph{derivative set} of $P$; states correspond to syntactically different components. Every state in
$\mathit{ds}(P)$ has finitely many outgoing transitions thanks to guardedness, with $\mathit{ds}(P)$ being
finite if no cooperation or hiding occurs in recursive constant definitions.

	\begin{table}[t]

\caption{Operational semantics rules for PEPA.}
\label{tab:sem_rules}

\begin{center}
\begin{tabular}{ccc}
\toprule
\multicolumn{3}{c}
{$\dfrac{}{(\alpha, r) \, . \, P \transits{(\alpha, r)} P}$
\qquad
$\dfrac{P \transits{(\alpha, r)} P' \quad A \rmdef P}{A \transits{(\alpha, r)} P'}$} \\[8mm]
\multicolumn{3}{c}
{$\dfrac{P_{1} \transits{(\alpha, r)} P'_{1}}{P_{1} + P_{2} \transits{(\alpha, r)} P'_{1}}$
\qquad
$\dfrac{P_{2} \transits{(\alpha, r)} P'_{2}}{P_{1} + P_{2} \transits{(\alpha, r)} P'_{2}}$} \\[8mm]
\multicolumn{3}{c}
{$\dfrac{P_{1} \transits{(\alpha, r)} P'_{1} \quad \alpha \notin L}{P_{1} \sync{L} P_{2} \transits{(\alpha,
r)} P'_{1} \sync{L} P_{2}}$
\qquad
$\dfrac{P_{2} \transits{(\alpha, r)} P'_{2} \quad \alpha \notin L}{P_{1} \sync{L} P_{2} \transits{(\alpha,
r)} P_{1} \sync{L} P'_{2}}$} \\[8mm]
\multicolumn{3}{c}
{$\dfrac{P_{1} \transits{(\alpha, r_1)} P'_{1} \quad P_{2} \transits{(\alpha, r_2)} P'_{2} \quad \alpha \in
L}{P_{1} \sync{L} P_{2} \transits{(\alpha, r)} P'_{1} \sync{L} P'_{2}}$ \quad where $r =
\dfrac{r_1}{r_{\alpha}(P_{1})} \cdot \dfrac{r_2}{r_{\alpha}(P_{2})} \cdot \mathrm{min}(r_{\alpha}(P_{1}),
r_{\alpha}(P_{2}))$} \\[8mm]
\multicolumn{3}{c}
{$\dfrac{P \transits{(\alpha, r)} P' \quad \alpha \notin L}{P \, / \, L \transits{(\alpha, r)} P' \, / \,
L}$
\qquad
$\dfrac{P \transits{(\alpha, r)} P' \quad \alpha \in L}{P \, /L \transits{(\tau, r)} P' \, / \, L}$} \\
\bottomrule 
\end{tabular}
\end{center}

	\end{table}

The first two rules for cooperation formalize an \emph{interleaving} view of concurrency. Consider for
instance $(\alpha_1, r_1) \, . \, \underline{0} \sync{\emptyset} (\alpha_2, r_2) \, . \, \underline{0}$.
Since the cooperation set is empty, the two activities are executed in parallel. Recalling that the
probability that $\alpha_1$ and $\alpha_2$ terminate simultaneously is $0$, if $\alpha_1$ (resp.\
$\alpha_2$) terminates first then the residual time to the termination of $\alpha_2$ (resp.\ $\alpha_1$)
\linebreak is still exponentially distributed with rate $r_2$ (resp.\ $r_1$). This is a consequence of the
memoryless property of exponential distributions.

The third rule for cooperation relies on the notion of \emph{apparent rate}: $r_{\alpha}(P)$ is the sum of
the rates of all the activities of type $\alpha$ that $P$ can execute. If the considered activities have
unspecified rates, then $r_{\alpha}(P)$ is $\top$ preceded by the number of such activities whereas
$\frac{\top}{r_{\alpha}(P)}$ is the inverse of that number. Moreover, $\mathrm{min}(r, n \top) = r$ when $r
\in \mathbb{R}_{> 0}$.

%
\subsection{Underlying Stochastic Process}
\label{subsec:underl_ctmc}
%

We denote by $\cC_{\rm pc}$ the set of components that are \emph{performance closed}, i.e., such that their
derivation graphs have no transitions labelled with $\top$. Consider $P \in \cC_{\rm pc}$ that is
\emph{finite}, i.e., such that $\cD\cG(P)$ has finitely many states $\mathit{ds}(P) = \{ P_1, \ldots, P_n
\}$ and finitely many transitions. If we define the stochastic process $X(t)$ for $t \in \mathbb{R}_{\ge 0}$
by letting $X(t) = P_i$ when $P$ behaves as $P_i$ at time $t$, then $X(t)$ turns out to be a
CTMC~\cite{hillston:book} whose infinitesimal generator $\mathbf{Q}$ is constructed as follows.

The \emph{transition rate} $q(P_i, P_j)$ from $P_i$ to $P_j$ is the sum of the activity rates labeling the
transitions from the state corresponding to $P_i$ to the state corresponding to $P_j$:
\[
q(P_i, P_j) \: = \sum_{a \in \Ac(P_i | P_j)} r_a
\]
where $P_i \neq P_j$, $\Ac(P_i| P_j) = \bms a \in \Ac(P_i) \mid P_i \transits{a} P_j \ems$, $\Ac(P_i)$ is
the multiset of activities executable by $P_i$, and $r_a$ is the rate of $a$. Clearly, if $P_j$ is not a
one-step derivative of $P_i$, then $q(P_i, P_j) = 0$. The values $q(P_i, P_j)$ are the off-diagonal elements
of the infinitesimal generator $\mathbf{Q}$ of the CTMC. Its diagonal elements are formed as the negative
sums of the non-diagonal elements of the corresponding rows. We use the following notation: $q(P_i) =
\sum_{P_j \neq P_i} q(P_i, P_j)$ and $q(P_i, P_i) = -q(P_i)$. If the CTMC is irreducible (in addition to
being finite), then its steady-state distribution $\bm{\pi}$ exists and can be computed by solving the
global balance equations $\bm{\pi} \cdot \mathbf{Q} = \mathbf{0}$ under the constraint $\sum_{P_i \in
\mathit{ds}(P)} \pi(P_i) = 1$ with every $\pi(P_i) \in \mathbb{R}_{> 0}$.

We conclude by introducing related notation that will be used in the bisimilarities:

	\begin{itemize}
\item The \emph{conditional transition rate} $q(P_i, P_j, \alpha)$ from $P_i$ to $P_j$ via activities of
type $\alpha$ -- where $P_i$ and $P_j$ are not necessarily distinct -- is the sum of the rates labelling all
$\alpha$-transitions from the state corresponding to $P_i$ to the state corresponding to $P_j$:
\[
q(P_i, P_j, \alpha) \: = \sum_{P_i \transits{(\alpha, r)} P_j} r
\]

\item The \emph{cumulative conditional transition rate} $q[P_i, C, \alpha]$ from $P_i$ to $C \subseteq
\mathit{ds}(P)$ via activities of type $\alpha$ is the sum of the rates labelling all $\alpha$-transitions
from the state corresponding to $P_i$ to some state in $C$:
\[
q[P_i, C, \alpha] \: = \: \sum_{P' \in C} q(P_i, P' , \alpha)
\]

\item The \emph{total transition rate} $q[P_i, \alpha]$ from $P_i$ via activities of type $\alpha$ is the
sum of the rates labelling all $\alpha$-transitions from the state corresponding to $P_i$:
\[
q[P_i, \alpha] \: = \: \sum_{P' \in \cC} q(P_i, P', \alpha)
\]

\item The \emph{incoming cumulative conditional transition rate} $q[C, P_i, \alpha]$ into $P_i$ from $C
\subseteq \cC$ via activities of type $\alpha$ is the sum of the rates labelling all $\alpha$-transitions
from some state in $C$ to the state corresponding to $P_i$:
\[
q[C, P_i, \alpha] \: = \: \sum_{P' \in C} q(P', P_i, \alpha)
\]
	\end{itemize}

%
%
\section{Lumpability-Driven Bisimilarities over PEPA}
\label{sec:bisim}
%
%

In this section we present the definitions of six bisimulation-style equivalences over PEPA named after the
kind of lumping they induce: strong and weak ordinary bisimilarities (Section~\ref{subsec:ord_bisim}),
strong and weak exact bisimilarities (Section~\ref{subsec:exact_bisim}), and strong and weak strict
bisimilarities (Section~\ref{subsec:strict_bisim}). We also discuss the form that incoming cumulative rates
should take in order to avoid trivial exact and strict bisimilarities (Section~\ref{subsec:from_out_to_in}).

Each of the six bisimilarities collected in Table~\ref{tab:bisim_renaming} follows the pattern
of~\cite{LS91} and is defined as the union of all the bisimulations of its kind. It can be shown to be the
largest bisimulation of that kind by first proving that the transitive closure of the countable union of
bisimulations of that kind is a bisimulation of that kind too~\cite{GSS95,hillston:book}. For each of the
three pairs of bisimilarities, we first introduce the strong variant, which handles $\tau$-activities like
all the other activities, followed by the weak variant, in which $\tau$-activities receive a special
treatment, then we prove which of the three lumpings they induce.

	\begin{table}[t]

\caption{The six stochastic bisimilarities under our lumpability-driven naming scheme.}
\label{tab:bisim_renaming}

\begin{tabular}{|c|l|c|l|}
\hline
\emph{} & \emph{our name} & \emph{reference} & \emph{also known as} \\
\hline
\hline
$\sim_{\rm o}$ & strong ordinary bisimilarity & \cite{hillston:book} & strong equivalence \\
\hline
$\approx_{\rm o}$ & weak ordinary bisimilarity & \cite{marin:valuetools13} & lumpable bisimilarity \\
\hline
\hline
$\sim_{\rm e}$ & strong exact bisimilarity & \cite{Buc94b} & exact performance equivalence \\
& & \cite{inf18} & exact equivalence \\
\hline
$\approx_{\rm e}$ & weak exact bisimilarity & \cite{valuetools25-exact} & exact lumpable bisimilarity \\
\hline
\hline
$\sim_{\rm s}$ & strong strict bisimilarity & \cite{BR23} & Markovian forward-reverse bisimilarity \\
\hline
$\approx_{\rm s}$ & weak strict bisimilarity & --- & --- \\
\hline
\end{tabular}

	\end{table}

%
\subsection{Ordinary Bisimilarities}
\label{subsec:ord_bisim}
%

We start by recalling the notion of \emph{strong equivalence} for PEPA introduced in~\cite{hillston:book}:
two components are strongly equivalent iff there is an equivalence relation relating them such that, for any
activity type $\alpha$, the cumulative conditional transition rates from those components to any equivalence
class, via activities of type $\alpha$, coincide.

As the title of this subsection suggests, this is the first place where we settle the naming conventions for
the stochastic bisimilarities that are scattered in the literature. The name strong equivalence was
originally chosen because of the tight relationship between this notion and that of strong lumpability on
the underlying CTMC. Since, following Section~\ref{sec:ctmc}, we adopted the name ordinary lumpability for
the latter, we consistently rename the equivalence as ordinary bisimilarity.

	\begin{defi}[Strong ordinary bisimilarity]\label{def:strong_ordinary_bisim}
We say that $P_{1}, P_{2} \in \cC$ are \emph{strongly ordinary bisimilar}, written $P_{1} \sim_{\rm o}
P_{2}$, iff $(P_{1}, P_{2}) \in \cB$ for some strong ordinary bisimulation $\cB$. An equivalence relation
$\cB \subseteq \cC \times \cC$ is a \emph{strong ordinary bisimulation} iff, whenever $(P_{1}, P_{2}) \in
\cB$, then for all activity types $\alpha \in \cA$ and for all equivalence classes $C \in \cC / \cB$:
\[
q[P_{1}, C, \alpha] \: = \: q[P_{2}, C, \alpha]
\]
	\end{defi}

	\begin{thm}[{\cite{hillston:book}}]\label{thm:strong_ordinary_bisim_induces}
For all $P \in \cC_{\rm pc}$, $\sim_{\rm o}$ induces a partition of $\mathit{ds}(P)$ that is an ordinary
lumping.
	\end{thm}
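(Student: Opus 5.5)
The plan is to take the restriction of $\sim_{\rm o}$ to $\mathit{ds}(P)$ and show directly that it satisfies Definition~\ref{def:ordinary_lump}, by summing the type-indexed rate conditions over all activity types.

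First, I will establish that $\sim_{\rm o}$ is itself a strong ordinary bisimulation. As recalled at the beginning of Section~\ref{sec:bisim}, the transitive closure of a union of strong ordinary bisimulations is again one, so $\sim_{\rm o}$ is an equivalence relation over $\cC$. Its restriction $\mathcal{R}$ to $\mathit{ds}(P)$ is an equivalence relation over $\mathit{ds}(P)$, and every class $C \in \mathit{ds}(P)/\mathcal{R}$ has the form $C = \hat{C} \cap \mathit{ds}(P)$ for a unique $\hat{C} \in \cC/{\sim_{\rm o}}$. Every derivative of a state in $\mathit{ds}(P)$ lies in $\mathit{ds}(P)$. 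Hence for any $P_i \in \mathit{ds}(P)$ and any activity type $\alpha$ we have $q[P_i, C, \alpha] = q[P_i, \hat{C}, \alpha]$.

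Next, fix distinct classes $C \neq C'$ of $\mathcal{R}$ and states $P_1, P_2 \in C$. Since $P_1 \sim_{\rm o} P_2$, the bisimulation condition gives $q[P_1, \hat{C}', \alpha] = q[P_2, \hat{C}', \alpha]$ for every $\alpha \in \cA$. Because $P$ is performance closed, all rates are real numbers, so these sums are well defined. I then sum over $\alpha$. For $P_j \notin C$, and in particular for $P_j \in C'$ (so $P_j \neq P_1, P_2$), the generator entry is $q(P_1, P_j) = \sum_{\alpha \in \cA} q(P_1, P_j, \alpha)$. Therefore
\[
\sum_{P' \in C'} q(P_1, P') \: = \: \sum_{\alpha \in \cA} q[P_1, C', \alpha] \: = \: \sum_{\alpha \in \cA} q[P_2, C', \alpha] \: = \: \sum_{P' \in C'} q(P_2, P').
\]
Only finitely many $\alpha$ contribute nonzero terms, since each state has finitely many outgoing transitions. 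This is exactly the outgoing condition of Definition~\ref{def:ordinary_lump}.

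The main obstacle is bookkeeping rather than a real difficulty. One point is that $C \neq C'$ guarantees no self-loop contributions enter the sums. Another is that the entries of $\mathbf{Q}$ aggregate all activity types, including $\tau$, while the bisimulation condition is stated per type. Summing the per-type equalities resolves this second mismatch.
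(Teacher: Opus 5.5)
Your proposal is correct and takes the same approach as the paper. The paper defers this theorem to Hillston, but its proof of the weak analogue, Theorem~\ref{thm:weak_ordinary_bisim_induces}, uses exactly your argument: take the per-type outgoing equalities towards each class $C' \neq C$ and sum them over all activity types. Your extra step of identifying each class of the restriction to $\mathit{ds}(P)$ with $\hat{C} \cap \mathit{ds}(P)$ for a $\sim_{\rm o}$-class $\hat{C}$ is sound bookkeeping that the paper leaves implicit.
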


As a consequence, in case of ergodicity the aggregate CTMC satisfies the property that the steady-state
probability of each macro-state is equal to the sum of the steady-state probabilities of the original states
it contains. Note that, unlike ordinary lumpability, strong ordinary bisimilarity compares cumulative
conditional transition rates also towards the same class to which $P_{1}$ and $P_{2}$ belong.

The above bisimilarity deals with all the activity types in the same fashion. However, the unobservable type
$\tau$ deserves a special treatment: $\tau$-activities performed \emph{within} an equivalence class amount
to invisible changes of state and should not concur to distinguish components in that class. For this very
reason, the notion of \emph{lumpable bisimilarity} was introduced in~\cite{marin:valuetools13}: the
conditions on $\tau$-activities are only imposed towards the classes that do not contain the compared
components. We point out that this is similar to what happens with the weak probabilistic bisimilarities
examined in~\cite{BKHW05} in the setting of action-labelled discrete-time Markov chains, whose transitions
are labelled with probabilities.

	\begin{defi}[Weak ordinary bisimilarity]\label{def:weak_ordinary_bisim}
We say that $P_{1}, P_{2} \in \cC$ are \emph{weakly ordinary bisimilar}, written $P_{1} \approx_{\rm o}
P_{2}$, iff $(P_{1}, P_{2}) \in \cB$ for some weak ordinary bisimulation $\cB$. An equivalence relation $\cB
\subseteq \cC \times \cC$ is a \emph{weak ordinary bisimulation} iff, whenever $(P_{1}, P_{2}) \in \cB$,
then for all activity types $\alpha \in \cA$ and for all equivalence classes $C \in \cC / \cB$ such that:
		\begin{itemize}
\item either $\alpha \neq \tau$

\item or $\alpha = \tau$ and $P_{1}, P_{2} \notin C$
		\end{itemize}
it holds that:
\[
q[P_{1}, C, \alpha] \: = \: q[P_{2}, C, \alpha]
\]
	\end{defi}

As we will exemplify in Section~\ref{sec:taxonomy}, weak ordinary bisimilarity is coarser than strong
ordinary bisimilarity, because the former abstracts from activities of type $\tau$ among components
belonging to the same equivalence class.

We show that weak ordinary bisimilarity retains the fundamental property of inducing an ordinary lumping on
the underlying CTMC, which aggregates more than the one induced by strong ordinary bisimilarity.

	\begin{thm}\label{thm:weak_ordinary_bisim_induces}
For all $P \in \cC_{\rm pc}$, $\approx_{\rm o}$ induces a partition of $\mathit{ds}(P)$ that is an ordinary
lumping.
	\end{thm}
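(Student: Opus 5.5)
The plan is to show directly that the restriction of $\approx_{\rm o}$ to $\mathit{ds}(P)$ satisfies the outgoing condition of Definition~\ref{def:ordinary_lump}. First, I will note that $\approx_{\rm o}$ is itself a weak ordinary bisimulation. This follows from the fact, recalled at the beginning of this section, that the transitive closure of the union of weak ordinary bisimulations is again one. Consequently $\approx_{\rm o}$ is an equivalence relation over $\cC$, and its restriction to $\mathit{ds}(P)$ induces a partition of $\mathit{ds}(P)$. Each class of this partition has the form $C \cap \mathit{ds}(P)$ for some $C \in \cC / {\approx_{\rm o}}$.

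Next, I fix two distinct classes $D = C \cap \mathit{ds}(P)$ and $D' = C' \cap \mathit{ds}(P)$, with $C \neq C'$, and two states $P_1, P_2 \in D$. The transitions of $P_1$ and $P_2$ lead only to derivatives that lie in $\mathit{ds}(P)$. For every state $P_i$ and every class $C'$ with $P_i \notin C'$, I can therefore write
\[
\sum_{P' \in D'} q(P_i, P') \: = \: \sum_{\alpha \in \cA} q[P_i, C', \alpha].
\]
Here the left-hand side is the off-diagonal entry sum of $\mathbf{Q}$, which excludes self-loops. Excluding self-loops loses nothing, because $P_i \notin C'$ ensures that no transition into $C'$ is a self-loop. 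The sum over $\alpha$ is effectively finite, since each state has finitely many outgoing transitions. Also, no $\top$ rates appear, because $P \in \cC_{\rm pc}$, so all derivatives are performance closed.

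I then apply Definition~\ref{def:weak_ordinary_bisim} to the pair $(P_1, P_2) \in {\approx_{\rm o}}$ and the class $C'$, treating each activity type $\alpha$ in turn.
\begin{itemize}
\item For $\alpha \neq \tau$, the definition gives $q[P_1, C', \alpha] = q[P_2, C', \alpha]$ directly.
\item For $\alpha = \tau$, the definition applies because $P_1, P_2 \in C$ and $C \neq C'$, so $P_1, P_2 \notin C'$. Hence the equality $q[P_1, C', \tau] = q[P_2, C', \tau]$ holds as well.
\end{itemize}
Summing these equalities over $\alpha$ yields $\sum_{P' \in D'} q(P_1, P') = \sum_{P' \in D'} q(P_2, P')$, which is exactly the outgoing condition.

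The main subtlety is the $\tau$ case. Weak ordinary bisimulation imposes no constraint on $\tau$-rates into the compared components' own class. This is harmless only because ordinary lumpability quantifies over $C \neq C'$, so the own class never needs to be checked. A second point needing care is the bookkeeping between classes of $\cC$ and classes of $\mathit{ds}(P)$. Rates are summed over all of $C'$, but only the states in $C' \cap \mathit{ds}(P)$ can actually be reached, so the two sums coincide.
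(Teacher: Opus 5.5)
Your proposal is correct and follows the same approach as the paper's proof. Both derive the per-type equalities $q[P_1, C', \alpha] = q[P_2, C', \alpha]$ for every class $C' \neq C$ from the weak ordinary bisimulation condition, with the $\tau$ case covered because $P_1, P_2 \notin C'$, and then sum over activity types to obtain the outgoing condition of ordinary lumpability. Your extra bookkeeping ($\approx_{\rm o}$ being itself a bisimulation, self-loops, and restriction to $\mathit{ds}(P)$) only makes explicit what the paper leaves implicit.
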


	\begin{proof}
Consider the partition induced by $\approx_{\rm o}$ over $\mathit{ds}(P)$ and two arbitrary components
$P_{1}, P_{2} \in \mathit{ds}(P) \cap C$ for some equivalence class $C \in \cC / {\approx_{\rm o}}$. From
the outgoing condition in the definition of $\approx_{\rm o}$, for all equivalence classes $C' \in \cC /
{\approx_{\rm o}}$ with $C \neq C'$ we derive that $q[P_{1}, C', \alpha] = q[P_{2}, C', \alpha]$ for all
$\alpha \in \mathcal{A}$ (in the case $\alpha = \tau$ note that $C \neq C'$ implies $P_{1}, P_{2} \notin
C'$). Since the cumulative transition rate from $P' \in \cC_{\rm pc} \setminus C'$ to $C'$ is given by
$q(P', C') = \sum_{\alpha \in \mathcal{A}} q[P', C', \alpha]$, we have that for all $C' \in \cC /
{\approx_{\rm o}}$ with $C \neq C'$:
\[
q(P_{1}, C') \: = \: \sum_{\alpha \in \mathcal{A}} q[P_{1}, C', \alpha] \: = \: \sum_{\alpha \in
\mathcal{A}} q[P_{2}, C', \alpha] \: = \: q(P_{2}, C')
\]
which is precisely the condition of Definition~\ref{def:ordinary_lump}.
\qedhere
	\end{proof}

%
\subsection{From Outgoing to Incoming Cumulative Rates}
\label{subsec:from_out_to_in}
%

The main difference between bisimilarities based on exact lumpability and bisimilarities based on ordinary
lumpability is the consideration of incoming cumulative rates of the form $q[C, P, \alpha]$ instead of
outgoing cumulative rates of the form $q[P, C, \alpha]$. In the latter, the components that are naturally
considered in $C$ are those belonging to $\mathit{ds}(P)$. In the former, further components may come into
play, unless we restrict ourselves to components $P$ whose underlying CTMC is ergodic. The reason is that
the state corresponding to $P$ can be reached in one transition by infinitely many states corresponding to
components not necessarily belonging to $\mathit{ds}(P)$: think, e.g., of $a \, . \, P$ for all $a \in \cA
\times \cR$.

If we really consider those further components, then $q[C, P, \alpha]$ can only take two values, which are
$0$ in the case that no component in $C$ reaches $P$ via $\alpha$ and $\infty$ otherwise. For instance, if
$(\alpha, r) \, . \, P \in C$ then $(\alpha, r / n) \, . \, P + \ldots + (\alpha, r / n) \, . \, P \in C$ as
well for all $n \in \mathbb{N}_{\ge 2}$, where the last component has $n$ summands. Following the discussion
after Proposition~\ref{prop:exact_sufficient}, in the definition of exact bisimilarity we should require
$q[P_{1}, \alpha] = q[P_{2}, \alpha]$ and $q[C, P_{1}, \alpha] = q[C, P_{2}, \alpha]$. However, this would
boil down to requiring just $q[P_{1}, \alpha] = q[P_{2}, \alpha]$.

To avoid that as well as the limitation to components $P$ whose underlying CTMC is ergodic, i.e., to define
a more interesting bisimilarity in the spirit of exact lumpability over the entire set $\cC$ of components,
the incoming cumulative rates that should be considered are those of the form $q[C \cap \mathit{ds}(P), P,
\alpha]$. In this way, we rule out all components in $C$ that reach $P$ but are not reachable from $P$.

%
\subsection{Exact Bisimilarities}
\label{subsec:exact_bisim}
%

Strong and weak ordinary bisimilarities can be interpreted as the process algebraic counterparts of ordinary
lumpability: their defining conditions constrain the flows \emph{leaving} equivalent components. It is then
natural to ask for the counterparts of exact lumpability, i.e., bisimilarities constraining the
\emph{incoming} flows. These are known in the literature as \emph{exact performance equivalence}, introduced
in~\cite{Buc94b}, and \emph{exact equivalence}, introduced in~\cite{inf18}. We respectively rename them
strong and weak exact bisimilarities and redefine them by taking into account the remarks in
Section~\ref{subsec:from_out_to_in}.

	\begin{defi}[Strong exact bisimilarity]\label{def:strong_exact_bisim}
We say that $P_{1}, P_{2} \in \cC$ are \emph{strongly exact bisimilar}, written $P_{1} \sim_{\rm e} P_{2}$,
iff $(P_{1}, P_{2}) \in \cB$ for some strong exact bisimulation $\cB$. An equivalence relation $\cB
\subseteq \cC \times \cC$ is a \emph{strong exact bisimulation} iff, whenever $(P_{1}, P_{2}) \in \cB$, then
for all activity types $\alpha \in \cA$ and for equivalence classes $C \in \cC / \cB$:
\[
q[P_{1}, \alpha] \: = \: q[P_{2}, \alpha] \qquad
q[C \cap \mathit{ds}(P_{1}), P_{1}, \alpha] \: = \: q[C \cap \mathit{ds}(P_{2}), P_{2}, \alpha]
\]
	\end{defi}

The first condition requires equivalent components to execute activities of the same type with the same
total transition rate; the second one requires them to receive the same flow from every equivalence class.
Notice that the first condition cannot be dropped. Indeed, by summing the second condition over all the
equivalence classes, one only obtains the equality of the total incoming transition rates of $P_{1}$ and
$P_{2}$, which says nothing about the rates of the activities they can execute. This mirrors, at the level
of PEPA components, the situation discussed after Proposition~\ref{prop:exact_sufficient}: there, the
equality of the total exit rates had to be imposed as a separate condition, with the first condition above
playing precisely that role.

We demonstrate that strong exact bisimilarity induces an exact lumping on the underlying CTMC, thus allowing
the steady-state probability distribution of the original chain to be recovered from the aggregate one. This
holds over the set $\cC_{\rm pc, sf}$ of performance closed components whose derivation graphs are
\emph{selfloop free}. The reason is that rates labeling transitions from a state to itself are taken into
account by incoming cumulative rates of the form $q[C \cap \mathit{ds}(P), P, \alpha]$, while they are
ignored by the infinitesimal generator of the CTMC underlying $P$. For instance, if we consider the
derivation graph on the left in the figure of the forthcoming Example~\ref{ex:exact_vs_ordinary}, then
$A_{5} \sim_{\rm e} A'_{5}$ but the partition $\{ \{ A_{5}, A'_{5} \}, \{ A''_{5} \} \}$ induced over
$\mathit{ds}(A_{5})$ is not an exact lumping because $q(\{ A_{5}, A'_{5} \}, A_{5}) = 5 \neq 1 = q(\{ A_{5},
A'_{5} \}, A'_{5})$. A similar issue does not arise in the case of ordinary bisimilarities because ordinary
lumpability does not consider the rates of transitions between states in the same equivalence class.

	\begin{thm}\label{thm:strong_exact_bisim_induces}
For all $P \in \cC_{\rm pc, sf}$, $\sim_{\rm e}$ induces a partition of $\mathit{ds}(P)$ that is an exact
lumping.
	\end{thm}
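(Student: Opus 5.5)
The plan is to verify Definition~\ref{def:exact_lump} directly on $\mathit{ds}(P)$. The natural tool is the sufficient condition of Proposition~\ref{prop:exact_sufficient}, but I expect to argue on the generator entries themselves, so that the within-class case $C = C'$ (diagonal included) is handled uniformly with the other cases.

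First I fix two observations.
\begin{itemize}
\item For $P_1 \in \mathit{ds}(P)$ we have $\mathit{ds}(P_1) \subseteq \mathit{ds}(P)$. Moreover, $\sim_{\rm e}$ is closed under derivatives reachable from $P$ in the relevant sense. The subtlety is that for $P_1, P_2 \in \mathit{ds}(P)$ the bisimulation condition refers to $C \cap \mathit{ds}(P_1)$ and $C \cap \mathit{ds}(P_2)$ rather than $C \cap \mathit{ds}(P)$.
\item Selfloop freedom gives $q(P_i, P_i, \alpha) = 0$ for every state, so $q(P_i,P_i) = -\sum_{\alpha} q[P_i,\alpha]$.
\end{itemize}

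Next, take $P_1 \sim_{\rm e} P_2$ in $\mathit{ds}(P)$ and a class $C'$ of $\sim_{\rm e}$. I compute
\[
\sum_{s' \in C' \cap \mathit{ds}(P)} q(s', P_1)
\]
by cases.

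If $P_1 \notin C'$, this sum is $\sum_{\alpha} q[C' \cap \mathit{ds}(P), P_1, \alpha]$.

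If $P_1 \in C'$, the diagonal entry contributes as well, giving
\[
\sum_{\alpha} q[C' \cap \mathit{ds}(P), P_1, \alpha] - \sum_\alpha q[P_1,\alpha],
\]
where I use that there are no selfloops, so $q(P_1,P_1,\alpha)=0$ inside the incoming sum. Since $P_1 \sim_{\rm e} P_2$, both lie in the same class, so the case split is the same for both components. The outgoing totals $q[P_1,\alpha] = q[P_2,\alpha]$ agree by the first clause of Definition~\ref{def:strong_exact_bisim}. Summing over $\alpha$ then reduces everything to showing
\[
q[C' \cap \mathit{ds}(P), P_i, \alpha] = q[C' \cap \mathit{ds}(P_i), P_i, \alpha] \quad \text{for } i = 1,2,
\]
after which the second clause of the definition finishes the argument.

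The main obstacle is this last equality. States in $\mathit{ds}(P)\setminus \mathit{ds}(P_i)$ may reach $P_i$ in one step. For example, $P$ itself reaches $P_i$, but $P$ need not be reachable from $P_i$.

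My first attempt is to argue that any $s' \in \mathit{ds}(P)$ with a transition to $P_i$ forms, together with $P_i$, a reachability structure compatible with the definition. More concretely, I would exploit the fact that the same restriction is applied to every member of the class. If that fails in general, I would fall back on the reading that the incoming flows in the exact lumping are taken, as in Section~\ref{subsec:from_out_to_in}, over predecessors reachable from the target. This holds outright when the chain underlying $P$ is irreducible, since then $\mathit{ds}(P_i) = \mathit{ds}(P)$ for every $P_i \in \mathit{ds}(P)$. In that case the equality is immediate, and I would state the argument in that setting, noting that this is where the restriction to $\mathit{ds}$ does its work.

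With the incoming sums equal for every class $C'$, including $C' = C$ thanks to the diagonal bookkeeping above, the partition induced on $\mathit{ds}(P)$ satisfies Definition~\ref{def:exact_lump}.
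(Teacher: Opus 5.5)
Your reduction is the paper's own argument: sum the per-type incoming equalities of Definition~\ref{def:strong_exact_bisim} over $\alpha$ and read off Definition~\ref{def:exact_lump}. Your treatment of the class containing $P_1$ and $P_2$ is more careful than the paper's proof. There you add the diagonal entry $-\sum_\alpha q[P_i,\alpha]$ and match it through the first clause, whereas the paper handles $C' = C$ like every other class and never uses that clause.

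The gap is the step you flagged yourself, and the paper does not bridge it either. It simply asserts $q(C',P') = \sum_{\alpha} q[C' \cap \mathit{ds}(P'), P', \alpha]$, whereas the generator of the chain underlying $P$ sums over $C' \cap \mathit{ds}(P)$. Your ``first attempt'' cannot succeed, because read literally the statement is false for chains that are not irreducible.

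Take $P = (\alpha, r) \, . \, (\alpha, r) \, . \, \underline{0}$ and $Q = (\alpha, r) \, . \, \underline{0}$ with $\alpha \neq \tau$. Both have total exit rate $r$ via $\alpha$ and $0$ via every other type. All their incoming terms vanish by Lemma~\ref{lem:incoming_rate_nr}. So the equivalence whose only non-singleton class is $\{ P, Q \}$ is a strong exact bisimulation, while $\underline{0}$ is related to neither. Hence $P \in \cC_{\rm pc, sf}$ and $\sim_{\rm e}$ induces $\{ \{ P, Q \}, \{ \underline{0} \} \}$ on $\mathit{ds}(P)$. Yet for $C = C' = \{ P, Q \}$:
\[
q(P, P) + q(Q, P) \: = \: -r \: \neq \: 0 \: = \: q(P, Q) + q(Q, Q)
\]
This is exactly your obstacle: $P \in \mathit{ds}(P) \setminus \mathit{ds}(Q)$ feeds $Q$ at rate $r$. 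The off-diagonal sums also differ ($0$ versus $r$), so no convention about the diagonal rescues the statement.

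Your fallback is therefore not an optional reinterpretation but a hypothesis the theorem needs. When the chain underlying $P$ is irreducible, $\mathit{ds}(P_i) = \mathit{ds}(P)$ for every $P_i \in \mathit{ds}(P)$, and your argument is then complete and correct for both $C' \neq C$ and $C' = C$. Alternatively, you can relativise the incoming sums of Definition~\ref{def:exact_lump} to $\mathit{ds}(P_i)$, which is what the paper's proof tacitly does. State one of these explicitly and use the example above to show it cannot be dropped. Discard both the vague ``first attempt'' and the remark about $\sim_{\rm e}$ being closed under derivatives, which plays no role.

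Two side remarks:
\begin{itemize}
\item The same example breaks Theorem~\ref{thm:weak_exact_bisim_induces} as stated, since $\approx_{\rm e}$ induces the same partition of $\mathit{ds}(P)$.
\item Your diagonal bookkeeping shows that selfloop freedom is actually inessential here. A loop rate enters both $q[C \cap \mathit{ds}(P_i), P_i, \alpha]$ and $q[P_i, \alpha]$ and cancels in their difference.
\end{itemize}
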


	\begin{proof}
Consider the partition induced by $\sim_{\rm e}$ over $\mathit{ds}(P)$ and two arbitrary components $P_{1},
P_{2} \in \mathit{ds}(P) \cap C$ for some equivalence class $C \in \cC / {\sim_{\rm e}}$. From the incoming
condition in the definition of $\sim_{\rm e}$, for all equivalence classes $C' \in \cC / {\sim_{\rm e}}$ we
derive that $q[C' \cap \mathit{ds}(P_{1}), P_{1}, \alpha] = q[C' \cap \mathit{ds}(P_{2}), P_{2}, \alpha]$
for all $\alpha \in \mathcal{A}$. Since the cumulative transition rate from $C'$ to $P' \in \cC_{\rm pc,
sf}$ is given by $q(C', P') = \sum_{\alpha \in \mathcal{A}} q[C' \cap \mathit{ds}(P'), P', \alpha]$, we have
that for all $C' \in \cC / {\sim_{\rm e}}$:
\[
q(C', P_{1}) \: = \: \sum_{\alpha \in \mathcal{A}} q[C' \cap \mathit{ds}(P_{1}), P_{1}, \alpha] \: = \:
\sum_{\alpha \in \mathcal{A}} q[C' \cap \mathit{ds}(P_{2}), P_{2}, \alpha] \: = \: q(C', P_{2})
\]
which is precisely the condition of Definition~\ref{def:exact_lump}.
\qedhere
	\end{proof}

As in the duality between strong and weak ordinary bisimilarities, strong exact bisimilarity does not handle
$\tau$-activities in a special way, which is instead fundamental to define useful behavioural equivalences.
For this very reason, in~\cite{valuetools25-exact} the authors introduced the notion of \emph{exact lumpable
bisimilarity}. Its definition accomplishes the endeavour of both inducing an exact lumping on the underlying
CTMC, as we will prove, and allowing some unobservable behaviours to be ignored.

	\begin{defi}[Weak exact bisimilarity]\label{def:weak_exact_bisim}
We say that $P_{1}, P_{2} \in \cC$ are \emph{weakly exact bisimilar}, written $P_{1} \approx_{\rm e} P_{2}$,
iff $(P_{1}, P_{2}) \in \cB$ for some weak exact bisimulation $\cB$. An equivalence relation $\cB \subseteq
\cC \times \cC$ is a \emph{weak exact bisimulation} iff, whenever $(P_{1}, P_{2}) \in \cB$, then for all
activity types $\alpha \in \cA$ and for all equivalence classes $C \in \cC / \cB$:
		\begin{itemize}
\item either $\alpha \neq \tau$ and:
\[
q[P_{1}, \alpha] \: = \: q[P_{2}, \alpha] \qquad
q[C \cap \mathit{ds}(P_{1}), P_{1}, \alpha] \: = \: q[C \cap \mathit{ds}(P_{2}), P_{2}, \alpha]
\]

\item or $\alpha = \tau$ and:
			\begin{itemize}
\item either $P_{1}, P_{2} \notin C$ and:
\[
q[C \cap \mathit{ds}(P_{1}), P_{1}, \tau] \: = \: q[C \cap \mathit{ds}(P_{2}), P_{2}, \tau]
\]

\item or $P_{1}, P_{2} \in C$ and:
\[
q[C \cap \mathit{ds}(P_{1}), P_{1}, \tau] - q[P_{1}, \tau] \: = \: q[C \cap \mathit{ds}(P_{2}), P_{2}, \tau]
- q[P_{2}, \tau]
\]
			\end{itemize}
		\end{itemize}
	\end{defi}

The special treatment of $\tau$-activities (see the comments after Proposition~\ref{prop:exact_sufficient})
relaxes the constraints on the internal flows exchanged within an equivalence class, while preserving the
distinguishing feature of exact lumpability: the equiprobability, at steady-state, of the states belonging
to the same class. Thus, weak exact bisimilarity retains the fundamental property of the strong version.

	\begin{thm}\label{thm:weak_exact_bisim_induces}
For all $P \in \cC_{\rm pc, sf}$, $\approx_{\rm e}$ induces a partition of $\mathit{ds}(P)$ that is an exact
lumping.
	\end{thm}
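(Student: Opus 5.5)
We follow the proof of Theorem~\ref{thm:strong_exact_bisim_induces}, with extra care for the special treatment of $\tau$-activities within a class. Consider the partition induced by $\approx_{\rm e}$ over $\mathit{ds}(P)$ and two components $P_{1}, P_{2} \in \mathit{ds}(P) \cap C$ for some class $C \in \cC / {\approx_{\rm e}}$. We check the condition of Definition~\ref{def:exact_lump} on the generator $\mathbf{Q}$. There $q(s',s)$ for $s' \neq s$ is an off-diagonal rate, and the term $s' = s$ contributes the diagonal entry $-q(s)$. We treat the case $C' \neq C$ and the case $C' = C$ separately.

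For $C' \neq C$, the weak exact bisimulation conditions give $q[C' \cap \mathit{ds}(P_{i}), P_{i}, \alpha]$ equal for $i = 1, 2$. For $\alpha \neq \tau$ this comes from the first clause, and for $\alpha = \tau$ from the subcase $P_{1}, P_{2} \notin C'$. We then need that $q[C' \cap \mathit{ds}(P_{i}), P_{i}, \alpha]$ agrees with the CTMC incoming flow from $C' \cap \mathit{ds}(P)$. This holds because $\mathit{ds}(P_{i}) \subseteq \mathit{ds}(P)$, and a state of $\mathit{ds}(P)$ reaching $P_{i}$ in one step is reachable from $P_{i}$ whenever the relevant part is irreducible. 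I expect this to be the delicate point: strictly, one must argue that the predecessors of $P_i$ within $\mathit{ds}(P)$ lie in $\mathit{ds}(P_i)$, or else observe that the paper's convention in the proof of Theorem~\ref{thm:strong_exact_bisim_induces} identifies these quantities. I would adopt that same identification $q(C', P') = \sum_{\alpha} q[C' \cap \mathit{ds}(P'), P', \alpha]$, as done there. Summing over $\alpha$ then gives $q(C', P_{1}) = q(C', P_{2})$.

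For $C' = C$, which is the main step, the column sum to compare is
\[
\sum_{s' \in C} q(s', P_{i}) \: = \: \sum_{\alpha \in \cA} q[C \cap \mathit{ds}(P_{i}), P_{i}, \alpha] - \sum_{\alpha \in \cA} q[P_{i}, \alpha] .
\]
This uses selfloop freedom: the term $s' = P_i$ contributes $q(P_i,P_i) = -q(P_i)$, $q(P_i) = \sum_\alpha q[P_i,\alpha]$, and no selfloop rates appear in the incoming sums. We split the right-hand side into $\alpha \neq \tau$ and $\alpha = \tau$. For each $\alpha \neq \tau$, both $q[C \cap \mathit{ds}(P_{i}), P_{i}, \alpha]$ and $q[P_{i}, \alpha]$ coincide for $i=1,2$ by the first clause. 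For $\tau$, the subcase $P_{1}, P_{2} \in C$ gives exactly equality of the difference $q[C \cap \mathit{ds}(P_{i}), P_{i}, \tau] - q[P_{i}, \tau]$. Summing yields $\sum_{s' \in C} q(s', P_{1}) = \sum_{s' \in C} q(s', P_{2})$.

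Combining the two cases gives the incoming condition for all $C'$, including $C' = C$, which is precisely Definition~\ref{def:exact_lump}. Note that the per-type exit rates for $\tau$ need not match; only the combination with the internal inflow matters, consistent with the remark after Proposition~\ref{prop:exact_sufficient}.
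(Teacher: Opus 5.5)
Your two-case argument follows the paper's proof exactly. However, the step you flag is a genuine gap, and it cannot be closed by ``adopting the identification'' $q(C',P') = \sum_{\alpha} q[C' \cap \mathit{ds}(P'), P', \alpha]$. The paper's proof asserts this same equality without justification.

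In Definition~\ref{def:exact_lump} the sum $\sum_{s' \in C'} q(s', P_i)$ ranges over $C' \cap \mathit{ds}(P)$. The bisimulation, by contrast, only constrains predecessors of $P_i$ that lie in $\mathit{ds}(P_i)$, and a state of $\mathit{ds}(P)$ with a transition into $P_i$ need not be reachable from $P_i$. Your display for $C' = C$ makes the same identification silently; self-loop freedom alone does not justify it.

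In fact the statement fails in the claimed generality, in both cases:
\begin{itemize}
\item Case $C' = C$. Take $P = (\alpha, r) \, . \, X$ with $X = (\alpha, r) \, . \, \underline{0}$ and $\alpha \neq \tau$. The graph is acyclic, so all incoming terms vanish, and $P$ and $X$ have the same exit rate for every type. Hence the equivalence relation whose only non-singleton class is $\{ P, X \}$ is a weak (even strong) exact bisimulation. Yet $q(P,P) + q(X,P) = -r \neq 0 = q(P,X) + q(X,X)$, so the induced partition of $\mathit{ds}(P)$ is not an exact lumping.
\item Case $C' \neq C$. Take $(\beta, s) \, . \, A_{12}$, with $A_{12}$ as in Example~\ref{ex:no_congr_prefix_choice_strict_bisim} and $\beta \notin \{ \alpha, \tau \}$. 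We have $A_{12} \approx_{\rm e} (\alpha, r) \, . \, A_{12}$. But the class $\{ (\beta, s) \, . \, A_{12} \}$ sends rate $s$ into $A_{12}$ and rate $0$ into $(\alpha, r) \, . \, A_{12}$.
\end{itemize}
Both components are in $\cC_{\rm pc, sf}$.

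The missing hypothesis is the one you mention in passing: the derivation graph of $P$ must be strongly connected (for finite $P$, the underlying CTMC is irreducible). Your weaker-looking requirement, that every predecessor of $P_i$ in $\mathit{ds}(P)$ lies in $\mathit{ds}(P_i)$ for all $P_i \in \mathit{ds}(P)$, is actually equivalent to it. To see this, propagate the requirement backwards along a path from $P$ to $P_i$; this yields $P \in \mathit{ds}(P_i)$.

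Under that hypothesis $\mathit{ds}(P_i) = \mathit{ds}(P)$. Then $q[C' \cap \mathit{ds}(P_i), P_i, \alpha]$ is exactly the $\alpha$-part of the CTMC flow from $C'$ into $P_i$, and both of your cases go through verbatim, with self-loop freedom needed only for $C' = C$. Without it the proof cannot be completed, because the claim is false.
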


	\begin{proof}
Consider the partition induced by $\approx_{\rm e}$ over $\mathit{ds}(P)$ and two arbitrary components
$P_{1}, P_{2} \in \mathit{ds}(P) \cap C$ for some equivalence class $C \in \cC / {\approx_{\rm e}}$:

		\begin{itemize}
\item From the incoming condition in the definition of $\approx_{\rm e}$, for all equivalence classes $C'
\in \cC / {\approx_{\rm e}}$ with $C \neq C'$ we derive that $q[C' \cap \mathit{ds}(P_{1}), P_{1}, \alpha] =
q[C' \cap \mathit{ds}(P_{2}), P_{2}, \alpha]$ for all $\alpha \in \mathcal{A}$ (in the case $\alpha = \tau$
note that $C \neq C'$ implies $P_{1}, P_{2} \notin C'$). Since the cumulative transition rate from $C'$ to
$P' \in \cC_{\rm pc} \setminus C'$ is given by $q(C', P') = \sum_{\alpha \in \mathcal{A}} q[C' \cap
\mathit{ds}(P'), P', \alpha]$, we have that for all $C' \in \cC / {\approx_{\rm e}}$ with $C \neq C'$:
\[
q(C', P_{1}) \: = \: \sum_{\alpha \in \mathcal{A}} q[C' \cap \mathit{ds}(P_{1}), P_{1}, \alpha] \: = \:
\sum_{\alpha \in \mathcal{A}} q[C' \cap \mathit{ds}(P_{2}), P_{2}, \alpha] \: = \: q(C', P_{2})
\]

\item From the other conditions in the definition of $\approx_{\rm e}$ and $P_{1}, P_{2} \in \cC_{\rm pc,
sf}$, we derive that:
\[\begin{array}{rcl}
q(C, P_{1}) & \!\! = \!\! & q(C \setminus \{ P_{1} \}, P_{1}) + q(P_{1}, P_{1}) \\
& \!\! = \!\! & (\sum_{\alpha \in \mathcal{A} \setminus \{ \tau \}} q[C \cap \mathit{ds}(P_{1}), P_{1},
\alpha] + q[C \cap \mathit{ds}(P_{1}), P_{1}, \tau]) \\
& & - \, (\sum_{\alpha \in \mathcal{A} \setminus \{ \tau \}} q[P_{1}, \alpha] + q[P_{1}, \tau]) \\
& \!\! = \!\! & \sum_{\alpha \in \mathcal{A} \setminus \{ \tau \}} (q[C \cap \mathit{ds}(P_{1}), P_{1},
\alpha] - q[P_{1}, \alpha]) \\
& & + \, (q[C \cap \mathit{ds}(P_{1}), P_{1}, \tau] - q[P_{1}, \tau]) \\
& \!\! = \!\! & \sum_{\alpha \in \mathcal{A} \setminus \{ \tau \}} (q[C \cap \mathit{ds}(P_{2}), P_{2},
\alpha] - q[P_{2}, \alpha]) \\
& & + \, (q[C \cap \mathit{ds}(P_{1}), P_{2}, \tau] - q[P_{2}, \tau]) \\
& \!\! = \!\! & (\sum_{\alpha \in \mathcal{A} \setminus \{ \tau \}} q[C \cap \mathit{ds}(P_{2}), P_{2},
\alpha] + q[C \cap \mathit{ds}(P_{2}), P_{2}, \tau]) \\
& & - \, (\sum_{\alpha \in \mathcal{A} \setminus \{ \tau \}} q[P_{2}, \alpha] + q[P_{2}, \tau]) \\
& \!\! = \!\! & q(C \setminus \{ P_{2} \}, P_{2}) + q(P_{2}, P_{2}) \\
& \!\! = \!\! & q(C, P_{2}) \\
\end{array}\]
		\end{itemize}
The two equalities above yield precisely the condition of Definition~\ref{def:exact_lump}.
\qedhere
	\end{proof}

%
\subsection{Strict Bisimilarities}
\label{subsec:strict_bisim}
%

In parallel with what we did for CTMCs in Section~\ref{sec:ctmc}, we now discuss the strict versions of
strong and weak bisimilarities. They are obtained by requiring at the same time the outgoing conditions of
ordinary bisimilarities and the incoming conditions of exact bisimilarities. The strong version was
originally introduced in~\cite{BR23} over a reversible stochastic process algebra. In the non-reversible
setting of PEPA, we have to take into account the remarks in Section~\ref{subsec:from_out_to_in}.

	\begin{defi}[Strong strict bisimilarity]\label{def:strong_strict_bisim}
We say that $P_{1}, P_{2} \in \cC$ are \emph{strongly strict bisimilar}, written $P_{1} \sim_{\rm s} P_{2}$,
iff $(P_{1}, P_{2}) \in \cB$ for some strong strict bisimulation $\cB$. An equivalence relation $\cB
\subseteq \cC \times \cC$ is a \emph{strong strict bisimulation} iff, whenever $(P_{1}, P_{2}) \in \cB$,
then for all activity types $\alpha \in \cA$ and for all equivalence classes $C \in \cC / \cB$:
\[
q[P_{1}, C, \alpha] \: = \: q[P_{2}, C, \alpha] \qquad
q[C \cap \mathit{ds}(P_{1}), P_{1}, \alpha] \: = \: q[C \cap \mathit{ds}(P_{2}), P_{2}, \alpha]
\]
	\end{defi}

Notice that the condition $q[P_{1}, \alpha] = q[P_{2}, \alpha]$ of Definition~\ref{def:strong_exact_bisim}
needs not be required explicitly. Since $q[P, \alpha] = \sum_{C \in \cC / \cB} q[P, C, \alpha]$, that
condition follows from the first condition above by summing over all equivalence classes.

	\begin{thm}\label{thm:strong_strict_bisim_induces}
For all $P \in \cC_{\rm pc, sf}$, $\sim_{\rm s}$ induces a partition of $\mathit{ds}(P)$ that is a strict
lumping.
	\end{thm}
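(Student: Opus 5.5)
By Definition~\ref{def:strict_lump}, a strict lumping is a partition that is both an ordinary lumping and an exact lumping. The plan is therefore to establish the two properties separately for the partition induced by $\sim_{\rm s}$ over $\mathit{ds}(P)$. Each half reuses the argument already given for the corresponding strong bisimilarity, and no new machinery is needed.

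Fix $P \in \cC_{\rm pc, sf}$ and two components $P_{1}, P_{2} \in \mathit{ds}(P) \cap C$ for some class $C \in \cC / {\sim_{\rm s}}$. For the ordinary part, take any class $C' \neq C$. The outgoing condition of Definition~\ref{def:strong_strict_bisim} gives $q[P_{1}, C', \alpha] = q[P_{2}, C', \alpha]$ for every $\alpha \in \cA$. Summing over $\alpha$ yields $q(P_{1}, C') = q(P_{2}, C')$, exactly as in the proof of Theorem~\ref{thm:strong_ordinary_bisim_induces}. Here we use that for $P_{i} \notin C'$ the cumulative rate to $C'$ in the CTMC is the sum over all types of the cumulative conditional rates, with no diagonal entries involved. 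Alternatively, one can observe directly that $\sim_{\rm s}$ is a strong ordinary bisimulation and appeal to that theorem.

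For the exact part, first derive $q[P_{1}, \alpha] = q[P_{2}, \alpha]$ by summing the outgoing condition over all classes, as noted after Definition~\ref{def:strong_strict_bisim}. Hence $\sim_{\rm s}$ is also a strong exact bisimulation, so $\sim_{\rm s} \subseteq \sim_{\rm e}$. However, the partition induced by $\sim_{\rm s}$ may be finer than that of $\sim_{\rm e}$, so I would not invoke Theorem~\ref{thm:strong_exact_bisim_induces} directly. Instead I would replay its computation with the classes of $\sim_{\rm s}$, for every class $C'$ including $C' = C$. The incoming condition gives $q[C' \cap \mathit{ds}(P_{i}), P_{i}, \alpha]$ equal for $i = 1, 2$.

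The main point to check is the case $C' = C$. Here the generator sum $\sum_{s' \in C} q(s', P_{i})$ includes the diagonal entry $q(P_{i}, P_{i}) = -q(P_{i})$. Selfloop freeness ensures that the labelled incoming rates from $C \setminus \{P_{i}\}$ coincide with $q[C \cap \mathit{ds}(P_{i}), P_{i}, \alpha]$ summed over $\alpha$. The diagonal term equals $-\sum_{\alpha} q[P_{i}, \alpha]$, again by selfloop freeness, and this is equal for $i = 1, 2$ by the derived total-rate equality.

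Two further remarks complete the argument. First, the relevant in-neighbours of $P_{i}$ within $\mathit{ds}(P)$ should be those in $\mathit{ds}(P_{i})$. This holds within the ergodic/reachability setting implicitly adopted by Theorem~\ref{thm:strong_exact_bisim_induces}, and I would follow the same convention. Second, combining the ordinary and exact parts gives the strict lumping.
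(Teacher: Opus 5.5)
Your proposal follows the paper's route. The paper simply declares the result a consequence of Definition~\ref{def:strong_strict_bisim} and Theorems~\ref{thm:strong_ordinary_bisim_induces} and~\ref{thm:strong_exact_bisim_induces}; you execute this more carefully, rightly replaying both arguments on the finer $\sim_{\rm s}$-classes (which also undercuts your ``alternatively, appeal to that theorem'' shortcut for the ordinary half) and treating the diagonal entry via selfloop freeness and the derived equality $q[P_{1}, \alpha] = q[P_{2}, \alpha]$.

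Your in-neighbour caveat is genuinely needed, and the paper's proof of Theorem~\ref{thm:strong_exact_bisim_induces} relies on it just as tacitly: for $P = (\beta, 1) \, . \, B$ with $B \rmdef (\alpha, r) \, . \, (\alpha, r) \, . \, B$ one has $B \sim_{\rm s} (\alpha, r) \, . \, B$, so the induced partition $\{ \{ P \}, \{ B, (\alpha, r) \, . \, B \} \}$ of $\mathit{ds}(P)$ violates the incoming condition for the class $\{ P \}$ (rate $1$ into $B$, rate $0$ into $(\alpha, r) \, . \, B$), and the statement really requires something like irreducibility of the underlying CTMC.
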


	\begin{proof}
A straightforward consequence of Definition~\ref{def:strong_strict_bisim} and
Theorems~\ref{thm:strong_ordinary_bisim_induces} and~\ref{thm:strong_exact_bisim_induces}.
\qedhere
	\end{proof}

	\begin{defi}[Weak strict bisimilarity]\label{def:weak_strict_bisim}
We say that $P_{1}, P_{2} \in \cC$ are \emph{weakly strict bisimilar}, written $P_{1} \approx_{\rm s}
P_{2}$, iff $(P_{1}, P_{2}) \in \cB$ for some weak strict bisimulation $\cB$. An equivalence relation $\cB
\subseteq \cC \times \cC$ is a \emph{weak strict bisimulation} iff, whenever $(P_{1}, P_{2}) \in \cB$, then
for all activity types $\alpha \in \cA$ and for all equivalence classes $C \in \cC / \cB$:
		\begin{itemize}
\item either $\alpha \neq \tau$, or $\alpha = \tau$ and $P_{1}, P_{2} \notin C$, and:
\[
q[P_{1}, C, \alpha] \: = \: q[P_{2}, C, \alpha] \qquad
q[C \cap \mathit{ds}(P_{1}), P_{1}, \alpha] \: = \: q[C \cap \mathit{ds}(P_{2}), P_{2}, \alpha]
\]

\item or $\alpha = \tau$ and $P_{1}, P_{2} \in C$ and:
\[
q[C \cap \mathit{ds}(P_{1}), P_{1}, \tau] - q[P_{1}, \tau] \: = \: q[C \cap \mathit{ds}(P_{2}), P_{2}, \tau]
- q[P_{2}, \tau]
\]
		\end{itemize}
	\end{defi}

	\begin{thm}\label{thm:weak_strict_bisim_induces}
For all $P \in \cC_{\rm pc, sf}$, $\approx_{\rm s}$ induces a partition of $\mathit{ds}(P)$ that is a strict
lumping.
	\end{thm}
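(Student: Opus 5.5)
The plan is to prove that the partition induced by $\approx_{\rm s}$ on $\mathit{ds}(P)$ satisfies both the outgoing condition of Definition~\ref{def:ordinary_lump} and the incoming condition of Definition~\ref{def:exact_lump}. By Definition~\ref{def:strict_lump}, this makes it a strict lumping.

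Strong strict bisimilarity could simply invoke the two earlier theorems. That shortcut is not available here, because we do not know that $\approx_{\rm s}$ is contained in $\approx_{\rm o}$ and $\approx_{\rm e}$. Instead I would show that every weak strict bisimulation $\cB$ is both a weak ordinary bisimulation and a weak exact bisimulation. Then $\approx_{\rm s} \subseteq \approx_{\rm o}$ and $\approx_{\rm s} \subseteq \approx_{\rm e}$.

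For the ordinary part, take $(P_1,P_2) \in \cB$. For $\alpha \neq \tau$, the first clause of Definition~\ref{def:weak_strict_bisim} gives equal outgoing cumulative rates to every class. For $\alpha = \tau$ and $P_1,P_2 \notin C$, the same clause applies. These are exactly the requirements of Definition~\ref{def:weak_ordinary_bisim}.

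For the exact part, the incoming conditions for $\alpha \neq \tau$, and for $\alpha=\tau$ towards classes not containing $P_1,P_2$, are copied directly. The within-class $\tau$ condition coincides verbatim with the second case of Definition~\ref{def:weak_exact_bisim}. The only missing ingredient is $q[P_1,\alpha]=q[P_2,\alpha]$ for $\alpha \neq \tau$. I would obtain it by summing the outgoing equalities $q[P_1,C,\alpha]=q[P_2,C,\alpha]$ over all classes $C \in \cC/\cB$, using $q[P,\alpha]=\sum_{C} q[P,C,\alpha]$, as noted after Definition~\ref{def:strong_strict_bisim}.

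Next I would argue that $\approx_{\rm s}$ is itself a weak strict bisimulation, i.e.\ the largest one, following the pattern recalled at the start of Section~\ref{sec:bisim}. Its partition is then both the partition of a weak ordinary bisimulation and that of a weak exact bisimulation. The proofs of Theorems~\ref{thm:weak_ordinary_bisim_induces} and~\ref{thm:weak_exact_bisim_induces} only use the defining conditions of the relation, not maximality. Rerunning them verbatim for $\cB = {\approx_{\rm s}}$ on $\mathit{ds}(P)$, with $P \in \cC_{\rm pc,sf}$ so that selfloop freeness lets $q(P_i,P_i)=-\sum_\alpha q[P_i,\alpha]$, yields the outgoing and incoming lumpability conditions respectively. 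Alternatively, if one prefers not to rely on $\approx_{\rm s}$ being a bisimulation, I would directly replicate the two computations.

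The main obstacle is the subtle point that inclusion of relations does not give inclusion of partitions in the needed direction. Being a weak ordinary bisimulation does not mean the classes of $\approx_{\rm s}$ are classes of $\approx_{\rm o}$. So one must apply the lumping arguments to the partition of $\approx_{\rm s}$ itself, not to $\approx_{\rm o}$ and $\approx_{\rm e}$. This is why I would redo the two computations for the $\approx_{\rm s}$-classes rather than cite the theorems as black boxes.
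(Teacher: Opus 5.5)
Your reduction (every weak strict bisimulation is both a weak ordinary and a weak exact bisimulation, and the two lumping computations are then rerun on the $\approx_{\rm s}$-partition itself) is what the paper's one-line proof intends. Your warning that inclusion of relations does not transfer the lumping property is also well taken. The outgoing half is sound, since every successor of $P_1$ lies in $\mathit{ds}(P_1)\subseteq\mathit{ds}(P)$.

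The incoming half, however, cannot be rerun verbatim. This is a genuine gap, and the paper shares it through the proof of Theorem~\ref{thm:weak_exact_bisim_induces}. Exact lumpability of the CTMC underlying $P$ constrains $\sum_{s\in C'\cap\mathit{ds}(P)} q(s,P_1)$. The bisimulation only controls $\sum_{\alpha} q[C'\cap\mathit{ds}(P_1),P_1,\alpha]$, which by design (Section~\ref{subsec:from_out_to_in}) ignores every predecessor of $P_1$ that is not reachable from $P_1$. The two sums coincide only if every state of $\mathit{ds}(P)$ with a transition into $P_1$ belongs to $\mathit{ds}(P_1)$.

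Without that condition the statement is false. Take $P=(\alpha,1)\,.\,\underline{0}+(\alpha,2)\,.\,(\underline{0}+\underline{0})$ with $\alpha\neq\tau$, which lies in $\cC_{\rm pc, sf}$. The equivalence relation whose only non-singleton class is $\{\underline{0},\underline{0}+\underline{0}\}$ is a weak (even strong) strict bisimulation. Neither component has transitions, and $P\notin\mathit{ds}(\underline{0})\cup\mathit{ds}(\underline{0}+\underline{0})$, so all incoming rates in the definition are $0$. Since $P$ is the only state of $\mathit{ds}(P)$ with $\alpha$-transitions, the partition induced by $\approx_{\rm s}$ is $\{\{P\},\{\underline{0},\underline{0}+\underline{0}\}\}$. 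But $q(P,\underline{0})=1\neq 2=q(P,\underline{0}+\underline{0})$ violates Definition~\ref{def:exact_lump}. So the partition is an ordinary lumping but not a strict one. The same example also defeats the analogous claims for $\sim_{\rm e}$, $\approx_{\rm e}$ and $\sim_{\rm s}$.

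What your argument does establish is that the partition is always an ordinary lumping, and a strict lumping when the CTMC underlying $P$ is irreducible. In that case $\mathit{ds}(P_i)=\mathit{ds}(P)$ for every $P_i\in\mathit{ds}(P)$, hence $C'\cap\mathit{ds}(P_i)=C'\cap\mathit{ds}(P)$. Then your computation goes through, including the diagonal term via self-loop freeness and the within-class $\tau$ condition.
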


	\begin{proof}
A straightforward consequence of Definition~\ref{def:weak_strict_bisim} and
Theorems~\ref{thm:weak_ordinary_bisim_induces} and~\ref{thm:weak_exact_bisim_induces}.
\qedhere
	\end{proof}

%
%
\section{Taxonomy}
\label{sec:taxonomy}
%
%

We now compare the distinguishing power of the six stochastic bisimilarities over PEPA in order to develop
their taxonomy. We first study the taxonomy over the entire set $\cC$ of components
(Section~\ref{subsec:taxonomy_all_comp}), then we show how it changes over the set $\cC_{\rm n\tau}$ of
components with no $\tau$-activities and the set $\cC_{\rm nr}$ of components with no recursion
(Section~\ref{subsec:taxonomy_ntau_nr_comp}).

%
\subsection{Taxonomy over $\cC$}
\label{subsec:taxonomy_all_comp}
%

We introduce three pairs of components that will act as distinguishing witnesses for the taxonomy. The first
pair tells the weak bisimilarities apart from the strong ones: the two components differ only for an
activity of type $\tau$ that can be repeatedly performed within the class to which the components belong.

	\begin{exa}[Weak vs.\ strong]\label{ex:weak_vs_strong}
Consider the two recursive components:
\[
A_{1} \: \rmdef \: (\alpha, r_{\alpha}) \, . \, A_{1}
\qquad
A_{2} \: \rmdef \: (\alpha, r_{\alpha}) \, . \, A_{2} + (\tau, r_{\tau}) \, . \, A_{2}
\]
with $\alpha \neq \tau$ and $r_{\alpha}, r_{\tau} \in \mathbb{R}_{> 0}$. The derivation graph of the former
has a single transition, which is labelled with $\alpha$ and goes from the state corresponding to $A_{1}$ to
itself, whilst the derivation graph of the latter has two transitions, one labelled with $\alpha$ and the
other with $\tau$, both going from the state corresponding to $A_{2}$ to itself.

The equivalence relation whose only non-singleton equivalence class is $C = \{ A_{1}, A_{2} \}$ is a weak
ordinary/exact/strict bisimulation. In particular, either component in $C$ reaches $C$ via~$\alpha$ with
cumulative rate $r_{\alpha}$ and is reached from $C$ via $\alpha$ with cumulative rate $r_{\alpha}$.
Moreover, in the exact and strict cases, it additionally holds that $q[C \cap \mathit{ds}(A_{1}), A_{1},
\tau] - q[A_{1}, \tau] = 0 - 0 = r_{\tau} - r_{\tau} = q[C \cap \mathit{ds}(A_{2}), A_{2}, \tau] - q[A_{2},
\tau]$.

On the other hand, there is no strong ordinary/exact/strict bisimulation relating $A_{1}$ and~$A_{2}$. Any
equivalence class $C$ containing both would yield $q[A_{1}, C, \tau] = 0 \neq r_{\tau} = q[A_{2}, C, \tau]$
in the ordinary and strict cases and $q[A_{1}, \tau] = 0 \neq r_{\tau} = q[A_{2}, \tau]$ in the exact case.
	\end{exa}

Before moving to the next two pairs, we observe that the classical $\tau$-abstracting bisimulation
equivalences developed for nondeterministic process algebras, like weak bisimilarity~\cite{Mil89a} and
branching bisimilarity~\cite{GW96}, are capable of ignoring $\tau$-activities in various situations that
respect the branching structure of the considered process terms. A typical law they all satisfy is that
$\alpha \, . \, \tau \, . \, P$ is equivalent to $\alpha \, .  \, P$. This is preserved to some extent by
the weak Markovian bisimilarity of~\cite{Ber15}, where $(\alpha, r) \, . \, (\tau, r_{1}) \, . \, \cdots \,
. \, (\tau, r_{n}) \, . \, P$ is equivalent to $(\alpha, r) \, . \, (\tau, \bar{r}) \, . \, P$ with $1 /
\bar{r} = \sum_{1 \le i \le n} 1 / r_{i}$, i.e., a sequence of $n \in \mathbb{N}_{\ge 2}$ exponentially
timed $\tau$-activities can be replaced by a single $\tau$-activity that possesses the same average
duration. In contrast, that law is no longer satisfied by weak ordinary/exact/strict bisimilarities.

	\begin{exa}[Weak discrimination]\label{ex:no_alpha_tau_p}
Consider the two recursive components:
\[
A'_{1} \: \rmdef \: (\alpha, r) \, . \, (\tau, r_{\tau}) \, . \, (\alpha', r') \, . \, A'_{1}
\qquad
A'_{2} \: \rmdef \: (\alpha, r) \, . \, (\alpha', r') \, . \, A'_{2}
\]
with $\alpha, \alpha' \neq \tau$ and $r, r', r_{\tau} \in \mathbb{R}_{> 0}$. The derivation graph of the
former has a sequence formed by one $\alpha$-transition, one $\tau$-transition, and one $\alpha'$-transition
going back to the state corresponding to $A'_{1}$, whereas the derivation graph of the latter has a sequence
formed by one $\alpha$-transition and one $\alpha'$-transition going back to the state corresponding to
$A'_{2}$.

The only way to abstract from the $\tau$-activity is to consider an equivalence relation having $(\tau,
r_{\tau}) \, . \, (\alpha', r') \, . \, A'_{1}$ and $(\alpha', r') \, . \, A'_{1}$ in the same equivalence
class. However, this would not result in a weak ordinary/exact/strict bisimulation because $(\alpha', r') \,
. \, A'_{1}$ has an $\alpha'$-transition while $(\tau, r_{\tau}) \, . \, (\alpha', r') \, . \, A'_{1}$ has
not and hence the outgoing conditions of the three notions would not be met. Those two components would be
weakly equivalent not even in the case in which $\alpha' = \tau$, because $(\alpha', r') \, . \, A'_{1}$ has
a $\tau$-transition towards a different equivalence class while $(\tau, r_{\tau}) \, . \, (\alpha', r') \, .
\, A'_{1}$ has not.
	\end{exa}

The second pair features two components that are ordinary, but not exact, bisimilar: intuitively, they agree
on their outgoing flows, but not on their incoming flows.

	\begin{exa}[Ordinary vs.\ exact]\label{ex:ordinary_vs_exact}
Consider the two recursive components:
\[
A_{3} \: \rmdef \: (\alpha, r) \, . \, (\alpha', r') \, . \, A_{3} + (\alpha, r) \, . \, ((\alpha', r') \, .
\, A_{3} + \underline{0})
\qquad
A_{4} \: \rmdef \: (\alpha, 2 \cdot r) \, . \, (\alpha', r') \, . \, A_{4}
\]
with $\alpha, \alpha' \neq \tau$ and $r, r' \in \mathbb{R}_{> 0}$. The derivation graph of the former has
one $\alpha$-transition with rate $r$ to the state corresponding to $(\alpha', r') \, . \, A_{3}$ followed
by one $\alpha'$-transition with rate $r'$ that goes back to the state corresponding to $A_{3}$, together
with one $\alpha$-transition with rate $r$ to the state corresponding to $(\alpha', r') \, . \, A_{3} +
\underline{0}$ followed by one $\alpha'$-transition with rate $r'$ that goes back to the state corresponding
to $A_{3}$. The latter has one $\alpha$-transition with rate $2 \cdot r$ followed by one
$\alpha'$-transition with rate $r'$ that goes back to the state corresponding to $A_{4}$.

The equivalence relation whose only non-singleton equivalence classes are $C = \{ A_{3}, A_{4} \}$ and $C' =
\{ (\alpha', r') \, . \, A_{3}, (\alpha', r') \, . \, A_{3} + \underline{0}, (\alpha', r') \, . \, A_{4} \}$
is a strong ordinary bisimulation. Indeed, either component in $C$ reaches $C'$ via $\alpha$ with cumulative
rate $2 \cdot r$ and any component in $C'$ reaches~$C$ via $\alpha'$ with cumulative rate $r'$.

In contrast, no equivalence relation with the same equivalence classes $C$ and $C'$ is a strong exact
bisimulation because $q[C' \cap \mathit{ds}(A_{3}), A_{3}, \alpha'] = 2 \cdot r' \neq r' = q[C' \cap
\mathit{ds}(A_{4}), A_{4}, \alpha']$. If we refine $C'$ into $C'_{1} = \{ (\alpha', r') \, . \, A_{3} \}$
and $C'_{2} = \{ (\alpha', r') \, . \, A_{3} + \underline{0}, (\alpha', r') \, . \, A_{4} \}$, then
\linebreak $q[C'_{2} \cap \mathit{ds}(A_{3}), A_{3}, \alpha'] = r' = q[C'_{2} \cap \mathit{ds}(A_{4}),
A_{4}, \alpha']$, but $q[C'_{1} \cap \mathit{ds}(A_{3}), A_{3}, \alpha'] = r' \neq 0 = q[C'_{1} \cap
\mathit{ds}(A_{4}), A_{4}, \alpha']$.
	\end{exa}

The third pair contains two components that are exact, but not ordinary, bisimilar: unlike the previous two,
they agree on their incoming flows, but not on their outgoing flows.

	\begin{exa}[Exact vs.\ ordinary]\label{ex:exact_vs_ordinary}
Consider the two recursive components:
\begin{align*}
A_{5} & \: \rmdef \: (\alpha, 1) \, . \, A_{5} + (\alpha, 1) \, . \, A'_{5} + (\alpha, 8) \, . \, A''_{5} &
\qquad
A_{6} & \: \rmdef \: (\alpha, 5) \, . \, A_{6} + (\alpha, 5) \, . \, A'_{6} \\
A'_{5} & \: \rmdef \: (\alpha, 5) \, . \, A_{5} + (\alpha, 5) \, . \, A'_{5} &
A'_{6} & \: \rmdef \: (\alpha, 1) \, . \, A_{6} + (\alpha, 1) \, . \, A'_{6} + (\alpha, 8) \, . \, A''_{6}
\\
A''_{5} & \: \rmdef \: (\beta, 5) \, . \, A_{5} + (\beta, 5) \, . \, A'_{5} &
A''_{6} & \: \rmdef \: (\beta, 5) \, . \, A_{6} + (\beta, 5) \, . \, A'_{6}
\end{align*}
with $\alpha, \beta \neq \tau$, whose underlying derivation graphs are depicted in
Figure~\ref{fig:exact_vs_ordinary}.

	\begin{figure}[t]

\begin{center}
\begin{tikzpicture}[modal, node distance = 2cm]


\node[state] (a5)				          {$A_{5}$};
\node[state] (a5')  [below left = of a5, yshift = -10pt]  {$A'_{5}$};
\node[state] (a5'') [below right = of a5, yshift = -10pt] {$A''_{5}$};


\node[state] (a6)   [right = of a5, xshift = 4cm]         {$A_{6}$};
\node[state] (a6')  [below left = of a6, yshift = -10pt]  {$A'_{6}$};
\node[state] (a6'') [below right = of a6, yshift = -10pt] {$A''_{6}$};


\path[->] (a5) edge[loop above] node[above] {$\alpha, 1$} (a5);

\path[->] (a5)  edge[transform canvas = {xshift = -2pt, yshift = 2pt}] 
                node[above left, xshift = 2pt, yshift = -2pt] {$\alpha, 1$} (a5');

\path[->] (a5') edge[transform canvas = {xshift = 2pt, yshift = -2pt}] 
                node[below right, xshift = -2pt, yshift = 4pt] {$\alpha, 5$} (a5);

\path[->] (a5'') edge[transform canvas = {xshift = 2pt, yshift = 2pt}] 
                 node[above right, xshift = -2pt, yshift = -2pt] {$\beta, 5$} (a5);

\path[->] (a5)   edge[transform canvas = {xshift = -2pt, yshift = -2pt}] 
                 node[below left, xshift = 2pt, yshift = 4pt] {$\alpha, 8$} (a5'');

\path[->] (a5') edge[loop below] node[below] {$\alpha, 5$} (a5');

\path[->] (a5'') edge node[below] {$\beta, 5$} (a5');


\path[->] (a6)   edge[loop above] node[above] {$\alpha, 5$} (a6);

\path[->] (a6)   edge[transform canvas = {xshift = -2pt, yshift = 2pt}] 
                 node[above left, xshift = 2pt, yshift = -2pt] {$\alpha, 5$} (a6');

\path[->] (a6')  edge[transform canvas = {xshift = 2pt, yshift = -2pt}] 
                 node[below right,xshift = -2pt, yshift = 4pt] {$\alpha, 1$} (a6);

\path[->] (a6')  edge[transform canvas = {yshift = -2.5pt}]
                 node[below] {$\alpha, 8$} (a6'');

\path[->] (a6'') edge[transform canvas = {yshift = 2.5pt}] 
                 node[above] {$\beta, 5$} (a6');

\path[->] (a6'') edge node[above right, xshift = -2pt, yshift = -2pt] {$\beta, 5$} (a6);

\path[->] (a6')  edge[loop below] node[below] {$\alpha, 1$} (a6');

\end{tikzpicture}
\end{center}

\caption{Derivation graphs for the components in Example~\ref{ex:exact_vs_ordinary}.}
\label{fig:exact_vs_ordinary}

	\end{figure}

The equivalence relation whose only non-singleton equivalence classes are $C_{\alpha} = \{ A_{5}, A'_{5},
\linebreak A_{6}, A'_{6} \}$ and $C_{\beta} = \{ A''_{5}, A''_{6} \}$ is a strong exact bisimulation. In
particular, for all $P_{\alpha} \in C_{\alpha}$ it holds that $q[P_{\alpha}, \alpha] = 10$, $q[P_{\alpha},
\beta] = 0$, $q[C_{\alpha} \cap \mathit{ds}(P_{\alpha}), P_{\alpha}, \alpha] = 6$, and $q[C_{\beta} \cap
\mathit{ds}(P_{\alpha}), P_{\alpha}, \beta] = 5$, while for all $P_{\beta} \in C_{\beta}$ it holds that
$q[P_{\beta}, \alpha] = 0$, $q[P_{\beta}, \beta] = 10$, $q[C_{\alpha} \cap \mathit{ds}(P_{\beta}),
P_{\beta}, \alpha] = 8$, and $q[C_{\beta} \cap \mathit{ds}(P_{\beta}), P_{\beta}, \beta] = 0$.

On the other hand, the equivalence relation above is not a strong ordinary bisimilarity because $q[A_{5},
C_{\beta}, \alpha] = 8 \neq 0 = q[A_{6}, C_{\beta}, \alpha]$. If we refine $C_{\beta}$ then $q[A_{5}, \{
A''_{6} \}, \alpha] = 0 = q[A_{6}, \{ A''_{6} \}, \alpha]$ but $q[A_{5}, \{ A''_{5} \}, \alpha] = 8 \neq 0 =
q[A_{6}, \{ A''_{5} \}, \alpha]$.
	\end{exa}

We are now ready to state the main result of this section.

	\begin{thm}[Taxonomy over $\cC$]\label{thm:taxonomy}
The following relationships hold among the six stochastic bisimilarities:
\[
\sim_{\rm s} {\subset} \sim_{\rm o} {\subset} \approx_{\rm o} \qquad
\sim_{\rm s} {\subset} \sim_{\rm e} {\subset} \approx_{\rm e} \qquad
\approx_{\rm s} {\subset} \approx_{\rm o} \qquad
\approx_{\rm s} {\subset} \approx_{\rm e} \qquad
\sim_{\rm s} {\subset} \approx_{\rm s}
\]
Moreover, no other inclusion holds between any two of the six bisimilarities.
	\end{thm}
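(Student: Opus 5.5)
The plan is to prove the positive inclusions by showing that every bisimulation of the finer kind is also a bisimulation of the coarser kind. The strictness of each inclusion and the absence of any further inclusion then come from Examples~\ref{ex:weak_vs_strong}, \ref{ex:ordinary_vs_exact}, and~\ref{ex:exact_vs_ordinary}.

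\emph{Inclusions.} I would fix an equivalence relation $\cB$ and check the conditions directly.
\begin{itemize}
\item A strong ordinary bisimulation is a weak ordinary one, since the weak conditions are a subset of the strong ones.
\item A strong strict bisimulation is a strong ordinary one, by its first condition.
\item A strong strict bisimulation is also a strong exact one. The incoming condition is common to both, and $q[P_1,\alpha]=q[P_2,\alpha]$ follows by summing $q[P,C,\alpha]$ over $C \in \cC/\cB$, as remarked after Definition~\ref{def:strong_strict_bisim}.
\item A strong exact bisimulation is a weak exact one. For $\alpha=\tau$ and $P_1,P_2\in C$, subtracting the equality $q[P_1,\tau]=q[P_2,\tau]$ from the equality of incoming $\tau$-rates gives the required difference condition.
\item A weak strict bisimulation is a weak ordinary one, by its outgoing conditions.
\item A weak strict bisimulation is also a weak exact one. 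The only missing item is $q[P_1,\alpha]=q[P_2,\alpha]$ for $\alpha\neq\tau$, obtained by summing the outgoing $\alpha$-condition over all classes; the $\tau$-conditions are literally shared.
\item A strong strict bisimulation is a weak strict one, by the same subtraction trick used for $\sim_{\rm e}\subseteq\approx_{\rm e}$.
\end{itemize}

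\emph{Non-inclusions.} I would use three witness pairs.
\begin{itemize}
\item Example~\ref{ex:weak_vs_strong} gives $A_1\approx_{\rm s}A_2$ but $A_1\not\sim_{\rm o}A_2$ and $A_1\not\sim_{\rm e}A_2$. Hence $\approx_{\rm s}$, and therefore $\approx_{\rm o}$ and $\approx_{\rm e}$, are included in none of $\sim_{\rm o},\sim_{\rm e},\sim_{\rm s}$. This also yields the strictness of $\sim_{\rm o}\subset\approx_{\rm o}$, $\sim_{\rm e}\subset\approx_{\rm e}$ and $\sim_{\rm s}\subset\approx_{\rm s}$.
\item Example~\ref{ex:ordinary_vs_exact} should give $A_3\sim_{\rm o}A_4$ (hence $A_3\approx_{\rm o}A_4$) but $A_3\not\approx_{\rm e}A_4$. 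Then $\sim_{\rm o}$ and $\approx_{\rm o}$ are not included in $\sim_{\rm e}$, $\approx_{\rm e}$, $\sim_{\rm s}$ or $\approx_{\rm s}$. This gives strictness of $\sim_{\rm s}\subset\sim_{\rm o}$ and $\approx_{\rm s}\subset\approx_{\rm o}$, and incomparability of $\sim_{\rm o}$ with $\sim_{\rm e}$ and $\approx_{\rm s}$.
\item Symmetrically, Example~\ref{ex:exact_vs_ordinary} should give $A_5\sim_{\rm e}A_6$ but $A_5\not\approx_{\rm o}A_6$. This settles $\sim_{\rm s}\subset\sim_{\rm e}$, $\approx_{\rm s}\subset\approx_{\rm e}$, and the incomparability of $\approx_{\rm o}$ with $\approx_{\rm e}$ and of $\sim_{\rm e}$ with $\approx_{\rm s}$.
\end{itemize}
Running through the $30$ ordered pairs, every non-inclusion is then covered, either directly or via transitivity with the proven inclusions.

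\emph{Main obstacle.} The examples only exhibit a failure for particular partitions, so the hard part is proving that no bisimulation of the weak kind relates $A_3,A_4$ (resp.\ $A_5,A_6$).

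First, all components involved are $\tau$-free. Their derivative sets contain no $\tau$-transitions, so every $\tau$-condition reduces to $0=0$, and for $\alpha\neq\tau$ the weak conditions coincide with the strong ones. It therefore suffices to rule out strong exact (resp.\ strong ordinary) bisimulations.

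For $A_3,A_4$, suppose some strong exact bisimulation relates them. The class $C'$ of $(\alpha',r')\,.\,A_3$ is fixed by the bisimulation. I would do a case analysis on whether $(\alpha',r')\,.\,A_3+\underline{0}$ and $(\alpha',r')\,.\,A_4$ lie in $C'$:
\begin{itemize}
\item If $(\alpha',r')\,.\,A_4\notin C'$, then $q[C'\cap\mathit{ds}(A_4),A_4,\alpha']=0\neq r'\le q[C'\cap\mathit{ds}(A_3),A_3,\alpha']$.
\item Otherwise the incoming $\alpha'$-rate of $A_3$ from $C'$ counts $(\alpha',r')\,.\,A_3$, and also $(\alpha',r')\,.\,A_3+\underline{0}$ if it lies in $C'$. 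That of $A_4$ counts only $(\alpha',r')\,.\,A_4$. This forces $(\alpha',r')\,.\,A_3+\underline{0}\notin C'$.
\item Looking then at the class of $(\alpha',r')\,.\,A_3+\underline{0}$, which $A_4$ receives no flow from, gives a contradiction. Classes may also contain components outside the derivative sets, but the intersection with $\mathit{ds}(\cdot)$ neutralises them.
\end{itemize}

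For $A_5,A_6$, suppose some strong ordinary bisimulation relates them. The $\beta$-capability separates $A''_5$ and $A''_6$ from $A_5,A_6,A'_5,A'_6$. I would then case-split on whether $A''_5$ and $A''_6$ share a class: $A_5$ sends $\alpha$-rate $8$ to the class of $A''_5$, whereas $A_6$ sends none, and compare with the alternatives in the same way.
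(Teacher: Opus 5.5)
Your proposal is correct and follows the paper's proof: the inclusions are shown bisimulation-by-bisimulation, and the strictness and incomparabilities come from the same three witness pairs ($A_1,A_2$; $A_3,A_4$; $A_5,A_6$). Your extra step of ruling out \emph{every} weak exact (resp.\ weak ordinary) bisimulation relating $A_3,A_4$ (resp.\ $A_5,A_6$) usefully tightens what the paper's examples leave implicit, since they only check specific partitions, and because it uses only non-$\tau$ conditions it applies directly to the weak relations. It can also be shortened: summing the incoming $\alpha'$-condition over all classes gives $2r' \neq r'$ for $A_3$ versus $A_4$, and for $A_5,A_6$ no case split is needed, since $A_6$'s $\alpha$-successors can never lie in the class of $A''_5$.
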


	\begin{proof}
We first prove the inclusions by showing that every bisimulation of the finer kind is also a bisimulation
of the coarser kind (the inclusion between the corresponding bisimilarities then follows by taking the
largest bisimulations):

		\begin{itemize}
\item $\sim_{\rm s} {\subseteq} \sim_{\rm o}$ and $\approx_{\rm s} {\subseteq} \approx_{\rm o}$: the
conditions of Definitions~\ref{def:strong_ordinary_bisim} and~\ref{def:weak_ordinary_bisim} coincide with
the outgoing conditions of Definitions~\ref{def:strong_strict_bisim} and~\ref{def:weak_strict_bisim},
respectively.

\item $\sim_{\rm s} {\subseteq} \sim_{\rm e}$ and $\approx_{\rm s} {\subseteq} \approx_{\rm e}$: the
incoming conditions of Definitions~\ref{def:strong_exact_bisim} and~\ref{def:weak_exact_bisim} respectively
coincide with the incoming conditions of Definitions~\ref{def:strong_strict_bisim}
and~\ref{def:weak_strict_bisim}, while the outgoing conditions $q[P_{1}, \alpha] = q[P_{2}, \alpha]$ of the
former definitions follow by summing the outgoing conditions $q[P_{1}, C, \alpha] = q[P_{2}, C,\alpha]$ of
the latter definitions over all equivalence classes $C$.

\item $\sim_{\rm o} {\subseteq} \approx_{\rm o}$, $\sim_{\rm e} {\subseteq} \approx_{\rm e}$, and $\sim_{\rm
s} {\subseteq} \approx_{\rm s}$: weak bisimulations relax the conditions of the corresponding strong ones on
the $\tau$-flows within every class of related components. In particular, for the weak exact and weak strict
conditions on the class $C$ containing $P_{1}$ and~$P_{2}$, note that a strong relation guarantees both $q[C
\cap \mathit{ds}(P_{1}), P_{1}, \tau] = q[C \cap \mathit{ds}(P_{2}), P_{2}, \tau]$ and $q[P_{1}, \tau] =
q[P_{2}, \tau]$, hence $q[C \cap \mathit{ds}(P_{1}), P_{1}, \tau] - q[P_{1}, \tau] = q[C \cap
\mathit{ds}(P_{2}), P_{2}, \tau] - q[P_{2}, \tau]$.
		\end{itemize}

The strictness of the inclusions follows from the three pairs of witnesses in the previous examples:

		\begin{itemize}
\item $A_{1}$ and $A_{2}$ of Example~\ref{ex:weak_vs_strong} are related by $\approx_{\rm o}$, $\approx_{\rm
e}$, $\approx_{\rm s}$ and distinguished by $\sim_{\rm o}$, $\sim_{\rm e}$, $\sim_{\rm s}$.

\item $A_{3}$ and $A_{4}$ of Example~\ref{ex:ordinary_vs_exact} are related by $\sim_{\rm o}$, $\approx_{\rm
o}$ and distinguished by $\sim_{\rm e}$, $\approx_{\rm e}$, $\sim_{\rm s}$, $\approx_{\rm s}$.

\item $A_{5}$ and $A_{6}$ of Example~\ref{ex:exact_vs_ordinary} are related by $\sim_{\rm e}$, $\approx_{\rm
e}$ and distinguished by $\sim_{\rm o}$, $\approx_{\rm o}$, $\sim_{\rm s}$, $\approx_{\rm s}$.
		\end{itemize}

Finally, the absence of any further inclusion follow from the same examples.
\qedhere
	\end{proof}

An immediate consequence of Theorem~\ref{thm:taxonomy} is that $\sim_{\rm o}$, $\sim_{\rm e}$, and
$\approx_{\rm s}$ are pairwise incomparable. Each of them observes a genuinely different aspect of the
behaviour of a component: its outgoing flows, its incoming flows, and both of them up to unobservable
activities, respectively. Also $\approx_{\rm o}$ and $\approx_{\rm e}$ are incomparable.
Figure~\ref{fig:spectrum} depicts the resulting taxonomy.

	\begin{figure}[t]

\begin{center}
\begin{tikzpicture}[modal]

\node (strict)				               {\textcolor{orange}{$\sim_{\rm s}$}};
\node (strong)   [below left = of strict]              {\textcolor{blue}{$\sim_{\rm o}$}};
\node (exact)    [below right = of strict]             {\textcolor{green}{$\sim_{\rm e}$}};
\node (lump)     [below = of strong]                   {\textcolor{blue}{$\approx_{\rm o}$}};
\node (w_exact)  [below = of exact]                    {\textcolor{green}{$\approx_{\rm e}$}};
\node (w_strict) [below = of strict, yshift = -0.05cm] {\textcolor{orange}{$\approx_{\rm s}$}};

\path[->] (strict)   edge (strong);
\path[->] (strict)   edge (exact);
\path[->] (strict)   edge (w_strict);
\path[->] (strong)   edge (lump);
\path[->] (exact)    edge (w_exact);
\path[->] (w_strict) edge (lump);
\path[->] (w_strict) edge (w_exact);

\path[<->, brown] (strict)   edge[bend left] (exact);
\path[<->, brown] (w_strict) edge[bend left] (w_exact);

\end{tikzpicture}
\end{center}

\caption{Taxonomy of the six stochastic bisimilarities. Arrows denote strict inclusion, while the absence of
directed paths between two equivalences denotes incomparability. Blue, orange, and green mark the
bisimilarities inducing, respectively, an ordinary, strict, or exact lumping on the underlying CTMC. The
brown arrows refer to components whose underlying CTMC is time reversible, over which the lumpings induced
by exact and strict bisimilarities coincide (Proposition~\ref{prop:rev_collapse}), although the equivalences
themselves remain distinct (Example~\ref{ex:rev_no_collapse}).}
\label{fig:spectrum}

	\end{figure}

Finally, let us consider the special case of components whose underlying CTMCs are time reversible. At the
level of the induced lumpings, the exact and strict notions coincide.

	\begin{prop}\label{prop:rev_collapse}
Let $P \in \cC_{\rm pc}$ be such that its underlying CTMC is time reversible. Then the partitions of
$\mathit{ds}(P)$ induced by $\sim_{\rm e}$ and $\approx_{\rm e}$ are strict lumpings.
	\end{prop}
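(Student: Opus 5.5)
The plan is to combine the lumping results of Section~\ref{sec:bisim} with the characterisation of strict lumpability over reversible chains given by Theorem~\ref{thm:rev_exact_strict}. Fix $P \in \cC_{\rm pc}$ whose underlying CTMC $X(t)$ is time reversible. By Theorem~\ref{thm:strong_exact_bisim_induces} and Theorem~\ref{thm:weak_exact_bisim_induces}, the partitions of $\mathit{ds}(P)$ induced by $\sim_{\rm e}$ and $\approx_{\rm e}$ are exact lumpings of $X(t)$. Theorem~\ref{thm:rev_exact_strict} then states that, for a reversible chain, every exact lumpability is also a strict lumpability. The conclusion follows immediately for both equivalences.

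The first obstacle concerns hypotheses. Theorems~\ref{thm:strong_exact_bisim_induces} and~\ref{thm:weak_exact_bisim_induces} are stated for $\cC_{\rm pc, sf}$, whereas the present statement only assumes $P \in \cC_{\rm pc}$. I would first try to read the statement under the selfloop-free assumption, which is implicitly required for the exact lumping to exist at all; the $A_{5}$ example in Section~\ref{subsec:exact_bisim} shows this assumption is needed. Alternatively, I would note that time reversibility presupposes stationarity, and hence ergodicity over the finite state space. I would then check whether, in the presence of selfloops, the incoming condition still yields the exact-lumping equality. I expect it does not in general, so I would make the selfloop-freeness assumption explicit, as in the preceding theorems.

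To keep the argument self-contained, I would also recall why Theorem~\ref{thm:rev_exact_strict} applies. By Theorem~\ref{thm:exact_ordinary_reversed}, an exact lumpability of $X(t)$ is an ordinary lumpability of $X^{\mathrm{R}}(t)$. Reversibility gives $X^{\mathrm{R}}(t) = X(t)$, equivalently $\mathbf{Q}^{\mathrm{R}} = \mathbf{Q}$ via Proposition~\ref{prop:reversibility} and Theorem~\ref{thm:detailed_balance}. The partition is therefore also an ordinary lumpability of $X(t)$ itself, hence strict by Definition~\ref{def:strict_lump}.

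The only real subtlety is therefore matching the hypotheses (finiteness, ergodicity/stationarity, absence of selfloops) between the process-algebraic results and the CTMC-level theorem. The algebraic step itself is immediate.
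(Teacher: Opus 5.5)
Your core argument is the paper's: its proof also obtains an exact lumping from Theorems~\ref{thm:strong_exact_bisim_induces} and~\ref{thm:weak_exact_bisim_induces} and upgrades it with Theorem~\ref{thm:rev_exact_strict}. Your unfolding of that last step through Theorem~\ref{thm:exact_ordinary_reversed} and $\mathbf{Q}^{\mathrm{R}} = \mathbf{Q}$ is correct. You are also right that the paper silently applies results stated for $\cC_{\rm pc, sf}$ to $P \in \cC_{\rm pc}$.

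Your resolution is where it goes wrong. Adding selfloop-freeness proves less than the statement, and your expectation that selfloops break the exact-lumping equality is false. For $P_{1}, P_{2}$ in the same class $C$, a selfloop of $P_{i}$ of type $\alpha$ adds its rate both to $q[C \cap \mathit{ds}(P_{i}), P_{i}, \alpha]$ and to $q[P_{i}, \alpha]$. Hence, assuming $\mathit{ds}(P_{i}) = \mathit{ds}(P)$ (justified below),
\[
\sum_{\alpha \in \cA} \bigl( q[C \cap \mathit{ds}(P_{i}), P_{i}, \alpha] - q[P_{i}, \alpha] \bigr) \: = \: \sum_{s' \in (C \cap \mathit{ds}(P)) \setminus \{ P_{i} \}} q(s', P_{i}) - q(P_{i}) \: = \: \sum_{s' \in C \cap \mathit{ds}(P)} q(s', P_{i})
\]
The last sum includes the diagonal entry $q(P_{i}, P_{i}) = -q(P_{i})$, which is exactly what Definition~\ref{def:exact_lump} requires for $C' = C$. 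Under $\sim_{\rm e}$ every summand on the left agrees for $P_{1}$ and $P_{2}$. Under $\approx_{\rm e}$ the summands for $\alpha \neq \tau$ agree, and the $\tau$-summand is precisely the difference the definition constrains. For classes $C' \neq C$, selfloops of $P_{i}$ play no role at all.

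The $A_{5}$ example does not show otherwise. With the diagonal included, the own-class sums are $5 - 9 = -4$ for $A_{5}$ and $1 - 5 = -4$ for $A'_{5}$, so that partition is in fact an exact lumping; the paper's remark counts only off-diagonal flows. In any case that chain is not reversible, since $q(A'_{5}, A''_{5}) = 0 < 5 = q(A''_{5}, A'_{5})$.

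The hypothesis that genuinely matters is one you mention only in passing. The bisimilarities measure incoming flows from $C' \cap \mathit{ds}(P_{i})$, whereas the CTMC of $P$ counts flows from $C' \cap \mathit{ds}(P)$. These agree when $\mathit{ds}(P_{i}) = \mathit{ds}(P)$, i.e., when the chain is irreducible. Without irreducibility the citation can fail: take $(\beta, s) \, . \, (\alpha, r) \, . \, A + (\gamma, t) \, . \, (\alpha, r) \, . \, \underline{0}$, with $A$ from Example~\ref{ex:no_congr_choice_exact_bisim} and $s \neq t$. It is selfloop free, yet its two $\sim_{\rm e}$-equivalent states $(\alpha, r) \, . \, A$ and $(\alpha, r) \, . \, \underline{0}$ receive different flows from the initial state.

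Reversibility supplies irreducibility for free. Detailed balance with a positive $\bm{\pi}$ gives $q(s, s') > 0$ iff $q(s', s) > 0$, so every state of $\mathit{ds}(P)$, being reachable from $P$, can also reach $P$. With irreducibility and the cancellation above, your argument proves the proposition for every $P \in \cC_{\rm pc}$ whose underlying CTMC is time reversible, as stated, with no extra hypothesis.
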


	\begin{proof}
By Theorems~\ref{thm:strong_exact_bisim_induces} and~\ref{thm:weak_exact_bisim_induces}, the partition
induced by $\sim_{\rm e}$ and $\approx_{\rm e}$ on the CTMC underlying $P$ is an exact lumping. Since the
CTMC is time reversible, the partition is also a strict lumping due to Theorem~\ref{thm:rev_exact_strict}.
\qedhere
	\end{proof}

Note that this fact concerns the induced lumpings only, as in general the bisimilarities themselves remain
distinct even over time-reversible CTMCs. The reason is that bisimilarities observe the activity types
labelling transitions too, whereas reversibility constrains only the cumulative rates of the underlying
CTMC. The following example distinguishes $\sim_{\rm e}$ from $\sim_{\rm s}$, as well as $\approx_{\rm e}$
from $\approx_{\rm s}$, over two components with underlying time-reversible CTMCs.

	\begin{exa}\label{ex:rev_no_collapse}
Consider the recursive components:
\begin{align*}
A_{7} &\rmdef (\alpha, r) \, . \, A_{9} + (\beta, r) \, . \, A_{10} &
A_{8} &\rmdef (\alpha, r) \, . \, A_{10} + (\beta, r) \, . \, A_{9} \\
A_{9} &\rmdef (\alpha, r) \, . \, A_{7} + (\alpha, r) \, . \, A_{8} &
A_{10} &\rmdef (\beta, r) \, . \, A_{7} + (\beta, r) \, . \, A_{8}
\end{align*}
with $\alpha, \beta \neq \tau$ and $r \in \mathbb{R}_{> 0}$, whose underlying derivation graph is depicted
in Figure~\ref{fig:rev_no_collapse}.

	\begin{figure}[b]

\begin{center}
\begin{tikzpicture}[modal, node distance = 1.8cm]


\node[state] (a7)		         {\makebox[-4pt][c]{$A_{7}$}};
\node[state] (a8)  [right = 3.5cm of a7] {\makebox[-4pt][c]{$A_{8}$}};
\node[state] (a9)  [below = of a7]       {\makebox[-4pt][c]{$A_{9}$}};
\node[state] (a10) [right = 3.5cm of a9] {\makebox[-4pt][c]{$A_{10}$}};


\path[->] (a7)  edge[transform canvas = {xshift = -2pt}] 
                node[above left, xshift = 2pt, yshift = 12pt] {$\alpha, r$} (a9);

\path[->] (a9)  edge[transform canvas = {xshift = 2pt}] 
                node[below right, xshift = -2pt, yshift = -12pt] {$\alpha, r$} (a7);

\path[->] (a8)  edge[transform canvas = {xshift = 2pt}] 
                node[above right, xshift = -2pt, yshift = 12pt] {$\alpha, r$} (a10);

\path[->] (a10) edge[transform canvas = {xshift = -2pt}] 
                node[below left, xshift = 2pt, yshift = -9pt] {$\beta, r$} (a8);

\path[->] (a7)  edge[transform canvas = {xshift = 2pt, yshift = 2pt}] 
                node[above left, xshift = -20pt, yshift = 20pt] {$\beta, r$} (a10);

\path[->] (a10) edge[transform canvas = {xshift = -2pt, yshift = -2pt}] 
                node[below right, xshift = 20pt, yshift = -20pt] {$\beta, r$} (a7);

\path[->] (a8)  edge[transform canvas = {xshift = -2pt, yshift = 2pt}] 
                node[above right, xshift = 20pt, yshift = 20pt] {$\beta, r$} (a9);
                 
\path[->] (a9)  edge[transform canvas = {xshift = 2pt, yshift = -2pt}] 
                node[below left, xshift = -20pt, yshift = -24pt] {$\alpha, r$} (a8);

\end{tikzpicture}
\end{center}

\caption{Derivation graph for the components in Example~\ref{ex:rev_no_collapse}.}
\label{fig:rev_no_collapse}

	\end{figure}

The underlying CTMC has symmetric rates, i.e., $q(s, s') = q(s', s)$ for all distinct states $s, s' \in \{
A_{7}, A_{8}, A_{9}, A_{10} \}$. Therefore, the detailed balance equations hold with the uniform state
probability distribution (which assigns probability $1 / 4$ to each state), meaning that the CTMC is time
reversible due to Theorem~\ref{thm:detailed_balance}.

Notice that the partition $\{ \{ A_{7}, A_{8} \}, \{ A_{9} \}, \{ A_{10} \} \}$ is a strict lumping of the
underlying CTMC. Indeed, for the outgoing condition it holds that $q(A_{7}, \{ A_{9} \}) = r = q(A_{8}, \{
A_{9} \})$ and $q(A_{7}, \{ A_{10} \}) = r = q(A_{8}, \{ A_{10} \})$, while for the incoming condition we
have that $q(\{ A_{7}, A_{8} \}, A_{7}) = 0 = q(\{ A_{7}, A_{8} \}, A_{8})$, $q(\{ A_{9} \}, A_{7}) = r =
q(\{ A_{9} \}, A_{8})$, and $q(\{ A_{10} \}, A_{7}) = r = q(\{ A_{10} \}, A_{8})$.

That partition can be used as a basis to derive a strong exact bisimulation identifying $A_{7}$ and $A_{8}$,
because each of them can be reached with the same cumulative rate $r$ both via~$\alpha$ from the equivalence
class of $A_{9}$ and via $\beta$ from the equivalence class of $A_{10}$. In contrast, the same partition
cannot yield a strong ordinary bisimulation because $A_{7}$ (resp.\ $A_{8}$) reaches via $\alpha$ (resp.\
$\beta$) only the equivalence class of $A_{9}$ (resp.\ $A_{10}$). To overcome this, $A_{9}$ and $A_{10}$
could be placed in the same equivalence class, but then $A_{9}$ (resp.\ $A_{10}$) would reach via $\alpha$
(resp.\ $\beta$) the equivalence class of $A_{7}$ and $A_{8}$ with cumulative rate $2 \cdot r$, while
$A_{10}$ (resp.\ $A_{9}$) would do with cumulative rate $0$.
	\end{exa}

	\begin{rem}\label{rem:single_type}
The distinction disappears when the activity types convey no information. \linebreak If all the activities
of the components at hand share a single action type, then the per-type conditions of bisimilarities boil
down to the overall conditions of lumpings. In that case, the argument proving
Theorem~\ref{thm:rev_exact_strict} lifts from the underlying CTMCs to the components, yielding $\sim_{\rm e}
{=} \sim_{\rm s}$ and $\approx_{\rm e} {=} \approx_{\rm s}$ under time reversibility.
	\end{rem}

%
\subsection{Taxonomies over $\cC_{\rm n\tau}$ and $\cC_{\rm nr}$}
\label{subsec:taxonomy_ntau_nr_comp}
%

As expected, for each of the three pairs of stochastic bisimilarities, the strong version and the weak
version coincide in the absence of $\tau$-activities within the considered components.

	\begin{thm}[Taxonomy over $\cC_{\rm n\tau}$]\label{thm:taxonomy_nt}
The following relationships hold among the six stochastic bisimilarities over $\cC_{\rm n\tau}$:
\[
\sim_{\rm s} {\subset} \sim_{\rm o} {=} \approx_{\rm o} \qquad
\sim_{\rm s} {\subset} \sim_{\rm e} {=} \approx_{\rm e} \qquad
\sim_{\rm s} {=} \approx_{\rm s} \qquad
\]
	\end{thm}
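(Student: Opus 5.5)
The plan is to split the statement into the three equalities between strong and weak variants and the two strict inclusions, reusing Theorem~\ref{thm:taxonomy} for the direction that holds over all of $\cC$.

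\emph{Equalities.} Since $\cC_{\rm n\tau}$ is closed under derivatives, for $P_{1}, P_{2} \in \cC_{\rm n\tau}$ every derivative in $\mathit{ds}(P_{1}) \cup \mathit{ds}(P_{2})$ also lies in $\cC_{\rm n\tau}$. Hence all quantities $q[P, C, \tau]$, $q[P, \tau]$ and $q[C \cap \mathit{ds}(P), P, \tau]$ vanish for every component $P$ involved and every class $C$. Consequently, in Definitions~\ref{def:weak_ordinary_bisim}, \ref{def:weak_exact_bisim} and~\ref{def:weak_strict_bisim}, every clause with $\alpha = \tau$ reads $0 = 0$. The same holds for the $\tau$-clauses of the strong definitions, both for classes containing $P_{1}, P_{2}$ and for the remaining classes. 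The clauses for $\alpha \neq \tau$ are literally identical in the strong and weak variants. I therefore take a weak bisimulation $\cB$ and restrict it to $\cC_{\rm n\tau}$: I keep the classes intersected with $\cC_{\rm n\tau}$ and make all other components singletons. The key check is that this restriction is still an equivalence relation satisfying the conditions. This holds because incoming and outgoing sums for components of $\cC_{\rm n\tau}$ only involve components of $\mathit{ds}(P)$, which all lie in $\cC_{\rm n\tau}$. The restricted relation is then a strong bisimulation of the same kind. Together with the inclusions of Theorem~\ref{thm:taxonomy}, this gives $\sim_{\rm o} = \approx_{\rm o}$, $\sim_{\rm e} = \approx_{\rm e}$ and $\sim_{\rm s} = \approx_{\rm s}$ on $\cC_{\rm n\tau}$. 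The restriction step is where the most care is needed, because the outgoing conditions sum over whole classes, which may contain components outside $\cC_{\rm n\tau}$. Since sums of the form $q[P, C, \alpha]$ only count derivatives of $P$, I expect that intersecting with $\cC_{\rm n\tau}$ does not change any of them.

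\emph{Strict inclusions.} The inclusions $\sim_{\rm s} \subseteq \sim_{\rm o}$ and $\sim_{\rm s} \subseteq \sim_{\rm e}$ carry over from Theorem~\ref{thm:taxonomy}. For strictness, I use the witnesses of Examples~\ref{ex:ordinary_vs_exact} and~\ref{ex:exact_vs_ordinary}. They require $\alpha, \alpha', \beta \neq \tau$, so $A_{3}, A_{4}, A_{5}, A_{6} \in \cC_{\rm n\tau}$. The pair $A_{3}, A_{4}$ is related by $\sim_{\rm o}$ but not by $\sim_{\rm s}$, and the pair $A_{5}, A_{6}$ is related by $\sim_{\rm e}$ but not by $\sim_{\rm s}$. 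These arguments only examine derivatives, so they remain valid over $\cC_{\rm n\tau}$.

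In addition, $\sim_{\rm o}$ and $\sim_{\rm e}$ remain incomparable over $\cC_{\rm n\tau}$, by the same two pairs of witnesses.
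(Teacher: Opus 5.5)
Your proposal is correct and follows essentially the same route as the paper. The paper's proof simply observes that all $\tau$-clauses of the weak definitions are vacuously satisfied over $\cC_{\rm n\tau}$ and that the remaining clauses coincide with those of the strong definitions. Your restriction step (intersecting each class with $\cC_{\rm n\tau}$ and using $\mathit{ds}(P) \subseteq \cC_{\rm n\tau}$) makes explicit a detail the paper glosses over: a weak bisimulation over all of $\cC$ may put $\tau$-free and non-$\tau$-free components in the same class. Your strictness witnesses $A_{3}, A_{4}$ and $A_{5}, A_{6}$ are the $\tau$-free examples the paper relies on implicitly.
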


	\begin{proof}
It follows from the fact that all the conditions on $\tau$-activities appearing in
Definitions~\ref{def:weak_ordinary_bisim}, \ref{def:weak_exact_bisim}, and~\ref{def:weak_strict_bisim} are
vacuously satisfied and the remaining conditions coincide with those of the corresponding strong
equivalences in Definitions~\ref{def:strong_ordinary_bisim}, \ref{def:strong_exact_bisim},
and~\ref{def:strong_strict_bisim}.
\qedhere
	\end{proof}

Given their relationships with ordinary/exact/strict lumpability, the weak bisimilarities considered in this
paper are more restrictive than the classical ones. They abstract from the cumulative rates of
$\tau$-activities as long as these activities do not cause the departure from an equivalence class. This has
to do with repetitive behaviors, which in process algebras are expressed through recursion, and affects the
strong bisimilarities considered in this paper too, as ordinary and exact bisimilarities are no longer
incomparable in the absence of recursion. It is therefore instructive to see what happens if we rule out
recursion with a few examples, which revisit those of Section~\ref{subsec:taxonomy_all_comp}.

	\begin{exa}[Weak vs.\ strong revisited]\label{ex:weak_vs_strong_nr}
Consider the two non-recursive components:
\[
P_{1} \: = \: (\alpha, r_{\alpha}) \, . \, \underline{0}
\qquad
P_{2} \: = \: (\alpha, r_{\alpha}) \, . \, \underline{0} + (\tau, r_{\tau}) \, . \, \underline{0}
\]
with $\alpha \neq \tau$ and $r_{\alpha}, r_{\tau} \in \mathbb{R}_{> 0}$. The derivation graph of the former
has a single transition, which is labelled with $\alpha$ and goes to the terminal state corresponding to
$\underline{0}$, whilst the derivation graph of the latter has two transitions, one labelled with $\alpha$
and the other with $\tau$, both going to the terminal state corresponding to $\underline{0}$.

No equivalence relation equating $P_{1}$ and $P_{2}$ is a weak ordinary/exact/strict bisimulation. In the
ordinary and strict cases, $P_{1}$ reaches via $\tau$ the equivalence class of $\underline{0}$ with
cumulative rate $0$, while $P_{2}$ reaches via $\tau$ the same equivalence class with cumulative rate
$r_{\tau}$; to avoid such a comparison, $\underline{0}$ should be in the same equivalence class as $P_{1}$
and $P_{2}$, but this is not possible because, unlike $P_{1}$ and $P_{2}$, $\underline{0}$ cannot execute
any $\alpha$-activity. In the exact case, the total rate of $P_{1}$ via $\tau$ is $0$, while for $P_{2}$ it
is $r_{\tau}$; to avoid such a comparison, again $\underline{0}$ should be in the same equivalence class as
$P_{1}$ and $P_{2}$, but this is not possible due to $\alpha$-activities.

A fortiori, there is no strong ordinary/exact/strict bisimulation relating $P_{1}$ and $P_{2}$.
	\end{exa}

	\begin{exa}[Weak discrimination revisited]\label{ex:no_alpha_tau_p_nr}
Consider the two non-recursive components:
\[
P'_{1} \: = \: (\alpha, r) \, . \, (\tau, r_{\tau}) \, . \, \underline{0}
\qquad
P'_{2} \: = \: (\alpha, r) \, . \, \underline{0}
\]
with $\alpha \neq \tau$ and $r, r_{\tau} \in \mathbb{R}_{> 0}$. The derivation graph of the former has a
sequence formed by one $\alpha$-transition and one $\tau$-transition going to the terminal state
corresponding to $\underline{0}$, whereas the derivation graph of the latter has one $\alpha$-transition
going to the terminal state corresponding to $\underline{0}$.

The equivalence relation whose only non-singleton equivalence classes are $C = \{ P'_{1}, P'_{2} \}$ and $C'
= \{ (\tau, r_{\tau}) \, . \, \underline{0}, \underline{0} \}$ is a weak ordinary bisimulation. Indeed,
either component in~$C$ reaches $C'$ via $\alpha$ with cumulative rate $r$ and either component in $C'$
reaches no other class via an activity of type different from $\tau$. Note that it is not a strong ordinary
bisimulation. Furthermore $(\tau, r_{\tau}) \, . \, \underline{0}$ and $\underline{0}$ are weakly ordinary
bisimilar too.

In contrast, no equivalence relation with the same equivalence classes $C$ and $C'$ is a weak exact/strict
bisimulation because $q[C \cap \mathit{ds}((\tau, r_{\tau}) \, . \, \underline{0}), (\tau, r_{\tau}) \, . \,
\underline{0}, \alpha] = 0 = q[C \cap \mathit{ds}(\underline{0}), \underline{0}, \alpha]$ but $q[C' \cap
\mathit{ds}((\tau, r_{\tau}) \, . \, \underline{0}), (\tau, r_{\tau}) \, . \, \underline{0}, \tau] -
q[(\tau, r_{\tau}) \, . \, \underline{0}, \tau] = 0 - r_{\tau} \neq 0 - 0 = q[C' \cap
\mathit{ds}(\underline{0}), \underline{0}, \tau] - q[\underline{0}, \tau]$ (hence $(\tau, r_{\tau}) \, . \,
\underline{0}$ and $\underline{0}$ cannot be related by any weak exact/strict bisimulation). If we refine
$C'$ into $\{ (\tau, r_{\tau}) \, . \, \underline{0} \}$ and $\{ \underline{0} \}$ then we obtain a weak
exact bisimulation relating $P'_{1}$ and $P'_{2}$ that provides no $\tau$-abstraction capability though,
i.e., a strong exact bisimulation.

Also note that $(\tau, r_{1}) \, . \, \underline{0}$ and $(\tau, r_{2}) \, . \, \underline{0}$ are related
by weak exact/strict bisimilarity iff $r_{1} = r_{2}$. Indeed, if they belong to the same equivalence class
$C$, then it must be the case that $q[C \cap \mathit{ds}((\tau, r_{1}) \, . \, \underline{0}), (\tau, r_{1})
\, . \, \underline{0}, \tau] - q[(\tau, r_{1}) \, . \, \underline{0}, \tau] = 0 - r_{1} = 0 - r_{2} = q[C
\cap \mathit{ds}((\tau, r_{2}) \, . \, \underline{0}), \underline{0}, \tau] - q[(\tau, r_{2}) \, . \,
\underline{0}, \tau]$.
	\end{exa}

	\begin{exa}[Ordinary vs.\ exact revisited]\label{ex:ordinary_vs_exact_nr}
Consider the two non-recursive components:
\[
P_{3} \: = \: (\alpha, r) \, . \, \underline{0} + (\alpha, r) \, . \, (\underline{0} + \underline{0})
\qquad
P_{4} \: = \: (\alpha, 2 \cdot r) \, . \, (\underline{0} + \underline{0} + \underline{0})
\]
with $\alpha \neq \tau$ and $r \in \mathbb{R}_{> 0}$. The derivation graph of the former has one
$\alpha$-transition with rate $r$ to the terminal state corresponding to $\underline{0}$ and one
$\alpha$-transition with rate $r$ to the terminal state corresponding to $\underline{0} + \underline{0}$,
whereas the derivation graph of the latter has just one transition, with type $\alpha$ and rate $2 \cdot r$,
to the terminal state corresponding to $\underline{0} + \underline{0} + \underline{0}$ (see
Example~\ref{ex:ord_exact_strict_lump_diff}).

The equivalence relation whose only non-singleton equivalence classes are $C = \{ P_{3}, P_{4} \}$ and $C' =
\{ \underline{0}, \underline{0} + \underline{0}, \underline{0} + \underline{0} + \underline{0} \}$ is a
strong ordinary bisimulation. Indeed, either component in~$C$ reaches $C'$ via $\alpha$ with cumulative rate
$2 \cdot r$ and any component in $C'$ reaches no class.

The same equivalence relation is a strong exact/strict bisimulation too, because \linebreak $q[C \cap
\mathit{ds}(\underline{0}), \underline{0}, \alpha'] = q[C \cap \mathit{ds}(\underline{0} + \underline{0}),
\underline{0} + \underline{0}, \alpha'] = q[C \cap \mathit{ds}(\underline{0} + \underline{0} +
\underline{0}), \underline{0} + \underline{0} + \underline{0}, \alpha'] = 0$ for all $\alpha' \in \cA$.
	\end{exa}

	\begin{exa}[Exact vs.\ ordinary revisited]\label{ex:exact_vs_ordinary_nr}
Consider the two non-recursive components:
\[
P_{5} \: = \: (\alpha, r) \, . \, (\alpha', r') \, . \, \underline{0}
\qquad
P_{6} \: = \: (\alpha, r) \, . \, (\alpha'', r'') \, . \, \underline{0}
\]
with $\alpha, \alpha', \alpha'' \neq \tau$, $r, r', r'' \in \mathbb{R}_{> 0}$, and $\alpha' \neq \alpha''$
or $r' \neq r''$. The derivation graph of the former has one $\alpha$-transition with rate $r$ followed by
one $\alpha'$-transition with rate $r'$ to the terminal state corresponding to $\underline{0}$, whilst the
derivation graph of the latter has one $\alpha$-transition with rate $r$ followed by one
$\alpha''$-transition with rate $r''$ to the terminal state corresponding to $\underline{0}$.

The equivalence relation whose only non-singleton equivalence class is $C = \{ P_{5}, P_{6} \}$ is a strong
exact bisimulation. Indeed, either component in $C$ has the same total exit rate $r$ via $\alpha$ and $0$
via any other action type and is reached with cumulative rate $0$ from any equivalence class via any action
type.

In contrast, the same equivalence relation is not a strong ordinary bisimulation because $q[P_{5}, \{
(\alpha', r') \, . \, \underline{0} \}, \alpha] \neq q[P_{6}, \{ (\alpha', r') \, . \, \underline{0} \},
\alpha]$ and $q[P_{5}, \{ (\alpha'', r'') \, . \, \underline{0} \}, \alpha] \neq q[P_{6}, \{ (\alpha'', r'')
\, . \, \underline{0} \}, \alpha]$ due to $\alpha' \neq \alpha''$ or $r' \neq r''$.
	\end{exa}

We conclude by showing how the taxonomy changes if we restrict ourselves to $\cC_{\rm nr}$. In particular,
exact bisimilarity boils down to require, for each action type, the same total exit rate via that type.
Likewise, only the outgoing condition matters for strict bisimilarity.

	\begin{lem}\label{lem:incoming_rate_nr}
Let $\sim$ be an equivalence relation over $\cC$. Then $q[C \cap \mathit{ds}(P), P, \alpha] = 0$ for all $P
\in \cC_{\rm nr}$, $\alpha \in \cA$, and $C \in \cC / {\sim}$.
	\end{lem}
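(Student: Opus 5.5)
The plan is to prove that, for $P \in \cC_{\rm nr}$, no state of $\mathit{ds}(P)$ has a transition to $P$. Then, since $q[C \cap \mathit{ds}(P), P, \alpha] = \sum_{P' \in C \cap \mathit{ds}(P)} q(P', P, \alpha)$ and each summand is a sum over transitions $P' \transits{(\alpha, r)} P$, every summand is an empty sum. The whole quantity is therefore $0$, regardless of the equivalence relation $\sim$, the class $C$, or the type $\alpha$. So the lemma reduces to a purely structural fact about derivation graphs of recursion-free components: $P \notin \mathit{ds}(P')$ for every $P' \in \mathit{ds}(P)$ with a transition $P' \transits{a} P$. In other words, the derivation graph of $P$ is acyclic and, in particular, selfloop free.

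To establish acyclicity, I will introduce a size measure $|\cdot|$ on recursion-free components. Let $|\underline{0}| = 0$, $|a \, . \, Q| = 1 + |Q|$, $|Q_{1} + Q_{2}| = |Q_{1} \sync{L} Q_{2}| = 1 + |Q_{1}| + |Q_{2}|$, and $|Q \, / \, L| = 1 + |Q|$. In $\cC_{\rm nr}$ no constants occur, so the constant rule is never applicable and this measure is well defined. By induction on the derivation of a transition via the rules in Table~\ref{tab:sem_rules}, I will show that $Q \transits{a} Q'$ implies $|Q'| < |Q|$, and that $Q'$ is again recursion free.

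The induction goes rule by rule. For prefix, $|Q| < 1 + |Q|$. For choice, $|Q'_{i}| < |Q_{i}| < |Q_{1} + Q_{2}|$. For the two interleaving cooperation rules, only one side shrinks; for synchronisation, both sides shrink. Hiding preserves the outer wrapper while the argument shrinks. Closure of $\cC_{\rm nr}$ under transitions is immediate because rules never introduce constants.

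From strict decrease along every transition, any path $P = Q_{0} \transits{} Q_{1} \transits{} \cdots \transits{} Q_{k}$ with $k \ge 1$ has $|Q_{k}| < |P|$, hence $Q_{k} \neq P$ as syntactic terms. Every $P' \in \mathit{ds}(P)$ satisfies $|P'| \le |P|$, so a transition $P' \transits{a} P$ would give $|P| < |P'| \le |P|$, a contradiction. This yields the claim. The only point needing care is that states are identified with syntactic terms, so distinct sizes really mean distinct states; otherwise the argument is routine. I expect no genuine obstacle beyond checking the synchronisation and hiding cases of the induction.
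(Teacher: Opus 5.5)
Your proposal is correct and takes the same route as the paper. The paper's proof simply notes that the derivation graph of a recursion-free component is acyclic, so no state of $\mathit{ds}(P)$, including $P$ itself, has a transition into $P$ and each summand of $q[C \cap \mathit{ds}(P), P, \alpha]$ is an empty sum; your strictly decreasing size measure makes that acyclicity explicit (and if $\cC_{\rm nr}$ is read as allowing non-recursive constants, it extends by setting $|A| = 1 + |Q|$ for $A \rmdef Q$, which is well defined because such definitions are well founded).
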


	\begin{proof}
A straightforward consequence of the fact that the derivation graph of every $P \in \cC_{\rm nr}$ is a
directed acyclic graph, hence no component in $\mathit{ds}(P)$ has a transition to $P$.
\qedhere
	\end{proof}

	\begin{thm}[Taxonomy over $\cC_{\rm nr}$]\label{thm:taxonomy_nr}
The following relationships hold among the six stochastic bisimilarities over $\cC_{\rm nr}$:
\[
\sim_{\rm s} {=} \sim_{\rm o} {\subset} \sim_{\rm e} \qquad
\approx_{\rm s} {\subset} \approx_{\rm o} \qquad
\approx_{\rm s} {\subset} \approx_{\rm e} \qquad
\sim_{\rm o} {\subset} \approx_{\rm o} \qquad
\sim_{\rm e} {=} \approx_{\rm e} \qquad
\sim_{\rm s} {=} \approx_{\rm s}
\]
	\end{thm}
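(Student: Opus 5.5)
The plan is to reduce everything to Lemma~\ref{lem:incoming_rate_nr}: over $\cC_{\rm nr}$ every incoming cumulative rate $q[C \cap \mathit{ds}(P), P, \alpha]$ vanishes, so all incoming conditions in Definitions~\ref{def:strong_exact_bisim}, \ref{def:weak_exact_bisim}, \ref{def:strong_strict_bisim}, and~\ref{def:weak_strict_bisim} either become trivial or collapse to conditions on outgoing rates. I will first simplify each definition on pairs $P_{1}, P_{2} \in \cC_{\rm nr}$. Then I will read off the equalities, derive the inclusions, and finally use the non-recursive examples to show that the remaining inclusions are strict.

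\textbf{Simplification of the definitions.}
\begin{itemize}
\item Strong exact bisimulation on $\cC_{\rm nr}$ amounts to $q[P_{1}, \alpha] = q[P_{2}, \alpha]$ for all $\alpha$.
\item Weak exact bisimulation gives the same for $\alpha \neq \tau$. For $\alpha = \tau$, the condition towards a class $C$ not containing $P_{1}, P_{2}$ reads $0 = 0$. The condition on the own class reads $0 - q[P_{1}, \tau] = 0 - q[P_{2}, \tau]$, i.e.\ equal total $\tau$-rates.
\item Hence both exact notions reduce to ``equal total exit rate per type'', which yields $\sim_{\rm e} {=} \approx_{\rm e}$.
\item Strong strict bisimulation reduces to its outgoing part, giving $\sim_{\rm s} {=} \sim_{\rm o}$.
\item Weak strict bisimulation reduces to the weak ordinary outgoing conditions plus $q[P_{1}, \tau] = q[P_{2}, \tau]$. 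Since $q[P, \tau] = \sum_{C} q[P, C, \tau]$ and the $\tau$-rates towards all classes other than the own class $C_{0}$ already coincide, subtracting gives $q[P_{1}, C_{0}, \tau] = q[P_{2}, C_{0}, \tau]$. So the relation is a strong ordinary bisimulation, and $\approx_{\rm s} {=} \sim_{\rm s} {=} \sim_{\rm o}$.
\end{itemize}

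\textbf{Inclusions.} $\sim_{\rm o} {\subseteq} \sim_{\rm e}$ follows by summing outgoing conditions over all classes. The remaining inclusions are inherited from Theorem~\ref{thm:taxonomy} or follow from the equalities above.

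\textbf{Strictness.}
\begin{itemize}
\item $P_{5}, P_{6}$ of Example~\ref{ex:exact_vs_ordinary_nr} are related by $\sim_{\rm e} {=} \approx_{\rm e}$ but not by $\sim_{\rm o} {=} \approx_{\rm s}$. This separates $\sim_{\rm o}$ from $\sim_{\rm e}$ and $\approx_{\rm s}$ from $\approx_{\rm e}$.
\item $P'_{1}, P'_{2}$ of Example~\ref{ex:no_alpha_tau_p_nr} are weakly ordinary bisimilar. They are not strongly ordinary bisimilar, because $(\tau, r_{\tau}) \, . \, \underline{0}$ and $\underline{0}$ must then be separated, and $P'_{1}, P'_{2}$ reach different classes. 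This separates $\sim_{\rm o}$ and $\approx_{\rm s}$ from $\approx_{\rm o}$.
\end{itemize}

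\textbf{Main obstacle.} Bisimulations are equivalences over all of $\cC$, while the statement concerns only pairs in $\cC_{\rm nr}$. To turn, e.g., $\approx_{\rm s}$ into a strong ordinary bisimulation, I would take $\cB' = \{ (P_{1}, P_{2}) \in {\approx_{\rm s}} \mid P_{1}, P_{2} \in \cC_{\rm nr} \} \cup \mathit{Id}_{\cC}$.

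The classes of $\cB'$ inside $\cC_{\rm nr}$ are exactly the sets $C \cap \cC_{\rm nr}$ for $C$ a class of $\approx_{\rm s}$. The classes outside are singletons. Since derivatives of non-recursive components are non-recursive, $q[P, C \cap \cC_{\rm nr}, \alpha] = q[P, C, \alpha]$ for $P \in \cC_{\rm nr}$, so the per-class conditions transfer. Identity pairs satisfy every condition trivially.

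The same restriction trick handles the exact case and the other equalities.
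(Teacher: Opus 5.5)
Your proposal is correct and follows essentially the paper's route: every equality and inclusion is reduced to Lemma~\ref{lem:incoming_rate_nr}, and strictness is witnessed by the same non-recursive pairs from Examples~\ref{ex:no_alpha_tau_p_nr} and~\ref{ex:exact_vs_ordinary_nr}. Two details are handled more carefully than in the paper: you derive $q[P_{1}, C_{0}, \tau] = q[P_{2}, C_{0}, \tau]$ on the own class by subtracting from the equal total $\tau$-rates, whereas the paper simply asserts these rates are $0$ (true, but it needs an extra argument along the finite acyclic derivation graph); and you make explicit the restriction-plus-identity construction that turns the bisimilarity on $\cC_{\rm nr}$ into a bisimulation over all of $\cC$.
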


	\begin{proof}
We have that:

		\begin{itemize}
\item $\sim_{\rm s} {=} \sim_{\rm o}$ stems from Lemma~\ref{lem:incoming_rate_nr}.

\item $\sim_{\rm o} {\subset} \sim_{\rm e}$ stems from Lemma~\ref{lem:incoming_rate_nr} with strictness
being witnessed by Example~\ref{ex:exact_vs_ordinary_nr}.

\item $\approx_{\rm s} {\subset} \approx_{\rm o}$ stems from Theorem~\ref{thm:taxonomy} with strictness
being witnessed by Example~\ref{ex:no_alpha_tau_p_nr}.

\item $\approx_{\rm s} {\subset} \approx_{\rm e}$ stems from Theorem~\ref{thm:taxonomy} with strictness
being witnessed by Example~\ref{ex:exact_vs_ordinary_nr}.

\item $\sim_{\rm o} {\subset} \approx_{\rm o}$ stems from Theorem~\ref{thm:taxonomy} with strictness being
witnessed by Example~\ref{ex:no_alpha_tau_p_nr}.

\item $\sim_{\rm e} {\subseteq} \approx_{\rm e}$ stems from Theorem~\ref{thm:taxonomy}. \\
Suppose now that $P_{1} \approx_{\rm e} P_{2}$ for two arbitrary $P_{1}, P_{2} \in \cC_{\rm nr}$.
Then:

			\begin{itemize}
\item $q[P_{1}, \alpha] = q[P_{2}, \alpha]$ for all $\alpha \in \cA \setminus \{ \tau \}$.

\item $q[P_{1}, \tau] = q[P_{2}, \tau]$ because, if we denote by $C$ the equivalence class with respect to
$\approx_{\rm e}$ containing $P_{1}$ and $P_{2}$, then it must be the case that $q[C \cap
\mathit{ds}(P_{1}), P_{1}, \tau] - q[P_{1}, \tau] = q[C \cap \mathit{ds}(P_{2}), P_{2}, \tau] - q[P_{2},
\tau]$ with $q[C \cap \mathit{ds}(P_{1}), P_{1}, \tau] = 0 = q[C \cap \mathit{ds}(P_{2}), P_{2}, \tau]$ due
to Lemma~\ref{lem:incoming_rate_nr}.

\item $q[C \cap \mathit{ds}(P_{1}), P_{1}, \alpha] = 0 = q[C \cap \mathit{ds}(P_{2}), P_{2}, \alpha]$ due to
Lemma~\ref{lem:incoming_rate_nr} for all $\alpha \in \cA$ and $C \in \cC / {\approx_{\rm e}}$.
			\end{itemize}

\item $\sim_{\rm s} {\subseteq} \approx_{\rm s}$ stems from Theorem~\ref{thm:taxonomy}. \\
Suppose now that $P_{1} \approx_{\rm s} P_{2}$ for two arbitrary $P_{1}, P_{2} \in \cC_{\rm nr}$.
Then:

			\begin{itemize}
\item $q[P_{1}, C, \alpha] = q[P_{2}, C, \alpha]$ for all $\alpha \in \cA$ and $C \in \cC / {\approx_{\rm
s}}$ such that either $\alpha \neq \tau$, or $\alpha = \tau$ and $P_{1}, P_{2} \notin C$.

\item $q[P_{1}, C, \tau] = 0 = q[P_{2}, C, \tau]$ for $C \in \cC / {\approx_{\rm s}}$ such that $P_{1},
P_{2} \in C$ because $P_{1}, P_{2} \in \cC_{\rm nr}$.

\item $q[C \cap \mathit{ds}(P_{1}), P_{1}, \alpha] = 0 = q[C \cap \mathit{ds}(P_{2}), P_{2}, \alpha]$ due to
Lemma~\ref{lem:incoming_rate_nr} for all $\alpha \in \cA$ and $C \in \cC / {\approx_{\rm s}}$.
			\end{itemize}

\item The incomparability of $\approx_{\rm o}$ and $\approx_{\rm e}$ stems from
Examples~\ref{ex:no_alpha_tau_p_nr} and~\ref{ex:exact_vs_ordinary_nr}.
\qedhere
		\end{itemize}
	\end{proof}

%
%
\section{Compositionality Properties}
\label{sec:congruence}
%
%

In this section we investigate the congruence property of the six bisimilarities with respect to the
operators of PEPA. This is an important property as it enables compositional reasoning, in the sense that it
$(i)$ allows equivalent subcomponents to be replaced within arbitrary components without modifying the
overall functional and performance behavior of the latter and $(ii)$ supports compositional state space
minimization without altering the original semantics. We separately examine strong bisimilarities
(Section~\ref{sec:congruence_strong}) and weak bisimilarities (Section~\ref{sec:congruence_weak}) and show
that some of them are not congruences with respect to prefix and/or choice. In that case we single out
either a set of components over which congruence with respect to those operators is achieved, or the
coarsest congruence with respect to them that is contained in the considered bisimilarity.

%
\subsection{Compositionality of Strong Bisimilarities}
\label{sec:congruence_strong}
%

The congruence property of strong ordinary bisimilarity with respect to all the operators of PEPA was proved
in~\cite{hillston:book}, where it is called strong equivalence.

	\begin{thm}[Congruence of $\sim_{\rm o}$ {\cite{hillston:book}}]
	\label{thm:congr_strong_ordinary_bisim}
Let $P_{1}, P_{2} \in \cC$. If $P_{1} \sim_{\rm o} P_{2}$ then:

		\begin{itemize}
\item $a \, . \, P_{1} \sim_{\rm o} a \, . \, P_{2}$ for all $a \in \cA \times \cR$.

\item $P_{1} + P \sim_{\rm o} P_{2} + P$ and $P + P_{1} \sim_{\rm o} P + P_{2}$ for all $P \in \cC$.

\item $P_{1} \sync{L} P \sim_{\rm o} P_{2} \sync{L} P$ and $P \sync{L} P_{1} \sim_{\rm o} P \sync{L} P_{2}$
for all $L \subseteq \cA \setminus \{ \tau \}$ and $P \in \cC$.

\item $P_{1} \, / \, L \sim_{\rm o} P_{2} \, / \, L$ for all $L \subseteq \cA \setminus \{ \tau \}$.
		\end{itemize}
	\end{thm}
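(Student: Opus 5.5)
For each operator we exhibit a candidate equivalence relation containing the required pair and show it is a strong ordinary bisimulation (Definition~\ref{def:strong_ordinary_bisim}). Write $\cB = {\sim_{\rm o}}$, which is itself the largest strong ordinary bisimulation and is an equivalence. In each case the candidate is the smallest equivalence relation containing $\cB$ and the new pairs. The recurring fact is this: if $\cB \subseteq \cB'$ with $\cB'$ an equivalence, then every $\cB'$-class is a disjoint union of $\cB$-classes. Hence, whenever $q[P_{1}, D, \alpha] = q[P_{2}, D, \alpha]$ for all $\cB$-classes $D$, summing gives the same equality for every $\cB'$-class.

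\emph{Prefix and choice.} Take $\cB' = (\cB \cup \{ (a . P_{1}, a . P_{2}), (a . P_{2}, a . P_{1}) \})$ closed under reflexivity, and similarly for $P_{1} + P$ versus $P_{2} + P$. Pairs already in $\cB$ satisfy the condition by the union-of-classes remark. For $a . P_{1}$ versus $a . P_{2}$ with $a = (\alpha, r)$, the only transitions are to $P_{1}$ and $P_{2}$, which lie in the same $\cB'$-class. So the rates agree for type $\alpha$ and are $0$ for other types.

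For choice, the multiset semantics gives
\[
q[P_{i} + P, C, \alpha] = q[P_{i}, C, \alpha] + q[P, C, \alpha].
\]
The first summands agree by the remark, and the second summands coincide. Guardedness ensures that the derivatives are unchanged, so no issue with constants arises. A technical point is whether $a . P_{1}$ might already be $\cB$-related to other terms. Taking the equivalence closure (transitive closure) could then merge classes, so I would verify that transitivity preserves the condition. This holds because the condition is an equality of functions of the component, and equality is transitive.

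\emph{Hiding.} Let $\cB' = \{ (Q_{1} / L, Q_{2} / L) \mid Q_{1} \cB Q_{2} \} \cup \mathrm{Id}$. A $\cB'$-class of hidden terms has the form $\{ Q / L \mid Q \in D \}$ for a $\cB$-class $D$, and transitions of $Q / L$ go only to terms $Q' / L$. Therefore
\[
q[Q / L, D / L, \alpha] = q[Q, D, \alpha] \quad \text{for } \alpha \notin L \cup \{ \tau \},
\]
\[
q[Q / L, D / L, \tau] = q[Q, D, \tau] + \sum_{\beta \in L} q[Q, D, \beta],
\]
and the rate is $0$ for $\alpha \in L$. All of these agree for $\cB$-related $Q$.

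\emph{Cooperation: the main obstacle.} Let $\cB' = \{ (Q_{1} \sync{L} R_{1}, Q_{2} \sync{L} R_{2}) \mid Q_{1} \cB Q_{2}, R_{1} \cB R_{2} \}$, which is an equivalence whose classes are products $D \sync{L} E$. For $\alpha \notin L$, the interleaving rules give
\[
q[Q \sync{L} R, D \sync{L} E, \alpha] = q[Q, D, \alpha] \cdot [R \in E] + [Q \in D] \cdot q[R, E, \alpha],
\]
which is invariant under the relation. For $\alpha \in L$, the synchronisation rule gives, summing over target pairs,
\[
\frac{q[Q, D, \alpha]}{r_{\alpha}(Q)} \cdot \frac{q[R, E, \alpha]}{r_{\alpha}(R)} \cdot \min(r_{\alpha}(Q), r_{\alpha}(R)).
\]
This expression depends only on quantities preserved by $\cB$, because $r_{\alpha}(Q) = q[Q, \alpha]$ is the sum of the class rates.

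The delicate part is passive rates $\top$. Here the cumulative "rates" are multiples $n\top$, so I must check that bisimilarity equates the apparent passive weights and that the formula using $\frac{\top}{r_{\alpha}}$ still factors through class sums. I would treat this by defining $q[\cdot]$ over the semiring extended with $\top$, as in~\cite{hillston:book}.
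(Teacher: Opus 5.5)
Your proposal is correct. It follows essentially the route the paper takes for its analogous congruence proofs (for this theorem the paper simply cites Hillston): the equivalence closure of $\cB$ plus the new pairs for prefix and choice, justified by $q[P_{i} + P, C, \alpha] = q[P_{i}, C, \alpha] + q[P, C, \alpha]$ and by $\cB'$-classes being unions of $\cB$-classes, and lifted relations for hiding and cooperation. The differences are minor: your cooperation relation relates both arguments up to $\cB$ at once (classes $D \sync{L} E$) instead of fixing one side, and, as you did for hiding, it must be united with the identity to be an equivalence over all of $\cC$.
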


Now we prove that strong exact bisimilarity and strong strict bisimilarity are congruences with respect to
all the operators of PEPA other than prefix and choice.

	\begin{thm}[Congruence of $\sim_{\rm e}$]\label{thm:congr_strong_exact_bisim}
Let $P_{1}, P_{2} \in \cC$. If $P_{1} \sim_{\rm e} P_{2}$ then:

		\begin{itemize}
\item $P_{1} \sync{L} P \sim_{\rm e} P_{2} \sync{L} P$ and $P \sync{L} P_{1} \sim_{\rm e} P \sync{L} P_{2}$
for all $L \subseteq \cA \setminus \{ \tau \}$ and $P \in \cC$.

\item $P_{1} \, / \, L \sim_{\rm e} P_{2} \, / \, L$ for all $L \subseteq \cA \setminus \{ \tau \}$.
		\end{itemize}
	\end{thm}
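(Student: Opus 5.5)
The plan is to exhibit, for each operator, a candidate relation and check that it is a strong exact bisimulation. For cooperation we take $\cB$ to be the reflexive, symmetric and transitive closure of
\[
\{ (Q_{1} \sync{L} R, Q_{2} \sync{L} R) \mid Q_{1} \sim_{\rm e} Q_{2}, \, R \in \cC \}.
\]
Because the left argument is quotiented by $\sim_{\rm e}$ and the right argument is kept syntactically fixed, each class of $\cB$ has the form $\{ Q \sync{L} R \mid Q \in D \}$ with $D \in \cC / {\sim_{\rm e}}$ and $R$ fixed, plus singletons for components not of this shape. The symmetric case $P \sync{L} P_{1}$ is handled in the same way. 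For hiding we take $\cB = \{ (Q_{1} / L, Q_{2} / L) \mid Q_{1} \sim_{\rm e} Q_{2} \} \cup \mathrm{Id}$, whose classes are $\{ Q / L \mid Q \in D \}$.

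\emph{Outgoing condition.} We need $q[Q_{1} \sync{L} R, \alpha] = q[Q_{2} \sync{L} R, \alpha]$.
\begin{itemize}
\item If $\alpha \notin L$, the total rate is $q[Q_{i}, \alpha] + q[R, \alpha]$, which is equal for $i = 1, 2$.
\item If $\alpha \in L$, summing the synchronisation rule over all pairs of transitions gives $\min(r_{\alpha}(Q_{i}), r_{\alpha}(R))$. Here $r_{\alpha}(Q_{i}) = q[Q_{i}, \alpha]$ is equal for $i = 1, 2$.
\end{itemize}
Passive rates need care: the definition of $\sim_{\rm e}$ compares rates that may be $\top$-multiples, and we will assume the apparent-rate equality extends to those. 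For hiding, $q[Q / L, \beta]$ is $q[Q, \beta]$ when $\beta \notin L \cup \{ \tau \}$, is $0$ when $\beta \in L$, and is $q[Q, \tau] + \sum_{\gamma \in L} q[Q, \gamma]$ when $\beta = \tau$. All of these are preserved.

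\emph{Incoming condition.} This is the main obstacle, because of the restriction to $\mathit{ds}$. We need, for each class $E$ of $\cB$,
\[
q[E \cap \mathit{ds}(Q_{1} \sync{L} R), Q_{1} \sync{L} R, \alpha] = q[E \cap \mathit{ds}(Q_{2} \sync{L} R), Q_{2} \sync{L} R, \alpha].
\]
A predecessor $Q' \sync{L} R'$ of $Q \sync{L} R$ moves either by $Q$ alone (so $R' = R$), by $R$ alone (so $Q' = Q$), or jointly. The membership $Q' \sync{L} R' \in \mathit{ds}(Q \sync{L} R)$ requires joint reachability, which is more than $Q' \in \mathit{ds}(Q)$ and $R' \in \mathit{ds}(R)$ separately.

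The first thing I would try is to decompose the incoming flow according to which component moves.
\begin{itemize}
\item For moves of the left component, the contribution is $q[D' \cap \mathit{ds}(Q), Q, \alpha]$ with $D'$ the relevant $\sim_{\rm e}$-class. This matches for $Q_{1}$ and $Q_{2}$ by hypothesis, provided the reachability side-condition can be dropped.
\item For moves of $R$ alone with $Q' = Q$, the class $E$ containing $Q \sync{L} R'$ is the same for $Q_{1}$ and $Q_{2}$ up to $\cB$. The sum then depends on $R$ only, again modulo reachability.
\item For synchronisations, I would express the rate as the product formula and sum over predecessor classes. This again needs the incoming flows of $Q_{i}$ from each class, weighted by the normalisations $r_{\alpha}(Q')$, and those normalisations are constant on classes because of the outgoing condition.
\end{itemize}

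The delicate issue is showing that the $\mathit{ds}$ filters align. I would attempt a lemma: if $Q_{1} \sim_{\rm e} Q_{2}$ and $Q_{1} \sync{L} R$ reaches some member of $E$, then $Q_{2} \sync{L} R$ reaches a member of $E$ with the same flow. This cannot follow from the incoming conditions alone, so I may need to enlarge $\cB$ to pair all reachable states. I expect this bookkeeping is where the real proof effort lies. Hiding is the easier case since $\mathit{ds}(Q / L) = \mathit{ds}(Q) / L$.
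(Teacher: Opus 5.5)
Up to the point where you stop, your plan coincides with the paper's proof. You use the same candidate relations: the identity plus the pairs $(Q_{1} \sync{L} Q, Q_{2} \sync{L} Q)$, resp.\ $(Q_{1} / L, Q_{2} / L)$. You do the same outgoing computations and split incoming flows the same way into left, right and synchronising moves. At that point the paper simply writes, for $\alpha \notin L$, $Q_{i} \in [Q']_{\cB}$ and $Q = \bar{Q}$, that $q[C \cap \mathit{ds}(Q_{i} \sync{L} Q), Q_{i} \sync{L} Q, \alpha] = q[[Q']_{\cB} \cap \mathit{ds}(Q_{i}), Q_{i}, \alpha] + q[\{ \bar{Q} \}, Q, \alpha]$, with analogous formulas in the other cases. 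It thereby tacitly assumes that the derivative set of a cooperation factorises into those of its arguments. You rightly refuse that assumption, but then the incoming condition is never established, so your proposal has a genuine gap. The gap cannot be filled, because for $L \neq \emptyset$ the cooperation clause is false. With distinct $\beta, \gamma \neq \tau$ let
\[
A \rmdef (\beta, 1) \, . \, Z \qquad
C \rmdef (\beta, \tfrac{1}{2}) \, . \, C + (\beta, \tfrac{1}{2}) \, . \, A \qquad
Z \rmdef (\gamma, 1) \, . \, A + (\gamma, 1) \, . \, C
\]
The identity extended with the class $\{ A, C \}$ is a strong exact bisimulation. Both components have total $\beta$-rate $1$ and no other activity, and $\mathit{ds}(A) = \mathit{ds}(C) = \{ A, C, Z \}$. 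Each receives $\beta$ at rate $\frac{1}{2}$ from $\{ A, C \}$ ($A$ from $C$, $C$ from its self-loop) and $\gamma$ at rate $1$ from $\{ Z \}$. Now take $L = \{ \gamma \}$ and $P = \underline{0}$. Since $Z \sync{L} \underline{0}$ is stuck, $\mathit{ds}(A \sync{L} \underline{0}) = \{ A \sync{L} \underline{0}, Z \sync{L} \underline{0} \}$ contains no predecessor of $A \sync{L} \underline{0}$, whereas $C \sync{L} \underline{0}$ keeps its $\beta$ self-loop. For any class $E$ containing both we would get $q[E \cap \mathit{ds}(A \sync{L} \underline{0}), A \sync{L} \underline{0}, \beta] = 0 \neq \frac{1}{2} = q[E \cap \mathit{ds}(C \sync{L} \underline{0}), C \sync{L} \underline{0}, \beta]$. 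Hence $A \sync{L} \underline{0} \not\sim_{\rm e} C \sync{L} \underline{0}$, although the paper's formula yields $\frac{1}{2}$ on both sides. Blocking $\gamma$ cuts the only path from $A$ back to its $\beta$-predecessor $C$, which is exactly the misalignment you anticipated. No alignment lemma and no enlargement of the relation can repair this. The self-loop is inessential: $C \rmdef (\beta, \frac{2}{3}) \, . \, C' + (\beta, \frac{1}{3}) \, . \, A$, $C' \rmdef (\beta, \frac{2}{3}) \, . \, C + (\beta, \frac{1}{3}) \, . \, A$, $Z \rmdef (\gamma, 1) \, . \, A + (\gamma, 1) \, . \, C + (\gamma, 1) \, . \, C'$ fails in the same way.

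What your approach does prove is congruence for hiding and for cooperation over $L = \emptyset$. For $L = \emptyset$ one has $\mathit{ds}(Q_{i} \sync{\emptyset} Q) = \{ Q'' \sync{\emptyset} Q''' \mid Q'' \in \mathit{ds}(Q_{i}), Q''' \in \mathit{ds}(Q) \}$. So left moves into $Q_{i} \sync{\emptyset} Q$ from a class $D \sync{\emptyset} Q$ contribute exactly $q[D \cap \mathit{ds}(Q_{i}), Q_{i}, \alpha]$. Right moves from $D \sync{\emptyset} \bar{Q}$ contribute $q(\bar{Q}, Q, \alpha)$ when $Q_{i} \in D$ and $\bar{Q} \in \mathit{ds}(Q)$, and $0$ otherwise, which is independent of $i$ since $Q_{1}$ and $Q_{2}$ lie in the same classes. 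Hence your relation is a strong exact bisimulation. For hiding, $\mathit{ds}(Q / L) = \{ Q' / L \mid Q' \in \mathit{ds}(Q) \}$, so every incoming flow of $Q_{i} / L$ from a class $D / L$ takes one of three forms:
\begin{itemize}
\item $0$, for types in $L$;
\item the corresponding flow of $Q_{i}$ from $D$, for other visible types;
\item the sum of such flows over $L \cup \{ \tau \}$, for type $\tau$.
\end{itemize}
You should write this summation out rather than just calling the case easy, but it goes through.
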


	\begin{proof}
Let $\cB$ be a strong exact bisimulation such that $(P_{1}, P_{2}) \in \cB$. We prove that:

		\begin{itemize}
\item $\cB' = \cI_{\cC} \cup \{ (Q_{1} \sync{L} Q, Q_{2} \sync{L} Q) \mid (Q_{1}, Q_{2}) \in \cB \}$ and
$\cB'' = \cI_{\cC} \cup \{ (Q \sync{L} Q_{1}, Q \sync{L} Q_{2}) \mid (Q_{1}, Q_{2}) \in \cB \}$ are strong
exact bisimulations too, where $\cI_{\cC}$ is the identity relation over $\cC$. Due to the symmetry of the
operational semantic rules for cooperation, without loss of generality we can focus on a single relation,
say $\cB'$. The only interesting cases are those in which we consider $(Q_{1} \sync{L} Q, Q_{2} \sync{L} Q)
\in \cB'$, hence $(Q_{1}, Q_{2}) \in \cB$, and $C \in \cC / \cB'$ of the form $[Q']_{\cB} \, \sync{L}
\bar{Q}$, i.e., $\{ Q'' \sync{L} \bar{Q} \mid (Q'', Q') \in \cB \}$. There are two subcases based on $\alpha
\in \cA$:

			\begin{itemize}
\item If $\alpha \notin L$ then $q[Q_{1} \sync{L} Q, \alpha] = q[Q_{2} \sync{L} Q, \alpha]$ and $q[C \cap
\mathit{ds}(Q_{1} \sync{L} Q), Q_{1} \sync{L} Q, \alpha] = q[C \cap \mathit{ds}(Q_{2} \sync{L} Q), Q_{2}
\sync{L} Q, \alpha]$ because for $i \in \{ 1, 2 \}$ it holds that $q[Q_{i} \sync{L} Q, \alpha] = q[Q_{i},
\alpha] + q[Q, \alpha]$ and $q[C \cap \mathit{ds}(Q_{i} \sync{L} Q), Q_{i} \sync{L} Q, \alpha]$ is equal to:

				\begin{itemize}
\item $q[[Q']_{\cB} \cap \mathit{ds}(Q_{i}), Q_{i}, \alpha] + q[\{ \bar{Q} \}, Q, \alpha]$ if $Q_{i} \in
[Q']_{\cB}$ and $Q = \bar{Q}$

\item $q[[Q']_{\cB} \cap \mathit{ds}(Q_{i}), Q_{i}, \alpha]$ if $Q_{i} \notin [Q']_{\cB}$ and $Q = \bar{Q}$

\item $q[\{ \bar{Q} \}, Q, \alpha]$ if $Q_{i} \in [Q']_{\cB}$ and $Q \neq \bar{Q}$

\item $0$ if $Q_{i} \notin [Q']_{\cB}$ and $Q \neq \bar{Q}$
				\end{itemize}

with $(Q_{1}, Q_{2}) \in \cB$.

\item If $\alpha \in L$ then $q[Q_{1} \sync{L} Q, \alpha] = q[Q_{2} \sync{L} Q, \alpha]$ and $q[C \cap
\mathit{ds}(Q_{1} \sync{L} Q), Q_{1} \sync{L} Q, \alpha] = q[C \cap \mathit{ds}(Q_{2} \sync{L} Q), Q_{2}
\sync{L} Q, \alpha]$ because for $i \in \{ 1, 2 \}$ it holds that $q[Q_{i} \sync{L} Q, \alpha] =
\min(q[Q_{i}, \alpha], q[Q, \alpha])$ and $q[C \cap \mathit{ds}(Q_{i} \sync{L} Q), Q_{i} \sync{L} Q, \alpha]
= \frac{q[[Q']_{\cB} \cap \mathit{ds}(Q_{i}), Q_{i}, \alpha]}{q[Q', \alpha]} \cdot \frac{q[\{ \bar{Q} \}, Q,
\alpha]}{q[\bar{Q}, \alpha]} \cdot \min(q[Q', \alpha], q[\bar{Q}, \alpha])$ with $(Q_{1}, Q_{2}) \in \cB$.
			\end{itemize}

\item $\cB' = \cI_{\cC} \cup \{ (Q_{1} \, / \, L, Q_{2} \, / \, L) \mid (Q_{1}, Q_{2}) \in \cB \}$ is a
strong exact bisimulation too. The only interesting cases are those in which we consider $(Q_{1} \, / \, L,
Q_{2} \, / \, L) \in \cB'$, hence $(Q_{1}, Q_{2}) \in \cB$, and $C \in \cC / \cB'$ of the form $[Q]_{\cB} \,
/ \, L$, i.e., $\{ Q' \, / \, L \mid (Q, Q') \in \cB \}$. There are three subcases based on $\alpha \in
\cA$:

			\begin{itemize}
\item If $\alpha \in L$ then $q[Q_{1} \, / \, L, \alpha] = 0 = q[Q_{2} \, / \, L, \alpha]$ and $q[C \cap
\mathit{ds}(Q_{1} \, / \, L), Q_{1} \, / \, L, \alpha] = 0 = q[C \cap \mathit{ds}(Q_{2} \, / \, L), Q_{2} \,
/ \, L, \alpha]$.

\item If $\alpha = \tau$ then $q[Q_{1} \, / \, L, \alpha] = q[Q_{2} \, / \, L, \alpha]$ and $q[C \cap
\mathit{ds}(Q_{1} \, / \, L), Q_{1} \, / \, L, \alpha] = q[C \cap \mathit{ds}(Q_{2} \, / \, L), Q_{2} \, /
\, L, \alpha]$ because $q[Q_{i} \, / \, L, \tau] = \sum_{\alpha' \in L \cup \{ \tau \}} q[Q_{i}, \alpha']$
and $q[C \cap \mathit{ds}(Q_{i}  \, / \, L), \linebreak Q_{i} \, / \, L, \tau] = \sum_{\alpha' \in L \cup \{
\tau \}} q[[Q]_{\cB} \cap \mathit{ds}(Q_{i}), Q_{i}, \alpha']$ for $i \in \{ 1, 2 \}$ with $(Q_{1}, Q_{2})
\in \cB$.

\item If $\alpha \notin L \cup \{ \tau \}$ then $q[Q_{1} \, / \, L, \alpha] = q[Q_{2} \, / \, L, \alpha]$
and $q[C \cap \mathit{ds}(Q_{1} \, / \, L), Q_{1} \, / \, L, \alpha] = q[C \cap \mathit{ds}(Q_{2} \, / \,
L), Q_{2} \, / \, L, \alpha]$ because $q[Q_{i} \, / \, L, \alpha] = q[Q_{i}, \alpha]$ and $q[Q_{i} / L, C,
\alpha] = q[[Q]_{\cB} \cap \mathit{ds}(Q_{i}), Q_{i}, \alpha]$ for $i \in \{ 1, 2 \}$ with $(Q_{1}, Q_{2})
\in \cB$.
\qedhere
			\end{itemize}
		\end{itemize}
	\end{proof}

	\begin{thm}[Congruence of $\sim_{\rm s}$]\label{thm:congr_strong_strict_bisim}
Let $P_{1}, P_{2} \in \cC$. If $P_{1} \sim_{\rm s} P_{2}$ then:

		\begin{itemize}
\item $P_{1} \sync{L} P \sim_{\rm s} P_{2} \sync{L} P$ and $P \sync{L} P_{1} \sim_{\rm s} P \sync{L} P_{2}$
for all $L \subseteq \cA \setminus \{ \tau \}$ and $P \in \cC$.

\item $P_{1} \, / \, L \sim_{\rm s} P_{2} \, / \, L$ for all $L \subseteq \cA \setminus \{ \tau \}$.
		\end{itemize}
	\end{thm}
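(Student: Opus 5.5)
Since a strong strict bisimulation is an equivalence relation satisfying at the same time the outgoing conditions of Definition~\ref{def:strong_ordinary_bisim} and the incoming conditions of Definition~\ref{def:strong_exact_bisim}, the plan is to reuse the relations already constructed for Theorems~\ref{thm:congr_strong_ordinary_bisim} and~\ref{thm:congr_strong_exact_bisim}. Starting from a strong strict bisimulation $\cB$ with $(P_{1}, P_{2}) \in \cB$, I would take
$\cB' = \cI_{\cC} \cup \{ (Q_{1} \sync{L} Q, Q_{2} \sync{L} Q) \mid (Q_{1}, Q_{2}) \in \cB \}$, its symmetric counterpart $\cB''$ for the right operand, and $\cI_{\cC} \cup \{ (Q_{1} \, / \, L, Q_{2} \, / \, L) \mid (Q_{1}, Q_{2}) \in \cB \}$ for hiding. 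I would then show that each is a strong strict bisimulation. These relations are equivalences because $\cB$ is one. Their non-singleton classes have the form $[Q']_{\cB} \sync{L} \bar{Q}$ or $[Q]_{\cB} \, / \, L$.

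For the incoming conditions $q[C \cap \mathit{ds}(\cdot), \cdot, \alpha]$ I would repeat verbatim the case analysis in the proof of Theorem~\ref{thm:congr_strong_exact_bisim}. For cooperation the cases are $\alpha \notin L$, with the four subcases according to whether $Q_{i} \in [Q']_{\cB}$ and whether $Q = \bar{Q}$, and $\alpha \in L$, with the apparent-rate formula. For hiding the cases are $\alpha \in L$, $\alpha = \tau$ and the remaining types. That argument only used the incoming equalities of $\cB$ together with $q[Q_{1}, \alpha] = q[Q_{2}, \alpha]$. The latter follows here from the outgoing condition of $\cB$ by summing over all classes, as noted after Definition~\ref{def:strong_strict_bisim}.

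For the outgoing conditions $q[\cdot, C, \alpha]$ I would redo the analogous decomposition as in Hillston's proof for $\sim_{\rm o}$. For $\alpha \notin L$ we have
$q[Q_{i} \sync{L} Q, [Q']_{\cB} \sync{L} \bar{Q}, \alpha] = q[Q_{i}, [Q']_{\cB}, \alpha] \cdot [Q = \bar{Q}] + q[Q, \{ \bar{Q} \}, \alpha] \cdot [Q_{i} \in [Q']_{\cB}]$.
For $\alpha \in L$ the value is the product of the normalized cumulative rates times $\min(q[Q_{i}, \alpha], q[Q, \alpha])$. For hiding, the $\tau$ rate towards $[Q]_{\cB} \, / \, L$ is obtained by summing $q[Q_{i}, [Q]_{\cB}, \alpha']$ over $\alpha' \in L \cup \{ \tau \}$. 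All the ingredients coincide for $i = 1, 2$ because $(Q_{1}, Q_{2}) \in \cB$. Classes of $\cB'$ that are not of this form are not reachable in one step from $Q_{i} \sync{L} Q$, so both sides vanish.

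The only delicate point I expect is the bookkeeping for the derivative-set restriction $C \cap \mathit{ds}(\cdot)$ in the incoming rates. One must check that the predecessors of $Q_{i} \sync{L} Q$ lying in $\mathit{ds}(Q_{i} \sync{L} Q)$ decompose exactly as in the exact case. This is already handled in Theorem~\ref{thm:congr_strong_exact_bisim}, so the proof reduces to observing that the two families of conditions are verified independently on the same relations. No interaction between them arises, since strictness is just the conjunction of the two requirements on a common equivalence relation.
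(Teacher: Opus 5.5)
Your proposal is correct and follows essentially the same route as the paper. It uses the same relations $\cB'$, $\cB''$ and $\cI_{\cC} \cup \{ (Q_{1} \, / \, L, Q_{2} \, / \, L) \mid (Q_{1}, Q_{2}) \in \cB \}$, and the same case analysis for the outgoing conditions: four subcases for $\alpha \notin L$, the apparent-rate formula for $\alpha \in L$, and summation over $L \cup \{ \tau \}$ for hiding. It also handles the incoming conditions exactly as the paper does, by reusing the proof of Theorem~\ref{thm:congr_strong_exact_bisim}, with the required equality of apparent rates obtained by summing the outgoing conditions over all classes.
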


	\begin{proof}
Let $\cB$ be a strong strict bisimulation such that $(P_{1}, P_{2}) \in \cB$. We prove that:

		\begin{itemize}
\item $\cB' = \cI_{\cC} \cup \{ (Q_{1} \sync{L} Q, Q_{2} \sync{L} Q) \mid (Q_{1}, Q_{2}) \in \cB \}$ and
$\cB'' = \cI_{\cC} \cup \{ (Q \sync{L} Q_{1}, Q \sync{L} Q_{2}) \mid (Q_{1}, Q_{2}) \in \cB \}$ are strong
strict bisimulations too, where $\cI_{\cC}$ is the identity relation over $\cC$. Due to the symmetry of the
operational semantic rules for cooperation, without loss of generality we can focus on a single relation,
say $\cB'$. The only interesting cases are those in which we consider $(Q_{1} \sync{L} Q, Q_{2} \sync{L} Q)
\in \cB'$, hence $(Q_{1}, Q_{2}) \in \cB$, and $C \in \cC / \cB'$ of the form $[Q']_{\cB} \, \sync{L}
\bar{Q}$, i.e., $\{ Q'' \sync{L} \bar{Q} \mid (Q'', Q') \in \cB \}$.

\noindent
As for outgoing conditions, there are two subcases based on $\alpha \in \cA$:

			\begin{itemize}
\item If $\alpha \notin L$ then $q[Q_{1} \sync{L} Q, C, \alpha] = q[Q_{2} \sync{L} Q, C, \alpha]$ because
for $i \in \{ 1, 2 \}$ it holds that $q[Q_{i} \sync{L} Q, C, \alpha]$ is equal to:

				\begin{itemize}
\item $q[Q_{i}, [Q']_{\cB}, \alpha] + q[Q, \{ \bar{Q} \}, \alpha]$ if $Q_{i} \in [Q']_{\cB}$ and $Q =
\bar{Q}$

\item $q[Q_{i}, [Q']_{\cB}, \alpha]$ if $Q_{i} \notin [Q']_{\cB}$ and $Q = \bar{Q}$

\item $q[Q, \{ \bar{Q} \}, \alpha]$ if $Q_{i} \in [Q']_{\cB}$ and $Q \neq \bar{Q}$

\item $0$ if $Q_{i} \notin [Q']_{\cB}$ and $Q \neq \bar{Q}$
				\end{itemize}

with $(Q_{1}, Q_{2}) \in \cB$.

\item If $\alpha \in L$ then $q[Q_{1} \sync{L} Q, C, \alpha] = q[Q_{2} \sync{L} Q, C, \alpha]$ because for
$i \in \{ 1, 2 \}$ it holds that $q[Q_{i} \sync{L} Q, C, \alpha] = \frac{q[Q_{i}, [Q']_{\cB},
\alpha]}{q[Q_{i}, \alpha]} \cdot \frac{q[Q, \{ \bar{Q} \}, \alpha]}{q[Q, \alpha]} \cdot \min(q[Q_{i},
\alpha], q[Q, \alpha])$ with $(Q_{1}, Q_{2}) \in \cB$.
			\end{itemize}

\noindent
As for incoming conditions, we proceed like in the proof of the corresponding result of
Theorem~\ref{thm:congr_strong_exact_bisim}.

\item $\cB' = \cI_{\cC} \cup \{ (Q_{1} \, / \, L, Q_{2} \, / \, L) \mid (Q_{1}, Q_{2}) \in \cB \}$ is a
strong strict bisimulation too. The only interesting cases are those in which we consider $(Q_{1} \, / \, L,
Q_{2} \, / \, L) \in \cB'$, hence $(Q_{1}, Q_{2}) \in \cB$, and $C \in \cC / \cB'$ of the form $[Q]_{\cB} \,
/ \, L$, i.e., $\{ Q' \, / \, L \mid (Q, Q') \in \cB \}$.

\noindent
As for outgoing conditions, there are three subcases based on $\alpha \in \cA$:

			\begin{itemize}
\item If $\alpha \in L$ then $q[Q_{1} \, / \, L, C, \alpha] = 0 = q[Q_{2} \, / \, L, C, \alpha]$.

\item If $\alpha = \tau$ then $q[Q_{1} \, / \, L, C, \alpha] = q[Q_{2} \, / \, L, C, \alpha]$ because
$q[Q_{i} \, / \, L, C, \tau] = \sum_{\alpha' \in L \cup \{ \tau \}} q[Q_{i},$ \linebreak $[Q]_{\cB},
\alpha']$ for $i \in \{ 1, 2 \}$ with $(Q_{1}, Q_{2}) \in \cB$.

\item If $\alpha \notin L \cup \{ \tau \}$ then $q[Q_{1} / L, C, \alpha] = q[Q_{2} / L, C, \alpha]$ because
$q[Q_{i} / L, C, \alpha] = q[Q_{i}, [Q]_{\cB}, \alpha]$ for $i \in \{ 1, 2 \}$ with $(Q_{1}, Q_{2}) \in
\cB$.
			\end{itemize}

\noindent
As for incoming conditions, we proceed like in the proof of the corresponding result of
Theorem~\ref{thm:congr_strong_exact_bisim}.
\qedhere
		\end{itemize}
	\end{proof}

The following counterexamples show why strong exact bisimilarity and strong strict bisimilarity are not
congruences with respect to prefix and choice. To achieve compositionality for those two operators, we need
to restrict to components with no recursion.

	\begin{exa}[$\sim_{\rm s}$ w.r.t.\ prefix and choice]\label{ex:no_congr_prefix_choice_strict_bisim}
Consider the two recursive components:
\[
A_{11} \: \rmdef \: (\alpha, r) \, . \, A_{11}
\qquad
A_{12} \: \rmdef \: (\alpha, r) \, . \, (\alpha, r) \, . \, A_{12}
\]
with $\alpha \neq \tau$ and $r \in \mathbb{R}_{> 0}$, whose underlying derivation graphs are depicted in
Figure~\ref{fig:no_congr_prefix_choice_strict_bisim}.


	\begin{figure}[t]

\begin{center}
\begin{tikzpicture}[modal, node distance = 1.8cm]


\node[state]          (a11)		          {\makebox[-4pt][c]{$A_{11}$}};
\node[state]          (a12)  [right = 2cm of a11] {\makebox[-4pt][c]{$A_{12}$}};
\node[elliptic state] (a122) [below = of a12]     {$(\alpha, r) \, . \, A_{12}$};


\path[->] (a11)	 edge[loop below] node[below] {$\alpha, r$} (a11);
\path[->] (a12)	 edge[bend right] node[left]  {$\alpha, r$} (a122);
\path[->] (a122) edge[bend right] node[right] {$\alpha, r$} (a12);

\end{tikzpicture}
\end{center}

\caption{Derivation graphs for the original components in
Example~\ref{ex:no_congr_prefix_choice_strict_bisim}.}
\label{fig:no_congr_prefix_choice_strict_bisim}

	\end{figure}

	\begin{figure}[t]

\begin{center}
\begin{tikzpicture}[modal, node distance = 1.8cm]


\node[elliptic state] (alpha11p)			     {$(\alpha', r') \, . \, A_{11}$};
\node[state]          (a11p)     [below = of alpha11p]       {\makebox[-4pt][c]{$A_{11}$}};
\node[elliptic state] (alpha12p) [right = 0.5cm of alpha11p] {$(\alpha', r') \, . \, A_{12}$};
\node[state]          (a12p)     [below = of alpha12p]       {\makebox[-4pt][c]{$A_{12}$}};
\node[elliptic state] (a122p)    [below = of a12p]           {$(\alpha, r) \, . \, A_{12}$};
\node[elliptic state] (alpha11c) [right = 1.4cm of alpha12p] {$A_{11} + (\alpha', r') \, . \,
							      \underline{0}$};
\node[state]	      (a11c)	 [below left = of alpha11c, xshift = 50pt]
							     {\makebox[-4pt][c]{$A_{11}$}};
\node[state]	      (nilc)	 [below right = of alpha11c, xshift = -50pt]
							     {\makebox[-4pt][c]{$\underline{0}$}};
\node[elliptic state] (alpha12c) [right = 1.5cm of alpha11c]
							    {$A_{12} + (\alpha', r') \, . \, \underline{0}$};
\node[elliptic state] (a12c)	 [below left = of alpha12c, xshift = 50pt]
							     {$(\alpha, r) \, . \, A_{12}$};
\node[state]	      (nil2c)	 [below right = of alpha12c, xshift = -50pt]
							     {\makebox[-4pt][c]{$\underline{0}$}};
\node[state]	      (a122c)	 [below = of a12c]	     {\makebox[-4pt][c]{$A_{12}$}};


\path[->] (alpha11p) edge[]           node[right] {$\alpha', r'$} (a11p);
\path[->] (a11p)     edge[loop below] node[below] {$\alpha, r$}   (a11p);
\path[->] (alpha12p) edge[]           node[right] {$\alpha', r'$} (a12p);
\path[->] (a12p)     edge[bend right] node[left]  {$\alpha, r$}   (a122p);
\path[->] (a122p)    edge[bend right] node[right] {$\alpha, r$}   (a12p);
\path[->] (alpha11c) edge[]           node[left]  {$\alpha, r$}   (a11c);
\path[->] (alpha11c) edge[]           node[right] {$\alpha', r'$} (nilc);
\path[->] (a11c)     edge[loop below] node[below] {$\alpha, r$}   (a11c);
\path[->] (alpha12c) edge[]           node[left]  {$\alpha, r$}   (a12c);
\path[->] (alpha12c) edge[]           node[right] {$\alpha', r'$} (nil2c);
\path[->] (a12c)     edge[bend right] node[left]  {$\alpha, r$}   (a122c);
\path[->] (a122c)    edge[bend right] node[right] {$\alpha, r$}   (a12c);

\end{tikzpicture}
\end{center}

\caption{Derivation graphs for the additional components in
Example~\ref{ex:no_congr_prefix_choice_strict_bisim}.}
\label{fig:no_congr_prefix_choice_strict_bisim_additional}

	\end{figure}

It turns out that $A_{11} \sim_{\rm s} A_{12}$ as witnessed by the equivalence relation whose only
non-singleton equivalence class is $C = \{ A_{11}, A_{12}, (\alpha, r) \, . \, A_{12} \}$. Indeed, every
component in~$C$ reaches $C$ via $\alpha$ with cumulative rate $r$ and is reached from $C$ via $\alpha$ with
cumulative rate $r$.

However, $(\alpha', r') \, . \, A_{11} \not\sim_{\rm s} (\alpha', r') \, . \, A_{12}$, where $\alpha' \neq
\tau$ and $r' \in \mathbb{R}_{> 0}$ with $\alpha' \neq \alpha$ or $r \neq r'$. As can be seen in
Figure~\ref{fig:no_congr_prefix_choice_strict_bisim_additional}, the reason is that $(\alpha, r) \, . \,
A_{12}$ still has a single incoming $\alpha$-transition (from $A_{12}$) while $A_{12}$ now has both an
incoming $\alpha$-transition (from $(\alpha, r) \, . \, A_{12}$) and an incoming $\alpha'$-transition (from
$(\alpha', r') \, . \, A_{12}$), hence they can no longer belong to the same equivalence class. As a
consequence, $A_{11}$ (which reaches only itself) and $A_{12}$ (which reaches also $(\alpha, r) \, . \,
A_{12}$) can no longer belong to the same equivalence class either, hence $(\alpha', r') \, . \, A_{11}$ and
$(\alpha', r') \, . \, A_{12}$ are told apart.

Likewise, $A_{11} + (\alpha', r') \, . \, \underline{0} \not\sim_{\rm s} A_{12} + (\alpha', r') \, . \,
\underline{0}$. As can be seen in Figure~\ref{fig:no_congr_prefix_choice_strict_bisim_additional}, the
reason is that $A_{12}$ still has a single incoming $\alpha$-transition (from $(\alpha, r) \, . \, A_{12}$)
while $(\alpha, r) \, . \, A_{12}$ now has two incoming $\alpha$-transitions (one from $A_{12} + (\alpha',
r') \, . \, \underline{0}$ and the other from $A_{12}$), hence they can no longer belong to the same
equivalence class. As a consequence, $A_{11}$ (which reaches only itself) and $(\alpha, r) \, . \, A_{12}$
(which reaches also $A_{12}$) can no longer belong to the same equivalence class either, hence $A_{11} +
(\alpha', r') \, . \, \underline{0}$ and $A_{12} + (\alpha', r') \, . \, \underline{0}$ are told apart.

In contrast, $A_{11} \sim_{\rm e} A_{12}$ with $(\alpha', r') \, . \, A_{11} \sim_{\rm e} (\alpha', r') \, .
\, A_{12}$ and $A_{11} + (\alpha', r') \, . \, \underline{0} \sim_{\rm e} A_{12} + (\alpha', r') \, . \,
\underline{0}$, because none of the four additional components has incoming transitions and the first
(resp.\ last) two have the same total exit rate on any action type.
	\end{exa}

	\begin{exa}[$\sim_{\rm e}$ w.r.t.\ prefix]\label{ex:no_congr_prefix_exact_bisim}
Given the same two recursive components $A_{11}$ and $A_{12}$ as the previous example, which are related by
$\sim_{\rm e}$, consider $(\alpha, r) \, . \, A_{11}$ and $(\alpha, r) \, . \, A_{12}$, whose underlying
derivation graphs are depicted in Figure~\ref{fig:no_congr_prefix_exact_bisim}.


	\begin{figure}[t]

\begin{center}
\begin{tikzpicture}[modal, node distance = 1.8cm]


\node[elliptic state] (alpha11) []		     {$(\alpha, r) \, . \, A_{11}$};
\node[state]	      (a11)	[below = of alpha11] {\makebox[-4pt][c]{$A_{11}$}};
\node[elliptic state] (alpha12) [right = of alpha11] {$(\alpha, r) \, . \, A_{12}$};
\node[state]	      (a12)	[below = of alpha12] {\makebox[-4pt][c]{$A_{12}$}};


\path[->] (alpha11) edge[]	     node[right] {$\alpha, r$} (a11);
\path[->] (a11)	    edge[loop below] node[below] {$\alpha, r$} (a11);
\path[->] (a12)	    edge[bend right] node[right] {$\alpha, r$} (alpha12);
\path[->] (alpha12) edge[bend right] node[left]  {$\alpha, r$} (a12);

\end{tikzpicture}
\end{center}

\caption{Derivation graphs for the additional components in Example~\ref{ex:no_congr_prefix_exact_bisim}.}
\label{fig:no_congr_prefix_exact_bisim}

	\end{figure}

It holds that $(\alpha, r) \, . \, A_{11} \not\sim_{\rm e} (\alpha, r) \, . \, A_{12}$ because the former
has no incoming transitions, whereas the latter has one incoming $\alpha$-transition from $A_{12}$. Note
that $(\alpha, r) \, . \, A_{11}$ and~$A_{11}$ are syntactically different, which is the reason why the
operational semantic rules associate two different states with them. The defining equation for $A_{11}$,
which establishes that $A_{11}$ can perform $\alpha$ at rate $r$ and then repeats itself, does not create
any syntactical identification between $A_{11}$ and $(\alpha, r) \, . \, A_{11}$ as it is only used in the
operational semantic rules. This would be more evident if in the syntax we had variables and the rec binder
for them instead of constants and defining equations for them. Recalling that in process algebra the use of
the former is equivalent to the use of the latter~\cite{Mil89a}, $A_{11}$ would be modeled as $\textrm{rec}
\, X : (\alpha, r) \, .  \, X$ whilst $(\alpha, r) \, . \, A_{11}$ would be modeled as $(\alpha, r) \, . \,
\textrm{rec} \, X : (\alpha, r) \, . \, X$, which are syntactically different and not related by any
defining equation.

	\end{exa}

	\begin{exa}[$\sim_{\rm e}$ w.r.t.\ choice]\label{ex:no_congr_choice_exact_bisim}
Consider the recursive component:
\[
A \: \rmdef \: (\alpha, r) \, . \, ((\alpha, r) \, . \, A + (\alpha, r) \, . \, A)
\]
with $\alpha \neq \tau$ and $r \in \mathbb{R}_{> 0}$. Its derivation graph has one $\alpha$-transition to
the state corresponding to $(\alpha, r) \, . \, A + (\alpha, r) \, . \, A$, which in turn has two
$\alpha$-transitions to the state corresponding to $A$.

It holds that $(\alpha, r) \, . \, A \sim_{\rm e} (\alpha, r) \, . \, \underline{0}$ because both have the
same total exit rate on any action type and neither has incoming transitions. As can be seen in
Figure~\ref{fig:no_congr_choice_exact_bisim}, the derivation graph of the former has one $\alpha$-transition
to the state corresponding to $A$, which cannot reach the state corresponding to $(\alpha, r) \, . \, A$.

	\begin{figure}[t]

\begin{center}
\begin{tikzpicture}[modal, node distance = 1.8cm]


\node[elliptic state] (alphaA)   []                    {$(\alpha, r) \, . \, A$};
\node[state]          (A)        [below = of alphaA]   {\makebox[-4pt][c]{$A$}};
\node[elliptic state] (A+A)      [below = of A]        {$(\alpha, r) \, . \, A + (\alpha, r) \, . \, A$};
\node[elliptic state] (alphanil) [right = of alphaA]   {$(\alpha, r) \, . \, \underline{0}$};
\node[state]          (nil)      [below = of alphanil] {\makebox[-4pt][c]{$\underline{0}$}}; 


\path[->] (alphaA)   edge[]                node[right]                {$\alpha,r$} (A);
\path[->] (A)        edge[]                node[right]                {$\alpha,r$} (A+A);
\path[->] (A+A)      edge[bend right = 70] node[right, yshift = -8pt] {$\alpha,r$} (A);
\path[->] (A+A)      edge[bend left = 70]  node[left, yshift = -8pt]  {$\alpha,r$} (A);
\path[->] (alphanil) edge[]                node[right]                {$\alpha,r$} (nil);

\end{tikzpicture}
\end{center}

\caption{Derivation graphs for the original components in Example~\ref{ex:no_congr_choice_exact_bisim}.}
\label{fig:no_congr_choice_exact_bisim}

	\end{figure}
 
	\begin{figure}[t]

\begin{center}
\begin{tikzpicture}[modal, node distance = 1.8cm]


\node[elliptic state] (A+A)		       {$(\alpha, r) \, . \, A + (\alpha, r) \, . \, A$};
\node[state]          (A)    [below = of A+A]			     {\makebox[-4pt][c]{$A$}};
\node[elliptic state] (root) [right = of A+A]
					       {$(\alpha, r) \, . \, \underline{0} + (\alpha, r) \, . \, A$};
\node[state]          (nil)  [below left = of root, xshift = 50pt]   {\makebox[-4pt][c]{$\underline{0}$}};
\node[state]          (r1)   [below right = of root, xshift = -50pt] {\makebox[-4pt][c]{$A$}}; 
\node[elliptic state] (r2)   [below = of r1]
					       {$(\alpha, r) \, . \, A + (\alpha, r) \, . \, A$};


\path[->] (A)	 edge[]		       node[right]		  {$\alpha, r$} (A+A);
\path[->] (A+A)	 edge[bend right = 70] node[left, yshift = 8pt]	  {$\alpha, r$} (A);
\path[->] (A+A)	 edge[bend left = 70]  node[right, yshift = 8pt]  {$\alpha, r$} (A);
\path[->] (root) edge[]                node[left]		  {$\alpha, r$} (nil);
\path[->] (root) edge[]                node[right]		  {$\alpha, r$} (r1);
\path[->] (r1)   edge[]                node[right]		  {$\alpha, r$} (r2);
\path[->] (r2)   edge[bend right = 70] node[right, yshift = -8pt] {$\alpha, r$} (r1);
\path[->] (r2)   edge[bend left = 70]  node[left, yshift = -8pt]  {$\alpha, r$} (r1);

\end{tikzpicture}
\end{center}

\caption{Derivation graphs for the additional components in Example~\ref{ex:no_congr_choice_exact_bisim}.}
\label{fig:no_congr_choice_exact_bisim_additional}

	\end{figure}

However, $(\alpha, r) \, . \, A + (\alpha, r) \, . \, A \not\sim_{\rm e} (\alpha, r) \, . \, \underline{0} +
(\alpha, r) \, . \, A$. As can be seen in Figure~\ref{fig:no_congr_choice_exact_bisim_additional}, the
reason is that the latter has no incoming transitions, while the former has one incoming transition from the
state corresponding to $A$.
	\end{exa}

	\begin{thm}[Congruence of $\sim_{\rm e}$ over $\cC_{\rm nr}$]
	\label{thm:congr_strong_exact_bisim_prefix_choice}
Let $P_{1}, P_{2} \in \cC_{\rm nr}$. If $P_{1} \sim_{\rm e} P_{2}$ then:

		\begin{itemize}
\item $a \, . \, P_{1} \sim_{\rm e} a \, . \, P_{2}$ for all $a \in \cA \times \cR$.

\item $P_{1} + P \sim_{\rm e} P_{2} + P$ and $P + P_{1} \sim_{\rm e} P + P_{2}$ for all $P \in \cC_{\rm
nr}$.
		\end{itemize}
	\end{thm}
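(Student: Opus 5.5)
The plan is to exploit Lemma~\ref{lem:incoming_rate_nr}, which trivialises the incoming conditions over $\cC_{\rm nr}$, together with the fact that the total exit rate of a prefix or a choice is computed directly from its operands. As in the proof of Theorem~\ref{thm:congr_strong_exact_bisim}, I would start from a strong exact bisimulation $\cB$ with $(P_{1}, P_{2}) \in \cB$. For prefix I would take the equivalence relation $\cB' = \cB \cup \{ (a \, . \, Q_{1}, a \, . \, Q_{2}) \mid (Q_{1}, Q_{2}) \in \cB, Q_{1}, Q_{2} \in \cC_{\rm nr} \}$, closed under transitivity. The closure is harmless: a new class either consists of prefixed components $a \, . \, Q$ with $Q$ ranging over a $\cB$-class, or coincides with, or is merged into, an existing $\cB$-class. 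For choice I would take the analogous $\cB''$ built from the pairs $(Q_{1} + P, Q_{2} + P)$ and $(P + Q_{1}, P + Q_{2})$.

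For the verification I would check two conditions. The first is the outgoing condition $q[\cdot, \alpha]$. For prefix, with $a = (\beta, r)$, we have $q[a \, . \, Q_{i}, \alpha] = r$ if $\alpha = \beta$ and $0$ otherwise, independently of $Q_{i}$. For choice, $q[Q_{i} + P, \alpha] = q[Q_{i}, \alpha] + q[P, \alpha]$, and the first summand coincides for $i = 1, 2$ because $(Q_{1}, Q_{2}) \in \cB$. Pairs already in $\cB$ satisfy the condition by hypothesis. The second is the incoming condition: since all compared components lie in $\cC_{\rm nr}$, and $a \, . \, Q_{i}$ and $Q_{i} + P$ remain non-recursive, Lemma~\ref{lem:incoming_rate_nr} gives $q[C \cap \mathit{ds}(\cdot), \cdot, \alpha] = 0$ on both sides for every class $C$ and type $\alpha$. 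This holds whatever the classes of the enlarged relation are.

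A subtlety I would address is that Lemma~\ref{lem:incoming_rate_nr} applies to any equivalence relation, but $\cB$ itself may contain recursive components. To keep the argument clean, I would first observe that the restriction of $\sim_{\rm e}$ to $\cC_{\rm nr}$, extended by the identity on the rest, is still a strong exact bisimulation. This holds because classes only shrink on their non-recursive part, while the incoming sums for non-recursive members are $0$ anyway and the exit totals do not depend on the classes. Recursive members keep their conditions provided the partition restricted to them is unchanged; to guarantee this, I would simply use the equivalence generated by $\{ (P_{1}, P_{2}) \}$ plus the new pairs, all within $\cC_{\rm nr}$.

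The main obstacle I expect is precisely this bookkeeping: the new relation must be an equivalence, and every pair in it, including those produced by transitive closure, must satisfy the exit-rate condition. I would handle it by checking the exit-rate equality on generating pairs and noting that equality of $q[\cdot, \alpha]$ is itself transitive. Finally, the paper's subsequent remark shows why the restriction to $\cC_{\rm nr}$ is essential: Examples~\ref{ex:no_congr_prefix_exact_bisim} and~\ref{ex:no_congr_choice_exact_bisim} break exactly the incoming condition, which the lemma neutralises.
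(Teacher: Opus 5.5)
Your proposal is correct and follows essentially the paper's route: extend a strong exact bisimulation by the prefixed (resp.\ choice) pairs, close transitively, check that per-type total exit rates are preserved by prefix and choice, and discharge every incoming condition through the absence of recursion. Two choices in your version also make explicit what the paper's two-case analysis leaves implicit. You appeal uniformly to Lemma~\ref{lem:incoming_rate_nr} and to the transitivity of exit-rate equality, which covers the pairs created by the transitive closure. You also work in the equivalence generated by $(P_{1}, P_{2})$ and the new pairs inside $\cC_{\rm nr}$, which deals with the fact that $\cB$ may contain recursive components.
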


	\begin{proof}
Let $\cB$ be a strong exact bisimulation over $\cC_{\rm nr}$ such that $(P_{1}, P_{2}) \in \cB$. We prove
that:

		\begin{itemize}
\item $\cB' = (\cB \cup \{ ((\alpha, r) \, . \, Q_{1}, (\alpha, r) \, . \, Q_{2}) \mid (Q_{1}, Q_{2}) \in
\cB \})^{+}$ is a strong exact bisimulation over $\cC_{\rm nr}$ too, where $^{+}$ denotes transitive
closure. There are only two interesting cases:

			\begin{itemize}
\item If we consider $(Q'_{1}, Q'_{2}) \in \cB' \cap \cB$ -- which thus already satisfies $q[Q'_{1}, \alpha]
= q[Q'_{2}, \alpha]$ for all $\alpha \in \cA$ -- and $C = [(\alpha, r) \, . \, Q_{1}]_{\cB} \cup [(\alpha,
r) \, . \, Q_{2}]_{\cB}$ for some $(Q_{1}, Q_{2}) \in \cB$, then $q[C \cap \mathit{ds}(Q'_{1}), Q'_{1},
\alpha'] = q[C \cap \mathit{ds}(Q'_{2}), Q'_{2}, \alpha']$ for all $\alpha' \in \cA$ because $q[C \cap
\mathit{ds}(Q'_{i}), Q'_{i}, \alpha'] = q[[(\alpha, r) \, . \, Q_{1}]_{\cB} \cap \mathit{ds}(Q'_{i}),
Q'_{i}, \alpha'] + q[[(\alpha, r) \, . \, Q_{2}]_{\cB} \cap \mathit{ds}(Q'_{i}), Q'_{i}, \alpha']$ for $i
\in \{ 1, 2 \}$ with $(Q'_{1}, Q'_{2}) \in \cB$.

\item If we consider $((\alpha, r) \, . \, Q_{1}, (\alpha, r) \, . \, Q_{2}) \in \cB'$, hence $(Q_{1},
Q_{2}) \in \cB$, then the result follows from:

				\begin{itemize}
\item $q[(\alpha, r) \, . \, Q_{1}, \alpha] = r = q[(\alpha, r) \, . \, Q_{2}, \alpha]$.

\item $q[(\alpha, r) \, . \, Q_{1}, \alpha'] = 0 = q[(\alpha, r) \, . \, Q_{2}, \alpha']$ for all $\alpha'
\in \cA \setminus \{ \alpha \}$.

\item $q[C \cap \mathit{ds}((\alpha, r) \, . \, Q_{1}), (\alpha, r) \, . \, Q_{1}, \alpha'] = 0 = q[C \cap
\mathit{ds}((\alpha, r) \, . \, Q_{2}), (\alpha, r) \, . \, Q_{2}, \alpha']$ for all $\alpha' \in \cA$ and
$C \in \cC / \cB'$ because we are considering components with no recursion.
				\end{itemize}
			\end{itemize}

\item $\cB' = (\cB \cup \{ (Q_{1} + Q, Q_{2} + Q) \mid (Q_{1}, Q_{2}) \in \cB \})^{+}$ and $\cB'' = (\cB
\cup \{ (Q + Q_{1}, Q + Q_{2}) \mid (Q_{1}, Q_{2}) \in \cB \})^{+}$ are strong exact bisimulations over
$\cC_{\rm nr}$ too. Due to the symmetry of the operational semantic rules for choice, without loss of
generality we can focus on a single relation, say $\cB'$. There are only two interesting cases:

			\begin{itemize}
\item If we consider $(Q'_{1}, Q'_{2}) \in \cB' \cap \cB$ -- which thus already satisfies $q[Q'_{1}, \alpha]
= q[Q'_{2}, \alpha]$ for all $\alpha \in \cA$ -- and $C = [Q_{1} + Q]_{\cB} \cup [Q_{2} + Q]_{\cB}$ for some
$(Q_{1}, Q_{2}) \in \cB$, then $q[C \cap \mathit{ds}(Q'_{1}), Q'_{1}, \alpha] = q[C \cap
\mathit{ds}(Q'_{2}), Q'_{2}, \alpha]$ for all $\alpha \in \cA$ because $q[C \cap \mathit{ds}(Q'_{i}),
Q'_{i}, \alpha] = q[[Q_{1} + Q]_{\cB} \cap \mathit{ds}(Q'_{i}), Q'_{i}, \alpha] + q[[Q_{2} + Q]_{\cB} \cap
\mathit{ds}(Q'_{i}), Q'_{i}, \alpha]$ for $i \in \{ 1, 2 \}$ with $(Q'_{1}, Q'_{2}) \in \cB$.

\item If we consider $(Q_{1} + Q, Q_{2} + Q) \in \cB'$, hence $(Q_{1}, Q_{2}) \in \cB$, then the result
follows from:

				\begin{itemize}
\item $q[Q_{1} + Q, \alpha] = q[Q_{2} + Q, \alpha]$ for all $\alpha \in \cA$ because $q[Q_{i} + Q, \alpha] =
q[Q_{i}, \alpha] + q[Q, \alpha]$ for $i \in \{ 1, 2 \}$ with $(Q_{1}, Q_{2}) \in \cB$.

\item $q[C \cap \mathit{ds}(Q_{1} + Q), Q_{1} + Q, \alpha] = 0 = q[C \cap \mathit{ds}(Q_{2} + Q), Q_{2} + Q,
\alpha]$ for all $\alpha \in \cA$ and $C \in \cC / \cB'$ because we are considering components with no
recursion.
\qedhere
				\end{itemize}
			\end{itemize}
		\end{itemize}
	\end{proof}

	\begin{thm}[Congruence of $\sim_{\rm s}$ over $\cC_{\rm nr}$]
	\label{thm:congr_strong_strict_bisim_prefix_choice}
Let $P_{1}, P_{2} \in \cC_{\rm nr}$. If $P_{1} \sim_{\rm s} P_{2}$ then:

		\begin{itemize}
\item $a \, . \, P_{1} \sim_{\rm s} a \, . \, P_{2}$ for all $a \in \cA \times \cR$.

\item $P_{1} + P \sim_{\rm s} P_{2} + P$ and $P + P_{1} \sim_{\rm s} P + P_{2}$ for all $P \in \cC_{\rm
nr}$.
		\end{itemize}
	\end{thm}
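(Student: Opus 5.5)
The quickest route avoids building bisimulations by hand and instead uses the collapse established in Theorem~\ref{thm:taxonomy_nr}: over $\cC_{\rm nr}$ we have $\sim_{\rm s} {=} \sim_{\rm o}$, which rests on Lemma~\ref{lem:incoming_rate_nr}. Every component in $\cC_{\rm nr}$ has a derivation graph that is a directed acyclic graph. Hence all incoming conditions $q[C \cap \mathit{ds}(P), P, \alpha] = 0$ hold trivially, and a strong strict bisimulation over $\cC_{\rm nr}$ is exactly a strong ordinary bisimulation over $\cC_{\rm nr}$. So I would first note that if $P_{1}, P_{2} \in \cC_{\rm nr}$ and $P_{1} \sim_{\rm s} P_{2}$, then $P_{1} \sim_{\rm o} P_{2}$.

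Next I would invoke Theorem~\ref{thm:congr_strong_ordinary_bisim}. It gives $a \, . \, P_{1} \sim_{\rm o} a \, . \, P_{2}$, $P_{1} + P \sim_{\rm o} P_{2} + P$, and $P + P_{1} \sim_{\rm o} P + P_{2}$. Since prefix and choice introduce no recursion, all of these composite components still lie in $\cC_{\rm nr}$ whenever $P_{1}, P_{2}, P \in \cC_{\rm nr}$. Applying the collapse $\sim_{\rm o} {=} \sim_{\rm s}$ over $\cC_{\rm nr}$ in the other direction then yields the claimed $\sim_{\rm s}$ relations.

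The main obstacle is a subtlety in how the collapse is phrased. $\sim_{\rm o}$ is the union of bisimulations over all of $\cC$, whereas the incoming condition must be checked for every class of a relation that may also relate recursive components. The fix is to take a strong ordinary bisimulation $\cB$ witnessing $a \, . \, P_{1} \sim_{\rm o} a \, . \, P_{2}$ and form the equivalence relation $\cB \cap (\cC_{\rm nr} \times \cC_{\rm nr})$ extended by the identity on recursive components. This relation is still an ordinary bisimulation. The reason is that a component in $\cC_{\rm nr}$ only reaches components in $\cC_{\rm nr}$, so the cumulative outgoing rates of its members into any class of the restricted relation equal their rates into the corresponding $\cB$-class. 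For identity classes on recursive components, both conditions hold trivially.

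The incoming conditions on the resulting relation are exactly the point Lemma~\ref{lem:incoming_rate_nr} addresses. For related pairs inside $\cC_{\rm nr}$ both sides are $0$. For singleton classes the two compared components coincide, so the equality is immediate. This makes the relation a strong strict bisimulation.

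Alternatively, if one prefers to mirror Theorem~\ref{thm:congr_strong_exact_bisim_prefix_choice}, one can define $\cB' = (\cB \cup \{ (a \, . \, Q_{1}, a \, . \, Q_{2}) \mid (Q_{1}, Q_{2}) \in \cB \})^{+}$, and similarly for choice. The incoming conditions are dispatched exactly as there, and the outgoing conditions as in Hillston's congruence proof: for a prefix, the only outgoing flow goes into the class of $Q_{i}$; for a choice, $q[Q_{i} + Q, C, \alpha] = q[Q_{i}, C, \alpha] + q[Q, C, \alpha]$, where merging $\cB$-classes under the transitive closure preserves the equalities.
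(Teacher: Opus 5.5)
Your proof is correct, and your main route differs from the paper's. The paper argues directly: it takes a strong strict bisimulation $\cB$ over $\cC_{\rm nr}$ containing $(P_{1}, P_{2})$ and forms $\cB' = (\cB \cup \{ ((\alpha, r) \, . \, Q_{1}, (\alpha, r) \, . \, Q_{2}) \mid (Q_{1}, Q_{2}) \in \cB \})^{+}$, together with its analogue for choice. It checks the outgoing conditions case by case and dispatches the incoming conditions through acyclicity, exactly as in Theorem~\ref{thm:congr_strong_exact_bisim_prefix_choice}; this is what your closing ``alternatively'' paragraph describes.

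Your main route instead chains three steps. First, $\sim_{\rm s} \subseteq \sim_{\rm o}$, which needs only Theorem~\ref{thm:taxonomy}. Second, the congruence of $\sim_{\rm o}$ (Theorem~\ref{thm:congr_strong_ordinary_bisim}). Third, the collapse back to $\sim_{\rm s}$ on $\cC_{\rm nr}$, which you make rigorous with the relation $(\cB \cap (\cC_{\rm nr} \times \cC_{\rm nr})) \cup \cI_{\cC}$. This is sound because $\cC_{\rm nr}$ is closed under derivatives, prefix, and choice. Hence outgoing rates into the restricted classes coincide with those into the original $\cB$-classes and vanish into recursive singletons. The incoming conditions are $0 = 0$ by Lemma~\ref{lem:incoming_rate_nr}, or hold trivially for identity pairs.

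Your route buys brevity and reuse of an established result. It also makes explicit a point the paper glosses over here and in Theorem~\ref{thm:taxonomy_nr}. Bisimilarities are defined through relations over all of $\cC$, whose recursive pairs must also satisfy the incoming conditions. Passing between relations over $\cC$ and over $\cC_{\rm nr}$ therefore requires precisely your identity-extension step; the paper only claims $\cB'$ to be a bisimulation over $\cC_{\rm nr}$.

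The paper's route buys uniformity: the same direct construction is reused for $\sim_{\rm e}$, $\approx_{\rm e}$, and $\approx_{\rm s}$ over $\cC_{\rm nr}$. Your shortcut instead relies on $\sim_{\rm s}$ coinciding on $\cC_{\rm nr}$ with the already compositional $\sim_{\rm o}$. That is unavailable for the exact bisimilarities, which are strictly coarser than $\sim_{\rm o}$ there.
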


	\begin{proof}
Let $\cB$ be a strong strict bisimulation over $\cC_{\rm nr}$ such that $(P_{1}, P_{2}) \in \cB$. We prove
that:

		\begin{itemize}
\item $\cB' = (\cB \cup \{ ((\alpha, r) \, . \, Q_{1}, (\alpha, r) \, . \, Q_{2}) \mid (Q_{1}, Q_{2}) \in
\cB \})^{+}$ is a strong strict bisimulation over $\cC_{\rm nr}$ too, where $^{+}$ denotes transitive
closure.

\noindent
As for outgoing conditions, there are only two interesting cases:

			\begin{itemize}
\item If we consider $(Q'_{1}, Q'_{2}) \in \cB' \cap \cB$ and $C = [(\alpha, r) \, . \, Q_{1}]_{\cB} \cup
[(\alpha, r) \, . \, Q_{2}]_{\cB}$ for some $(Q_{1}, Q_{2}) \in \cB$, then $q[Q'_{1}, C, \alpha'] =
q[Q'_{2}, C, \alpha']$ for all $\alpha' \in \cA$ because $q[Q'_{i}, C, \alpha'] = q[Q'_{i}, [(\alpha, r) \,
. \, Q_{1}]_{\cB}, \alpha'] + q[Q'_{i}, [(\alpha, r) \, . \, Q_{2}]_{\cB}, \alpha']$ for $i \in \{ 1, 2 \}$
with $(Q'_{1}, Q'_{2}) \in \cB$.

\item If we consider $((\alpha, r) \, . \, Q_{1}, (\alpha, r) \, . \, Q_{2}) \in \cB'$, hence $(Q_{1},
Q_{2}) \in \cB'$, then $q[(\alpha, r) \, . \, Q_{1}, C, \alpha'] \linebreak = q[(\alpha, r) \, . \, Q_{2},
C, \alpha']$ for all $\alpha' \in \cA$ and $C \in \cC / \cB'$ because:

				\begin{itemize}
\item If $\alpha' = \alpha$ and $C$ contains $Q_{1}$ and $Q_{2}$, then $q[(\alpha, r) \, . \, Q_{i}, C,
\alpha'] = r$ for $i \in \{ 1, 2 \}$.

\item If $\alpha' \neq \alpha$ or $C$ does not contain $Q_{1}$ and $Q_{2}$, then $q[(\alpha, r) \, . \,
Q_{i}, C, \alpha'] = 0$ for $i \in \{ 1, 2 \}$.
				\end{itemize}
			\end{itemize}

\noindent
As for incoming conditions, we proceed like in the proof of the corresponding result of
Theorem~\ref{thm:congr_strong_exact_bisim_prefix_choice}.

\item $\cB' = (\cB \cup \{ (Q_{1} + Q, Q_{2} + Q) \mid (Q_{1}, Q_{2}) \in \cB \})^{+}$ and $\cB'' = (\cB
\cup \{ (Q + Q_{1}, Q + Q_{2}) \mid (Q_{1}, Q_{2}) \in \cB \})^{+}$ are strong strict bisimulations over
$\cC_{\rm nr}$ too. Due to the symmetry of the operational semantic rules for choice, without loss of
generality we can focus on a single relation, say $\cB'$.

\noindent
As for outgoing conditions, there are only two interesting cases:

			\begin{itemize}
\item If we consider $(Q'_{1}, Q'_{2}) \in \cB' \cap \cB$ and $C = [Q_{1} + Q]_{\cB} \cup [Q_{2} + Q]_{\cB}$
for some $(Q_{1}, Q_{2}) \in \cB$, then $q[Q'_{1}, C, \alpha] = q[Q'_{2}, C, \alpha]$ for all $\alpha \in
\cA$ because $q[Q'_{i}, C, \alpha] = q[Q'_{i}, [Q_{1} + Q]_{\cB}, \alpha] + q[Q'_{i}, [Q_{2} + Q]_{\cB},
\alpha]$ for $i \in \{ 1, 2 \}$ with $(Q'_{1}, Q'_{2}) \in \cB$.

\item If we consider $(Q_{1} + Q, Q_{2} + Q) \in \cB'$, hence $(Q_{1}, Q_{2}) \in \cB$, then $q[Q_{1} + Q,
C, \alpha] = q[Q_{2} + Q, C, \alpha]$ for all $\alpha \in \cA$ and $C \in \cC / \cB'$ because $q[Q_{i} + Q,
C, \alpha] = q[Q_{i}, C, \alpha] + q[Q, C, \alpha]$ for $i \in \{ 1, 2 \}$ with $(Q_{1}, Q_{2}) \in \cB$.
			\end{itemize}

\noindent
As for incoming conditions, we proceed like in the proof of
Theorem~\ref{thm:congr_strong_exact_bisim_prefix_choice}.
\qedhere
		\end{itemize}
	\end{proof}

%
\subsection{Compositionality of Weak Bisimilarities}
\label{sec:congruence_weak}
%

We start by proving that weak ordinary bisimilarity is a congruence with respect to all the operators of
PEPA except for choice.

	\begin{thm}[Congruence of $\approx_{\rm o}$]\label{thm:congr_weak_ordinary_bisim}
Let $P_{1}, P_{2} \in \cC$. If $P_{1} \approx_{\rm o} P_{2}$ then:

		\begin{itemize}
\item $a \, . \, P_{1} \approx_{\rm o} a \, . \, P_{2}$ for all $a \in \cA \times \cR$.

\item $P_{1} \sync{L} P \approx_{\rm o} P_{2} \sync{L} P$ and $P \sync{L} P_{1} \approx_{\rm o} P \sync{L}
P_{2}$ for all $L \subseteq \cA \setminus \{ \tau \}$ and $P \in \cC$.

\item $P_{1} \, / \, L \approx_{\rm o} P_{2} \, / \, L$ for all $L \subseteq \cA \setminus \{ \tau \}$.
		\end{itemize}
	\end{thm}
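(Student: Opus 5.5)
Starting from a weak ordinary bisimulation $\cB$ with $(P_{1}, P_{2}) \in \cB$, I build for each operator a candidate relation and check that it is a weak ordinary bisimulation. The pattern follows the proofs of Theorems~\ref{thm:congr_strong_exact_bisim} and~\ref{thm:congr_strong_strict_bisim}, restricted to their outgoing parts. I take $\cB$ to be $\approx_{\rm o}$ itself, which is an equivalence relation. The relations are:
\[
\cB_{\rm p} = \cB \cup \{ (a \, . \, Q_{1}, a \, . \, Q_{2}) \mid (Q_{1}, Q_{2}) \in \cB \}, \qquad
\cB_{\rm c} = \cI_{\cC} \cup \{ (Q_{1} \sync{L} Q, Q_{2} \sync{L} Q) \mid (Q_{1}, Q_{2}) \in \cB \},
\]
\[
\cB_{\rm h} = \cI_{\cC} \cup \{ (Q_{1} \, / \, L, Q_{2} \, / \, L) \mid (Q_{1}, Q_{2}) \in \cB \}.
\]
Each must be an equivalence relation. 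For $\cB_{\rm c}$ and $\cB_{\rm h}$ this is inherited from $\cB$, since the cooperation and hiding contexts are injective on syntax. For $\cB_{\rm p}$ I take the transitive closure, as in Theorem~\ref{thm:congr_strong_exact_bisim_prefix_choice}.

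\textbf{Cooperation.} The classes have the form $[Q']_{\cB} \sync{L} \bar{Q}$. For $\alpha \notin L$, I split $q[Q_{i} \sync{L} Q, C, \alpha]$ into four cases as in the proof of Theorem~\ref{thm:congr_strong_strict_bisim}. The only delicate case is $\alpha = \tau$, since $\tau \notin L$ always. Here the condition is needed only when $Q_{i} \sync{L} Q \notin C$, i.e.\ when $Q \neq \bar{Q}$ or $Q_{i} \notin [Q']_{\cB}$.
\begin{itemize}
\item If $Q = \bar{Q}$, then $Q_{i} \notin [Q']_{\cB}$, so the $\tau$-condition of $\cB$ applies to $[Q']_{\cB}$ directly.
\item If $Q \neq \bar{Q}$, the contribution is $q[Q, \{\bar{Q}\}, \tau]$ when $Q_{i} \in [Q']_{\cB}$ and $0$ otherwise. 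Since $Q_{1} \cB Q_{2}$, both $Q_{i}$ lie in $[Q']_{\cB}$ or neither does, so the contributions agree.
\end{itemize}
For $\alpha \in L$ (so $\alpha \neq \tau$), I use the apparent-rate formula
\[
q[Q_{i} \sync{L} Q, C, \alpha] = \frac{q[Q_{i}, [Q']_{\cB}, \alpha]}{q[Q_{i}, \alpha]} \cdot \frac{q[Q, \{ \bar{Q} \}, \alpha]}{q[Q, \alpha]} \cdot \min(q[Q_{i}, \alpha], q[Q, \alpha]).
\]
For $\alpha \neq \tau$, the equality $q[Q_{1}, \alpha] = q[Q_{2}, \alpha]$ follows by summing the $\cB$-conditions over all classes.

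\textbf{Hiding.} Again only the $\tau$-case is subtle. Classes are $[Q]_{\cB} \, / \, L$, and I must show $q[Q_{1}/L, C, \tau] = q[Q_{2}/L, C, \tau]$ when $Q_{i}/L \notin C$, i.e.\ when $Q_{i} \notin [Q]_{\cB}$. Then
\[
q[Q_{i}/L, C, \tau] = \sum_{\alpha' \in L \cup \{\tau\}} q[Q_{i}, [Q]_{\cB}, \alpha'],
\]
and each summand agrees for $i = 1, 2$: for $\alpha' \in L$ because $\alpha' \neq \tau$, and for $\alpha' = \tau$ because $Q_{i} \notin [Q]_{\cB}$.

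\textbf{Prefix.} This is where I expect the main obstacle: merging classes under transitive closure, because $a \, . \, Q_{1}$ may already be $\cB$-related to other components. Pairs in $\cB$ are unaffected if the $\cB'$-classes are unions of $\cB$-classes, since the cumulative rates then split additively. I ensure this by taking $\cB$ to be $\approx_{\rm o}$, so that each $\cB'$-class is a union of $\approx_{\rm o}$-classes. For the new pair $(a \, . \, Q_{1}, a \, . \, Q_{2})$ with $a = (\alpha, r)$, both components have a single transition, of type $\alpha$, to $Q_{1}$ and $Q_{2}$ respectively, and these lie in the same $\cB'$-class. So $q[a \, . \, Q_{i}, C, \beta]$ equals $r$ if $\beta = \alpha$ and $C \ni Q_{1}, Q_{2}$, and $0$ otherwise, for both $i$. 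This holds whether or not $\alpha = \tau$, and whether or not $C$ contains the prefixed components, since the weak $\tau$-condition only removes constraints.

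Finally, pairs linked by chains through $\cB$ and new prefix pairs are handled by transitivity of equality of rates. The key remark is that the weak relaxation only drops conditions, and the dropped condition is the same for both members of a related pair.
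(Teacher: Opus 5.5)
Your proposal is correct and follows essentially the same route as the paper. It uses the same three candidate relations: the transitive closure of $\cB$ plus the prefixed pairs, and $\cI_{\cC}$ plus the lifted pairs for cooperation and hiding, with the same case splits on $\alpha \in L$, $\alpha \notin L$, and $\alpha = \tau$. Your explicit union-of-$\cB$-classes and chain-transitivity arguments for prefix make rigorous what the paper compresses into its ``two interesting cases''; choosing $\cB = {\approx_{\rm o}}$ is not needed for this, since every class of an equivalence containing $\cB$ is already a union of $\cB$-classes.
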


	\begin{proof}
Let $\cB$ be a weak ordinary bisimulation such that $(P_{1}, P_{2}) \in \cB$. We prove that:

		\begin{itemize}
\item $\cB' = (\cB \cup \{ ((\alpha, r) \, . \, Q_{1}, (\alpha, r) \, . \, Q_{2}) \mid (Q_{1}, Q_{2}) \in
\cB \})^{+}$ is a weak ordinary bisimulation too, where $^{+}$ denotes transitive closure. There are only
two interesting cases:

			\begin{itemize}
\item If we consider $(Q'_{1}, Q'_{2}) \in \cB' \cap \cB$ and $C = [(\alpha, r) \, . \, Q_{1}]_{\cB} \cup
[(\alpha, r) \, . \, Q_{2}]_{\cB}$ for some $(Q_{1}, Q_{2}) \in \cB$, then $q[Q'_{1}, C, \alpha'] =
q[Q'_{2}, C, \alpha']$ for all $\alpha' \in \cA$ such that either $\alpha' \neq \tau$, or $\alpha' = \tau$
and $Q'_{1}, Q'_{2} \notin C$, because $q[Q'_{i}, C, \alpha'] = q[Q'_{i}, [(\alpha, r) \, . \, Q_{1}]_{\cB},
\alpha'] + q[Q'_{i}, [(\alpha, r) \, . \, Q_{2}]_{\cB}, \alpha']$ for $i \in \{ 1, 2 \}$ with $(Q'_{1},
Q'_{2}) \in \cB$. Note that $Q'_{i} \notin C$ implies $Q'_{i} \notin [(\alpha, r) \, . \, Q_{1}]_{\cB}$ and
$Q'_{i} \notin [(\alpha, r) \, . \, Q_{2}]_{\cB}$.

\item If we consider $((\alpha, r) \, . \, Q_{1}, (\alpha, r) \, . \, Q_{2}) \in \cB'$, hence $(Q_{1},
Q_{2}) \in \cB'$, then $q[(\alpha, r) \, . \, Q_{1}, C, \alpha'] \linebreak = q[(\alpha, r) \, . \, Q_{2},
C, \alpha']$ for all $\alpha' \in \cA$ and $C \in \cC / \cB'$ such that either $\alpha' \neq \tau$, or
$\alpha' = \tau$ and $(\alpha, r) \, . \, Q_{1}, (\alpha, r) \, . \, Q_{2} \notin C$, because:

				\begin{itemize}
\item If $\alpha' = \alpha$ and $C$ contains $Q_{1}$ and $Q_{2}$, then $q[(\alpha, r) \, . \, Q_{i}, C,
\alpha'] = r$ for $i \in \{ 1, 2 \}$.

\item If $\alpha' \neq \alpha$ or $C$ does not contain $Q_{1}$ and $Q_{2}$, then $q[(\alpha, r) \, . \,
Q_{i}, C, \alpha'] = 0$ for $i \in \{ 1, 2 \}$.
				\end{itemize}
			\end{itemize}

\item $\cB' = \cI_{\cC} \cup \{ (Q_{1} \sync{L} Q, Q_{2} \sync{L} Q) \mid (Q_{1}, Q_{2}) \in \cB \}$ and
$\cB'' = \cI_{\cC} \cup \{ (Q \sync{L} Q_{1}, Q \sync{L} Q_{2}) \mid (Q_{1}, Q_{2}) \in \cB \}$ are weak
ordinary bisimulations too, where $\cI_{\cC}$ is the identity relation over~$\cC$. Due to the symmetry of
the operational semantic rules for cooperation, without loss of generality we can focus on a single
relation, say $\cB'$. The only interesting cases are those in which we consider $(Q_{1} \sync{L} Q, Q_{2}
\sync{L} Q) \in \cB'$, hence $(Q_{1}, Q_{2}) \in \cB$, and $C \in \cC / \cB'$ of the form $[Q']_{\cB} \,
\sync{L} \bar{Q}$, i.e., $\{ Q'' \sync{L} \bar{Q} \mid (Q'', Q') \in \cB \}$. There are two subcases based
on $\alpha \in \cA$ such that either $\alpha \neq \tau$, or $\alpha = \tau$ and $Q_{1} \sync{L} Q, Q_{2}
\sync{L} Q \notin C$:

			\begin{itemize}
\item If $\alpha \notin L$ then $q[Q_{1} \sync{L} Q, C, \alpha] = q[Q_{2} \sync{L} Q, C, \alpha]$ because
for $i \in \{ 1, 2 \}$ it holds that $q[Q_{i} \sync{L} Q, C, \alpha]$ is equal to:

				\begin{itemize}
\item $q[Q_{i}, [Q']_{\cB}, \alpha] + q[Q, \{ \bar{Q} \}, \alpha]$ if $Q_{i} \in [Q']_{\cB}$ and $Q =
\bar{Q}$ (not to be examined if $\alpha = \tau$)

\item $q[Q_{i}, [Q']_{\cB}, \alpha]$ if $Q_{i} \notin [Q']_{\cB}$ and $Q = \bar{Q}$ (hence $Q_{i} \sync{L} Q
\notin C$ when $\alpha = \tau$)

\item $q[Q, \{ \bar{Q} \}, \alpha]$ if $Q_{i} \in [Q']_{\cB}$ and $Q \neq \bar{Q}$

\item $0$ if $Q_{i} \notin [Q']_{\cB}$ and $Q \neq \bar{Q}$
				\end{itemize}

with $(Q_{1}, Q_{2}) \in \cB$.

\item If $\alpha \in L$ then $q[Q_{1} \sync{L} Q, C, \alpha] = q[Q_{2} \sync{L} Q, C, \alpha]$ because for
$i \in \{ 1, 2 \}$ it holds that $q[Q_{i} \sync{L} Q, C, \alpha] = \frac{q[Q_{i}, [Q']_{\cB},
\alpha]}{q[Q_{i}, \alpha]} \cdot \frac{q[Q, \{ \bar{Q} \}, \alpha]}{q[Q, \alpha]} \cdot \min(q[Q_{i},
\alpha], q[Q, \alpha])$ with $(Q_{1}, Q_{2}) \in \cB$. Note that $\alpha \in L$ implies $\alpha \neq \tau$,
hence $q[Q_{1}, [Q']_{\cB}, \alpha] = q[Q_{2}, [Q']_{\cB}, \alpha]$ -- from which $q[Q_{1}, \alpha] =
q[Q_{2}, \alpha]$ also follows -- when $Q_{1}, Q_{2} \in [Q']_{\cB}$.
			\end{itemize}

\item $\cB' = \cI_{\cC} \cup \{ (Q_{1} \, / \, L, Q_{2} \, / \, L) \mid (Q_{1}, Q_{2}) \in \cB \}$ is a weak
ordinary bisimulation too. The only interesting cases are those in which we consider $(Q_{1} \, / \, L,
Q_{2} \, / \, L) \in \cB'$, hence $(Q_{1}, Q_{2}) \in \cB$, and $C \in \cC / \cB'$ of the form $[Q]_{\cB} \,
/ \, L$, i.e., $\{ Q' \, / \, L \mid (Q, Q') \in \cB \}$. There are three subcases based on $\alpha \in \cA$
such that either $\alpha \neq \tau$, or $\alpha = \tau$ and $Q_{1} \, / \, L, Q_{2} \, / \, L \notin C$:

			\begin{itemize}
\item If $\alpha \in L$ then $q[Q_{1} \, / \, L, C, \alpha] = 0 = q[Q_{2} \, / \, L, C, \alpha]$.

\item If $\alpha = \tau$ then $q[Q_{1} \, / \, L, C, \alpha] = q[Q_{2} \, / \, L, C, \alpha]$ because
$q[Q_{i} \, / \, L, C, \tau] = \sum_{\alpha' \in L \cup \{ \tau \}} q[Q_{i},$ \linebreak $[Q]_{\cB},
\alpha']$ for $i \in \{ 1, 2 \}$ with $(Q_{1}, Q_{2}) \in \cB$. Note that $Q_{i} \, / \, L \notin C$ implies
$Q_{i} \notin [Q]_{\cB}$.

\item If $\alpha \notin L \cup \{ \tau \}$ then $q[Q_{1} / L, C, \alpha] = q[Q_{2} / L, C, \alpha]$ because
$q[Q_{i} / L, C, \alpha] = q[Q_{i}, [Q]_{\cB}, \alpha]$ for $i \in \{ 1, 2 \}$ with $(Q_{1}, Q_{2}) \in
\cB$.
\qedhere
			\end{itemize}
		\end{itemize}
	\end{proof}

To see why congruence with respect to choice does not hold for weak ordinary bisimilarity, it is sufficient
to examine a suitable variant of the usual counterexample for weak bisimilarity in nondeterministic process
algebra. Following~\cite{Mil89a}, to achieve compositionality for choice we introduce a \emph{rooted}
variant of weak ordinary bisimilarity -- which requires any two related components to precisely match their
first executed activities even when their type is $\tau$ and the reached $\approx_{\rm o}$-class contains
those two components -- and then prove that it is the coarsest congruence with respect to choice that is
contained in weak ordinary bisimilarity.

	\begin{exa}[$\approx_{\rm o}$ w.r.t.\ choice]\label{ex:no_congr_choice_weak_ordinary_bisim}
Consider the three non-recursive components:
\[
P''_{1} \: = \: (\tau, r_{\tau}) \, . \, \underline{0} \qquad 
P''_{2} \: = \: \underline{0} \qquad
P'' \: = \: (\alpha, r_{\alpha}) \, . \, \underline{0} 
\]
with $\alpha \neq \tau$ and $r_{\tau}, r_{\alpha} \in \mathbb{R}_{> 0}$. The first two are variants of those
in Example~\ref{ex:no_alpha_tau_p_nr}.

It holds that $P''_{1} \approx_{\rm o} P''_{2}$ as witnessed by the equivalence relation whose only
non-singleton equivalence class is $C = \{ P''_{1}, P''_{2} \}$. Indeed, either component in $C$ reaches no
other class via an activity of type different from $\tau$. Note that $P''_{1}$ and $P''_{2}$ would not be
equivalent if the initial $\tau$-transition of $P''_{1}$ had to be matched by $P''_{2}$ in the bisimulation
game.

However, $P''_{1} + P'' \not\approx_{\rm o} P''_{2} + P''$. The reason is that $P''_{1} + P''$ and
$\underline{0}$ are not weakly ordinary bisimilar due to the $\alpha$-activity of $P''$. Therefore, denoting
by $C$ the equivalence class of~$\underline{0}$ with respect to weak ordinary bisimilarity, we have that
$q[P''_{1} + P'', C, \tau] = r_{\tau} \neq 0 = q[P''_{2} + P'', C, \tau]$.
	\end{exa}

	\begin{defi}\label{def:rooted_weak_ordinary_bisim}
We say that $P_{1}, P_{2} \in \cC$ are \emph{rooted weakly ordinary bisimilar}, written $P_{1} \approx_{\rm
ro} P_{2}$, iff for all action types $\alpha \in \cA$ and for all equivalence classes $C \in \cC /
{\approx_{\rm o}}$:
\[
q[P_{1}, C, \alpha] \: = \: q[P_{2}, C, \alpha]
\]
	\end{defi}

	\begin{thm}[Coarsest congruence contained in $\approx_{\rm o}$]
	\label{thm:coarsest_congr_weak_ordinary_bisim}
Let $P_{1}, P_{2} \in \cC$. Then $P_{1} \approx_{\rm ro} P_{2}$ iff $P_{1} + P \approx_{\rm o} P_{2} + P$
for all $P \in \cC$.
	\end{thm}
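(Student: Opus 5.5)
The proof splits into the two directions, with all comparisons made against the classes of $\approx_{\rm o}$ itself.

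\emph{Only-if direction.} Assume $P_{1} \approx_{\rm ro} P_{2}$. I will show that the relation
\[
\cB \: = \: {\approx_{\rm o}} \cup \{ (P_{1} + P, P_{2} + P), (P_{2} + P, P_{1} + P) \mid P \in \cC \}
\]
(or its transitive closure, to obtain an equivalence) is a weak ordinary bisimulation. The key preliminary fact is that $P_{1} \approx_{\rm ro} P_{2}$ implies $P_{1} \approx_{\rm o} P_{2}$. Indeed, the rooted condition contains the weak ordinary conditions with respect to the classes of $\approx_{\rm o}$. Hence $\{(P_{1},P_{2}),(P_{2},P_{1})\} \cup {\approx_{\rm o}}$, closed transitively, is a weak ordinary bisimulation.

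Next I will check that the classes of $\cB$ are unions of $\approx_{\rm o}$-classes. Then the outgoing conditions can be verified classwise by summing over the $\approx_{\rm o}$-classes they contain. For a pair $(P_{1}+P, P_{2}+P)$, for every $\alpha$ and every $\approx_{\rm o}$-class $D$ we have
\[
q[P_{i}+P, D, \alpha] \: = \: q[P_{i}, D, \alpha] + q[P, D, \alpha],
\]
and the first summands agree by $\approx_{\rm ro}$, including when $\alpha = \tau$. Pairs already in $\approx_{\rm o}$ are handled by summation over $D$. The delicate point is that merging $P_{1}+P$ and $P_{2}+P$ into possibly new classes must not break $\tau$-conditions for pairs in $\approx_{\rm o}$. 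To avoid this, I will instead argue directly that $\approx_{\rm o}$ relates $P_{1}+P$ and $P_{2}+P$. The plan is to take $\cB$ as the equivalence generated by ${\approx_{\rm o}} \cup \{(P_{1}+P,P_{2}+P)\}$. Its classes are either $\approx_{\rm o}$-classes or the union of $[P_{1}+P]$ and $[P_{2}+P]$. For a $\tau$-condition towards a merged class from a pair outside it, the rates add up consistently.

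\emph{If direction.} Assume $P_{1} + P \approx_{\rm o} P_{2} + P$ for all $P$. Following Milner, I choose $P = (\beta, r) \, . \, \underline{0}$ with $\beta$ a fresh visible type not occurring in $\mathit{ds}(P_{1}) \cup \mathit{ds}(P_{2})$; this is possible since $\cA$ is countable while the derivative sets use finitely many types. For any $\approx_{\rm o}$-class $D$ and any $\alpha$, I consider two cases.
\begin{itemize}
\item If $\alpha \neq \tau$, or if $P_{1}+P, P_{2}+P \notin D$, then $q[P_{1}+P, D, \alpha] = q[P_{2}+P, D, \alpha]$, and subtracting the common $q[P, D, \alpha]$ gives the rooted condition.
\item If $\alpha = \tau$ and $D$ contains $P_{1}+P$, I need that no $\tau$-derivative of $P_{i}$ lies in $D$. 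Every derivative of $P_{i}$ is unable to perform $\beta$, whereas $P_{i}+P$ can, and weak ordinary bisimilarity preserves $q[\cdot, C, \beta]$ for visible $\beta$. Hence $q[P_{1}, D, \tau] = 0 = q[P_{2}, D, \tau]$.
\end{itemize}

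\emph{Main obstacle.} I expect the main obstacle to be the only-if direction: ensuring that adding the pairs keeps the $\tau$-exemption consistent for all other related pairs. This is where the fact that the rooted condition is stated relative to $\approx_{\rm o}$-classes is essential.
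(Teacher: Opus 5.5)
Your proposal is correct and follows essentially the paper's proof. For the \emph{only if} direction the paper likewise uses additivity of rates under choice to get $P_{1} + P \approx_{\rm ro} P_{2} + P$, then concludes via ${\approx_{\rm ro}} \subseteq {\approx_{\rm o}}$, which it only asserts while your merged-class argument actually justifies it; you could therefore simply apply your preliminary fact to the pair $(P_{1} + P, P_{2} + P)$ instead of redoing the construction. For the \emph{if} direction the paper uses the same fresh visible prefix and the same observation that no $\tau$-derivative of $P_{i}$ can lie in the $\approx_{\rm o}$-class of $P_{i} + P$, phrased contrapositively.
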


	\begin{proof}
We prove the two implications separately:

		\begin{itemize}
\item If $P_{1} \approx_{\rm ro} P_{2}$ then $q[P_{1}, C, \alpha] = q[P_{2}, C, \alpha]$ for all $\alpha \in
\cA$ and $C \in \cC / {\approx_{\rm o}}$. Given an arbitrary $P \in \cC$, since for $i \in \{ 1, 2 \}$ we
have that $q[P_{i} + P, C, \alpha] = q[P_{i}, C, \alpha] + q[P, C, \alpha]$, it holds that $P_{1} + P
\approx_{\rm ro} P_{2} + P$, hence $P_{1} + P \approx_{\rm o} P_{2} + P$ because ${\approx_{\rm ro}} \subset
{\approx_{\rm o}}$.
 
\item As far as the reverse implication is concerned, we reason on the contrapositive. Suppose that $P_{1}
\not\approx_{\rm ro} P_{2}$, i.e., $q[P_{1}, C, \alpha] \neq q[P_{2}, C, \alpha]$ for some $\alpha \in \cA$
and $C \in \cC /{\approx_{\rm o}}$. Let $P$ be $(\alpha', r') \, . \, \underline{0}$ with $\alpha' \neq
\tau$ not occurring in $P_{1}$ and $P_{2}$, so that $\alpha$ does not occur in $P$. Then $P_{1} + P
\not\approx_{\rm o} P_{2} + P$ because $q[P_{1} + P, C, \alpha] = q[P_{1}, C, \alpha] \neq q[P_{2}, C,
\alpha] = q[P_{2} + P, C, \alpha]$. Note that if $\alpha = \tau$ then it cannot be the case that $P_{1} + P,
P_{2} + P \in C$ because otherwise, when computing $q[P_{1} + P, C, \alpha]$ and $q[P_{2} + P, C, \alpha]$,
for some $i \in \{ 1, 2 \}$ there would exist $P'_{i} \in \mathit{ds}(P_{i})$ such that $P_{i} + P, P'_{i}
\in C$, which is not possible because $\alpha'$ can be executed within $P_{i} + P$ only by $P$.
\qedhere
		\end{itemize}
	\end{proof}

Now we prove that weak exact bisimilarity and weak strict bisimilarity are congruences with respect to all
the operators of PEPA other than prefix and choice. The counterexamples are the same as those for strong
exact bisimilarity (Examples~\ref{ex:no_congr_prefix_exact_bisim} and~\ref{ex:no_congr_choice_exact_bisim})
and strong strict bisimilarity (Example~\ref{ex:no_congr_prefix_choice_strict_bisim}), respectively. To
achieve compositionality for those two operators, as in the strong case we need to restrict to components
with no recursion. Note that the counterexample for weak ordinary bisimilarity
(Example~\ref{ex:no_congr_choice_weak_ordinary_bisim}) does not apply to weak strict bisimilarity. Indeed,
$P''_{1} \not\approx_{\rm s} P''_{2}$ because if they were in the same equivalence class $C$ then $q[C \cap
\mathit{ds}((\tau, r_{\tau}) \, . \, \underline{0}), (\tau, r_{\tau}) \, . \, \underline{0}, \tau] -
q[(\tau, r_{\tau}) \, . \, \underline{0}, \tau] = 0 - r_{\tau} \neq 0 - 0 = q[C \cap
\mathit{ds}(\underline{0}), \underline{0}, \tau] - q[\underline{0}, \tau]$.

	\begin{thm}[Congruence of $\approx_{\rm e}$]\label{thm:congr_weak_exact_bisim}
Let $P_{1}, P_{2} \in \cC$. If $P_{1} \approx_{\rm e} P_{2}$ then:

		\begin{itemize}
\item $P_{1} \sync{L} P \approx_{\rm e} P_{2} \sync{L} P$ and $P \sync{L} P_{1} \approx_{\rm e} P \sync{L}
P_{2}$ for all $L \subseteq \cA \setminus \{ \tau \}$ and $P \in \cC$.

\item $P_{1} \, / \, L \approx_{\rm e} P_{2} \, / \, L$ for all $L \subseteq \cA \setminus \{ \tau \}$.
		\end{itemize}
	\end{thm}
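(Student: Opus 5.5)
The plan follows the proof of Theorem~\ref{thm:congr_strong_exact_bisim} closely. Given a weak exact bisimulation $\cB$ with $(P_{1}, P_{2}) \in \cB$, I will show that $\cB' = \cI_{\cC} \cup \{ (Q_{1} \sync{L} Q, Q_{2} \sync{L} Q) \mid (Q_{1}, Q_{2}) \in \cB \}$, its mirror $\cB''$, and $\cI_{\cC} \cup \{ (Q_{1} \, / \, L, Q_{2} \, / \, L) \mid (Q_{1}, Q_{2}) \in \cB \}$ are weak exact bisimulations. By the symmetry of the cooperation rules it is enough to treat $\cB'$.

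Only pairs $(Q_{1} \sync{L} Q, Q_{2} \sync{L} Q)$ and classes $C = [Q']_{\cB} \sync{L} \bar{Q}$ need checking. For every $\alpha \neq \tau$, the two conditions $q[\cdot, \alpha]$ and $q[C \cap \mathit{ds}(\cdot), \cdot, \alpha]$ are unaffected by the weakening. I will reuse verbatim the decompositions from the strong proof: the four-case split when $\alpha \notin L$, and the apparent-rate product formula when $\alpha \in L$. The only new fact needed is that for $\alpha \in L$ we have $\alpha \neq \tau$, so $\cB$ still gives both $q[Q_{1}, \alpha] = q[Q_{2}, \alpha]$ and equal incoming rates from every $\cB$-class.

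The real work is $\alpha = \tau$, which never lies in $L$, so only interleaving moves matter. I will split on whether the compared components lie in $C$.

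If $Q_{i} \sync{L} Q \notin C$, then either $\bar{Q} \neq Q$ or $Q_{i} \notin [Q']_{\cB}$. In the first case the incoming $\tau$-flow equals $q[[Q']_{\cB} \cap \mathit{ds}(Q_{i}), Q_{i}, \tau]$ only when $Q$ is reached by $\bar{Q}$ via $\tau$ and $Q_{i} \in [Q']_{\cB}$. Otherwise the contributions come solely from $Q_{i}$'s class-external flow, which $\cB$ equates.

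If both components lie in $C$, so $\bar{Q} = Q$ and $Q_{1}, Q_{2} \in [Q']_{\cB}$, I need the differences $q[C \cap \mathit{ds}(Q_{i} \sync{L} Q), Q_{i} \sync{L} Q, \tau] - q[Q_{i} \sync{L} Q, \tau]$ to agree. The incoming flow splits as $q[[Q']_{\cB} \cap \mathit{ds}(Q_{i}), Q_{i}, \tau] + q[\{Q\}, Q, \tau]$, and the exit rate splits as $q[Q_{i}, \tau] + q[Q, \tau]$. The difference therefore equals the $\cB$-invariant quantity for $Q_{i}$ plus a term that depends only on $Q$, and it is equal for $i = 1, 2$.

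Hiding works the same way. For $\alpha \in L$ everything is $0$, and for $\alpha \notin L \cup \{\tau\}$ the strong argument applies directly. For $\alpha = \tau$ inside the class, I would write the difference as
\[
\sum_{\alpha' \in L \cup \{\tau\}} \bigl( q[[Q]_{\cB} \cap \mathit{ds}(Q_{i}), Q_{i}, \alpha'] - q[Q_{i}, \alpha'] \bigr),
\]
where the $\alpha' = \tau$ summand is equated by the weak $\tau$-condition and each $\alpha' \in L$ summand by the strong conditions for visible types. For $\tau$ outside the class, I sum the external incoming conditions.

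The main obstacle is bookkeeping with $\mathit{ds}$. In $C \cap \mathit{ds}(Q_{i} \sync{L} Q)$, the derivatives of the composite must factor as $\mathit{ds}(Q_{i}) \times \mathit{ds}(Q)$ restricted to reachable pairs. I would argue that every one-step $\tau$-predecessor of $Q_{i} \sync{L} Q$ has the form $Q'' \sync{L} Q$ or $Q_{i} \sync{L} Q'''$, and that it is reachable from the composite exactly when the moving component is reachable from $Q_{i}$ (resp.\ $Q$). I expect this to need care, since reachability of the pair is not obviously the product of the two reachabilities.
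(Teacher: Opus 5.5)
Your plan follows the paper's own proof step for step: the same relations $\cB'$ and $\cB''$, the same four-case split for $\alpha \notin L$, the same apparent-rate formula for $\alpha \in L$, the same splitting of the in-class $\tau$-difference, and the same hiding argument. The step you flagged as the main obstacle is exactly where that proof breaks, and it cannot be repaired.

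Your claim that a predecessor $Q'' \sync{L} Q$ lies in $\mathit{ds}(Q_{i} \sync{L} Q)$ exactly when $Q'' \in \mathit{ds}(Q_{i})$ is false as soon as $L \neq \emptyset$. The path from $Q_{i}$ to $Q''$ may require activities in $L$ that the partner $Q$ cannot perform, or can perform only by leaving $Q$ for good. This is not specific to $\tau$. It already invalidates the visible-type incoming equalities you planned to copy verbatim from Theorem~\ref{thm:congr_strong_exact_bisim}. That theorem's four-case formula silently replaces $\mathit{ds}(Q_{i} \sync{L} Q)$ by $\mathit{ds}(Q_{i})$, and it does not restrict the partner term $q[\{ \bar{Q} \}, Q, \alpha]$ to reachable states either. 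The underlying reason is that exact bisimulations constrain only total exit rates and incoming flows. Equivalent components may therefore reach their own predecessors along paths of different types, and a cooperation set can cut one such path but not the other.

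Concretely, take pairwise distinct $\alpha, \beta, \gamma \neq \tau$, $r \in \mathbb{R}_{> 0}$, and
\[
B_{1} \rmdef (\beta, r) \, . \, D + (\gamma, r) \, . \, \underline{0}
\qquad
B_{2} \rmdef (\gamma, r) \, . \, D + (\beta, r) \, . \, \underline{0}
\qquad
D \rmdef (\alpha, r) \, . \, B_{1} + (\alpha, r) \, . \, B_{2}
\]
The equivalence whose only non-singleton class is $\{ B_{1}, B_{2} \}$ is a strong, hence weak, exact bisimulation:
\begin{itemize}
\item both components have exit rate $r$ on $\beta$ and on $\gamma$, and $0$ on every other type;
\item $\mathit{ds}(B_{1}) = \mathit{ds}(B_{2}) = \{ B_{1}, B_{2}, D, \underline{0} \}$;
\item the only transition entering either of them is an $\alpha$-transition of rate $r$ from $D$.
\end{itemize}
Now let $L = \{ \beta \}$ and $P = \underline{0}$. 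The $\beta$-transition of $B_{1}$ is blocked, so $B_{1} \sync{\{ \beta \}} \underline{0}$ has the single transition $B_{1} \sync{\{ \beta \}} \underline{0} \transits{(\gamma, r)} \underline{0} \sync{\{ \beta \}} \underline{0}$. Hence nothing in its derivative set enters it. In contrast, $B_{2} \sync{\{ \beta \}} \underline{0} \transits{(\gamma, r)} D \sync{\{ \beta \}} \underline{0} \transits{(\alpha, r)} B_{2} \sync{\{ \beta \}} \underline{0}$, so it receives $\alpha$ at rate $r$ from the class of $D \sync{\{ \beta \}} \underline{0}$. Since $\alpha \neq \tau$, no weak exact bisimulation can relate the two composites. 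The cooperation clause of the statement is therefore false, and so is the corresponding clause for $\sim_{\rm e}$.

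What your argument does establish is the following:
\begin{itemize}
\item \emph{The hiding clause.} Hiding only relabels, so $\mathit{ds}(Q \, / \, L) = \{ Q'' \, / \, L \mid Q'' \in \mathit{ds}(Q) \}$.
\item \emph{Cooperation with $L = \emptyset$.} Here $\mathit{ds}(Q_{i} \sync{\emptyset} Q) = \{ Q'' \sync{\emptyset} Q''' \mid Q'' \in \mathit{ds}(Q_{i}), Q''' \in \mathit{ds}(Q) \}$. The four-case decomposition is then correct once the partner term is restricted to $\bar{Q} \in \mathit{ds}(Q)$, a condition independent of $i$.
\end{itemize}
Separately, your outside-class $\tau$ case is muddled. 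When $\bar{Q} \neq Q$, the only candidate predecessor in $C$ is $Q_{i} \sync{L} \bar{Q}$, which contributes $q(\bar{Q}, Q, \tau)$, never $q[[Q']_{\cB} \cap \mathit{ds}(Q_{i}), Q_{i}, \tau]$.
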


	\begin{proof}
Let $\cB$ be a weak exact bisimulation such that $(P_{1}, P_{2}) \in \cB$. We prove that:

		\begin{itemize}
\item $\cB' = \cI_{\cC} \cup \{ (Q_{1} \sync{L} Q, Q_{2} \sync{L} Q) \mid (Q_{1}, Q_{2}) \in \cB \}$ and
$\cB'' = \cI_{\cC} \cup \{ (Q \sync{L} Q_{1}, Q \sync{L} Q_{2}) \mid (Q_{1}, Q_{2}) \in \cB \}$ are weak
exact bisimulations too, where $\cI_{\cC}$ is the identity relation over $\cC$. Due to the symmetry of the
operational semantic rules for cooperation, without loss of generality we can focus on a single relation,
say $\cB'$. The only interesting cases are those in which we consider $(Q_{1} \sync{L} Q, Q_{2} \sync{L} Q)
\in \cB'$, hence $(Q_{1}, Q_{2}) \in \cB$, and $C \in \cC / \cB'$ of the form $[Q']_{\cB} \, \sync{L}
\bar{Q}$, i.e., $\{ Q'' \sync{L} \bar{Q} \mid (Q'', Q') \in \cB \}$. There are two subcases based on $\alpha
\in \cA$:

			\begin{itemize}
\item If $\alpha \notin L$ then:

				\begin{itemize}
\item $q[Q_{1} \sync{L} Q, \alpha] = q[Q_{2} \sync{L} Q, \alpha]$ if $\alpha \neq \tau$ because for $i \in
\{ 1, 2 \}$ it holds that $q[Q_{i} \sync{L} Q, \alpha] \linebreak = q[Q_{i}, \alpha] + q[Q, \alpha]$ with
$(Q_{1}, Q_{2}) \in \cB$.

\item $q[C \cap \mathit{ds}(Q_{1} \sync{L} Q), Q_{1} \sync{L} Q, \alpha] = q[C \cap \mathit{ds}(Q_{2}
\sync{L} Q), Q_{2} \sync{L} Q, \alpha]$ if either $\alpha \neq \tau$, or $\alpha = \tau$ and $Q_{1} \sync{L}
Q, Q_{2} \sync{L} Q \notin C$, because for $i \in \{ 1, 2 \}$ it holds that $q[C \cap \mathit{ds}(Q_{i}
\sync{L} Q), Q_{i} \sync{L} Q, \alpha]$ is equal to:

					\begin{itemize}
\item $q[[Q']_{\cB} \cap \mathit{ds}(Q_{i}), Q_{i}, \alpha] + q[\{ \bar{Q} \}, Q, \alpha]$ if $Q_{i} \in
[Q']_{\cB}$ and $Q = \bar{Q}$ (not to be examined here if $\alpha = \tau$)

\item $q[[Q']_{\cB} \cap \mathit{ds}(Q_{i}), Q_{i}, \alpha]$ if $Q_{i} \notin [Q']_{\cB}$ and $Q = \bar{Q}$
(hence $Q_{i} \sync{L} Q \notin C$ when $\alpha = \tau$)

\item $q[\{ \bar{Q} \}, Q, \alpha]$ if $Q_{i} \in [Q']_{\cB}$ and $Q \neq \bar{Q}$

\item $0$ if $Q_{i} \notin [Q']_{\cB}$ and $Q \neq \bar{Q}$
					\end{itemize}

with $(Q_{1}, Q_{2}) \in \cB$.

\item $q[C \cap \mathit{ds}(Q_{1} \sync{L} Q), Q_{1} \sync{L} Q, \alpha] - q[Q_{1} \sync{L} Q, \alpha] = q[C
\cap \mathit{ds}(Q_{2} \sync{L} Q), Q_{2} \sync{L} Q, \alpha] - q[Q_{2} \sync{L} Q, \alpha]$ if $\alpha =
\tau$ and $Q_{1} \sync{L} Q, Q_{2} \sync{L} Q \in C$ because for $i \in \{ 1, 2 \}$ it holds that $q[C \cap
\mathit{ds}(Q_{i} \sync{L} Q), Q_{i} \sync{L} Q, \tau] - q[Q_{i} \sync{L} Q, \tau] = (q[[Q']_{\cB} \cap
\mathit{ds}(Q_{i}), Q_{i}, \tau] + q[\{ \bar{Q} \}, Q, \tau]) - (q[Q_{i}, \tau] + q[Q, \tau]) =
(q[[Q']_{\cB} \cap \mathit{ds}(Q_{i}), Q_{i}, \tau] - q[Q_{i}, \tau]) + (q[\{ \bar{Q} \}, Q, \tau] - q[Q,
\tau])$ with $(Q_{1}, Q_{2}) \in \cB$. Note that $Q_{i} \sync{L} Q \in C$ implies $Q_{i} \in [Q']_{\cB}$.
				\end{itemize}

\item If $\alpha \in L$ then $q[Q_{1} \sync{L} Q, \alpha] = q[Q_{2} \sync{L} Q, \alpha]$ and $q[C \cap
\mathit{ds}(Q_{1} \sync{L} Q), Q_{1} \sync{L} Q, \alpha] = q[C \cap \mathit{ds}(Q_{2} \sync{L} Q), Q_{2}
\sync{L} Q, \alpha]$ because for $i \in \{ 1, 2 \}$ it holds that $q[Q_{i} \sync{L} Q, \alpha] =
\min(q[Q_{i}, \alpha], q[Q, \alpha])$ and $q[C \cap \mathit{ds}(Q_{i} \sync{L} Q), Q_{i} \sync{L} Q, \alpha]
= \frac{q[[Q']_{\cB} \cap \mathit{ds}(Q_{i}), Q_{i}, \alpha]}{q[Q', \alpha]} \cdot \frac{q[\{ \bar{Q} \}, Q,
\alpha]}{q[\bar{Q}, \alpha]} \cdot \min(q[Q', \alpha], q[\bar{Q}, \alpha])$ with $(Q_{1}, Q_{2}) \in \cB$.
Note that $\alpha \in L$ implies $\alpha \neq \tau$, hence $q[Q_{1}, \alpha] \linebreak = q[Q_{2}, \alpha]$
and $q[[Q']_{\cB} \cap \mathit{ds}(Q_{1}), Q_{1}, \alpha] = q[[Q']_{\cB} \cap \mathit{ds}(Q_{2}), Q_{2},
\alpha]$ when $Q_{1}, Q_{2} \in [Q']_{\cB}$.
			\end{itemize}

\item $\cB' = \cI_{\cC} \cup \{ (Q_{1} \, / \, L, Q_{2} \, / \, L) \mid (Q_{1}, Q_{2}) \in \cB \}$ is a weak
exact bisimulation too. The only interesting cases are those in which we consider $(Q_{1} \, / \, L, Q_{2}
\, / \, L) \in \cB'$, hence $(Q_{1}, Q_{2}) \in \cB$, and $C \in \cC / \cB'$ of the form $[Q]_{\cB} \, / \,
L$, i.e., $\{ Q' \, / \, L \mid (Q, Q') \in \cB \}$. There are three subcases based on $\alpha \in \cA$:

			\begin{itemize}
\item If $\alpha \in L$ then $q[Q_{1} \, / \, L, \alpha] = 0 = q[Q_{2} \, / \, L, \alpha]$ and $q[C \cap
\mathit{ds}(Q_{1} \, / \, L), Q_{1} \, / \, L, \alpha] = 0 = q[C \cap \mathit{ds}(Q_{2} \, / \, L), Q_{2} \,
/ \, L, \alpha]$.

\item If $\alpha = \tau$ then:

				\begin{itemize}
\item $q[C \cap \mathit{ds}(Q_{1} \, / \, L), Q_{1} \, / \, L, \alpha] = q[C \cap \mathit{ds}(Q_{2} \, / \,
L), Q_{2} \, / \, L, \alpha]$ if $Q_{1} \, / \, L, Q_{2} \, / \, L \notin C$ because $q[C \cap
\mathit{ds}(Q_{i} \, / \, L), Q_{i} \, / \, L, \tau] = \sum_{\alpha' \in L \cup \{ \tau \}} q[[Q]_{\cB} \cap
\mathit{ds}(Q_{i}), Q_{i}, \alpha']$ for $i \in \{ 1, 2 \}$ with $(Q_{1}, Q_{2}) \in \cB$. Note that $Q_{i}
\, / \, L \notin C$ implies $Q_{i} \notin [Q]_{\cB}$.

\item $q[C \cap \mathit{ds}(Q_{1} \, / \, L), Q_{1} \, / \, L, \alpha] - q[Q_{1} \, / \, L, \alpha] = q[C
\cap \mathit{ds}(Q_{2} \, / \, L), Q_{2} \, / \, L, \alpha] - q[Q_{2} \, / \, L, \alpha]$ if $Q_{1} \, / \,
L, Q_{2} \, / \, L \in C$ because $q[C \cap \mathit{ds}(Q_{i} \, / \, L), Q_{i} \, / \, L, \tau] - q[Q_{i}
\, / \, L, \tau] = \sum_{\alpha' \in L \cup \{ \tau \}} q[[Q]_{\cB} \linebreak \cap \mathit{ds}(Q_{i}),
Q_{i}, \alpha'] - \sum_{\alpha' \in L \cup \{ \tau \}} q[Q_{i}, \alpha'] = \sum_{\alpha' \in L \cup \{ \tau
\}} (q[[Q]_{\cB} \cap \mathit{ds}(Q_{i}), Q_{i}, \alpha'] - q[Q_{i}, \alpha'])$ for $i \in \{ 1, 2 \}$ with
$(Q_{1}, Q_{2}) \in \cB$. Note that $Q_{i} \, / \, L \in C$ implies $Q_{i} \in [Q]_{\cB}$, hence
$q[[Q]_{\cB} \cap \mathit{ds}(Q_{1}), Q_{1}, \tau] - q[Q_{1}, \tau] = q[[Q]_{\cB} \cap \mathit{ds}(Q_{2}),
Q_{2}, \tau] - q[Q_{2}, \tau]$.
				\end{itemize}

\item If $\alpha \notin L \cup \{ \tau \}$ then $q[Q_{1} \, / \, L, \alpha] = q[Q_{2} \, / \, L, \alpha]$
and $q[C \cap \mathit{ds}(Q_{1} \, / \, L), Q_{1} \, / \, L, \alpha] = q[C \cap \mathit{ds}(Q_{2} \, / \,
L), Q_{2} \, / \, L, \alpha]$ because $q[Q_{i} \, / \, L, \alpha] = q[Q_{i}, \alpha]$ and $q[Q_{i} / L, C,
\alpha] = q[[Q]_{\cB} \cap \mathit{ds}(Q_{i}), Q_{i}, \alpha]$ for $i \in \{ 1, 2 \}$ with $(Q_{1}, Q_{2})
\in \cB$.
\qedhere
			\end{itemize}
		\end{itemize}
	\end{proof}

	\begin{thm}[Congruence of $\approx_{\rm s}$]\label{thm:congr_weak_strict_bisim}
Let $P_{1}, P_{2} \in \cC$. If $P_{1} \approx_{\rm s} P_{2}$ then:

		\begin{itemize}
\item $P_{1} \sync{L} P \approx_{\rm s} P_{2} \sync{L} P$ and $P \sync{L} P_{1} \approx_{\rm s} P \sync{L}
P_{2}$ for all $L \subseteq \cA \setminus \{ \tau \}$ and $P \in \cC$.

\item $P_{1} \, / \, L \approx_{\rm s} P_{2} \, / \, L$ for all $L \subseteq \cA \setminus \{ \tau \}$.
		\end{itemize}
	\end{thm}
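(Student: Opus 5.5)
The plan follows the same pattern as the proofs of Theorems~\ref{thm:congr_strong_strict_bisim} and~\ref{thm:congr_weak_exact_bisim}. Let $\cB$ be a weak strict bisimulation with $(P_{1}, P_{2}) \in \cB$. For cooperation I will show that $\cB' = \cI_{\cC} \cup \{ (Q_{1} \sync{L} Q, Q_{2} \sync{L} Q) \mid (Q_{1}, Q_{2}) \in \cB \}$ is a weak strict bisimulation. The mirror relation $\cB''$ then follows by the symmetry of the cooperation rules. For hiding I will show the same for $\cB' = \cI_{\cC} \cup \{ (Q_{1} \, / \, L, Q_{2} \, / \, L) \mid (Q_{1}, Q_{2}) \in \cB \}$. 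As before, the only nontrivial pairs are those built from $(Q_{1}, Q_{2}) \in \cB$. The only nontrivial classes are $C = [Q']_{\cB} \sync{L} \bar{Q}$, respectively $C = [Q]_{\cB} \, / \, L$.

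Definition~\ref{def:weak_strict_bisim} is the conjunction of two kinds of conditions. The first is the outgoing conditions of Definition~\ref{def:weak_ordinary_bisim}. The second is the incoming conditions of Definition~\ref{def:weak_exact_bisim}, together with the within-class $\tau$ condition (the difference between incoming $\tau$-rate from the own class and total $\tau$-exit rate). I will therefore check each family separately by reusing the case analyses already done.

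For the outgoing conditions with $\alpha \neq \tau$, or $\alpha = \tau$ and the two composed components outside $C$, I will copy the decomposition from the proof of Theorem~\ref{thm:congr_weak_ordinary_bisim}. For cooperation, the case $\alpha \notin L$ uses the four-way split on whether $Q_{i} \in [Q']_{\cB}$ and whether $Q = \bar{Q}$. The case $\alpha \in L$ uses the apparent-rate formula, where $\alpha \neq \tau$ guarantees $q[Q_{1}, \alpha] = q[Q_{2}, \alpha]$. For hiding, the split is on $\alpha \in L$, $\alpha = \tau$ and $\alpha \notin L \cup \{ \tau \}$. One fact needs care: when $\alpha = \tau$, $Q_{i} \sync{L} Q \notin C$ (resp.\ $Q_{i} \, / \, L \notin C$) must imply that the relevant $\tau$-summands come from $Q_{i} \notin [Q']_{\cB}$ (resp.\ $Q_{i} \notin [Q]_{\cB}$), so that the hypothesis on $\cB$ applies. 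For the incoming conditions, including the within-class $\tau$ difference condition, I will invoke verbatim the corresponding computations in the proof of Theorem~\ref{thm:congr_weak_exact_bisim}. In particular, for cooperation with $\alpha = \tau \notin L$ and both components in $C$, the difference splits additively into $(q[[Q']_{\cB} \cap \mathit{ds}(Q_{i}), Q_{i}, \tau] - q[Q_{i}, \tau]) + (q[\{\bar{Q}\}, Q, \tau] - q[Q, \tau])$. For hiding, the difference becomes $\sum_{\alpha' \in L \cup \{\tau\}}$ of per-type differences. Each per-type difference is equal for $i = 1, 2$: for $\alpha' \in L$ this follows from both the outgoing-total and incoming conditions of $\cB$, and for $\tau$ it follows from the within-class condition of $\cB$.

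I expect the main obstacle to be the hiding case with $\alpha = \tau$ and $Q_{1} \, / \, L, Q_{2} \, / \, L \in C$. This case mixes hidden visible types with genuine $\tau$. For the hidden types $\alpha' \in L$, equality of $q[Q_{i}, \alpha']$ is not stated directly in Definition~\ref{def:weak_strict_bisim}. I will derive it by summing the outgoing conditions $q[Q_{1}, C'', \alpha'] = q[Q_{2}, C'', \alpha']$ over all classes $C''$, as in the proof of Theorem~\ref{thm:taxonomy}. A second subtle point arises on the outgoing side for $\tau$ within the class: hidden $\alpha'$-moves of $Q_{i}$ into $[Q]_{\cB}$ itself become internal $\tau$-moves of $Q_{i} \, / \, L$ in $C$. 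These are exempted by the weak definition, so no condition needs to be checked there.
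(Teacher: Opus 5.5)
Your plan follows the paper's proof essentially step by step. It uses the same relations $\cB'$, handles the outgoing conditions as for $\approx_{\rm o}$, and imports the incoming ones from the proof of Theorem~\ref{thm:congr_weak_exact_bisim}. The hiding case is sound, because $\mathit{ds}(Q_i \, / \, L) = \{ R \, / \, L \mid R \in \mathit{ds}(Q_i) \}$. Your remark that $q[Q_1, \alpha'] = q[Q_2, \alpha']$ for $\alpha' \in L$ has to be obtained by summing the outgoing conditions is exactly the detail needed there.

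\textbf{The gap: incoming conditions for cooperation.} The imported computation identifies the predecessors of $Q_i \sync{L} Q$ in $C = [Q']_{\cB} \sync{L} \bar{Q}$ that lie in $\mathit{ds}(Q_i \sync{L} Q)$ with those in $[Q']_{\cB} \cap \mathit{ds}(Q_i)$. It also counts every $\alpha$-predecessor $\bar{Q}$ of $Q$, reachable or not. Cooperation changes reachability: synchronisation on $L$ can cut the very cycles through which a predecessor was reachable. For instance, take $A \rmdef (\beta, r) \, . \, (\alpha, r) \, . \, A$, $L = \{ \beta \}$ and $Q = \underline{0}$. Then $A \sync{L} \underline{0}$ is deadlocked and receives no flow at all, whereas $q[[(\alpha, r) \, . \, A]_{\cB} \cap \mathit{ds}(A), A, \alpha] = r$.

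So the per-$i$ identities you invoke are false in general. Nothing in your argument shows that such cuts affect $Q_1 \sync{L} Q$ and $Q_2 \sync{L} Q$ in the same way. This matters because $Q_1$ and $Q_2$ are related only class-wise and may return to themselves along different paths (compare $A_{11}$ and $A_{12}$ in Example~\ref{ex:no_congr_prefix_choice_strict_bisim}). Reachability differences of exactly this kind are what break congruence with respect to prefix and choice.

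\textbf{What is missing} is a lemma: for every class, the predecessors of $Q_i \sync{L} Q$ that are reachable from it carry the same cumulative rates for $i = 1, 2$. The same is needed for the within-class $\tau$-difference and for synchronised types $\alpha \in L$. This is not a local computation. For finite-state components one possible route is the following.
\begin{itemize}
\item The predecessors of a state that are reachable from it are exactly those in its strongly connected component $S$.
\item Finiteness, together with the strict conditions, forces the class-to-class flows realised inside $S$ not to leak out of $S$. It also forces the within-class $\tau$-flows to be balanced state by state.
\item Hence, on each strongly connected group of blocks $([Q'']_{\cB} \cap S) \times \{ \bar{Q} \}$, the product with $Q$ has a positive invariant measure that is constant on blocks.
\item Every predecessor edge inside such a group then lies on a cycle, so reachability of predecessors in the product depends only on $\cB$-classes.
\end{itemize}
Note that the paper's own proof has the same gap, since it rests on the same computation.
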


	\begin{proof}
Let $\cB$ be a weak strict bisimulation such that $(P_{1}, P_{2}) \in \cB$. We prove that:

		\begin{itemize}
\item $\cB' = \cI_{\cC} \cup \{ (Q_{1} \sync{L} Q, Q_{2} \sync{L} Q) \mid (Q_{1}, Q_{2}) \in \cB \}$ and
$\cB'' = \cI_{\cC} \cup \{ (Q \sync{L} Q_{1}, Q \sync{L} Q_{2}) \mid (Q_{1}, Q_{2}) \in \cB \}$ are weak
strict bisimulations too, where $\cI_{\cC}$ is the identity relation over $\cC$. Due to the symmetry of the
operational semantic rules for cooperation, without loss of generality we can focus on a single relation,
say $\cB'$. The only interesting cases are those in which we consider $(Q_{1} \sync{L} Q, Q_{2} \sync{L} Q)
\in \cB'$, hence $(Q_{1}, Q_{2}) \in \cB$, and $C \in \cC / \cB'$ of the form $[Q']_{\cB} \, \sync{L}
\bar{Q}$, i.e., $\{ Q'' \sync{L} \bar{Q} \mid (Q'', Q') \in \cB \}$.

\noindent
As for outgoing conditions, there are two subcases based on $\alpha \in \cA$ such that either $\alpha \neq
\tau$, or $\alpha = \tau$ and $Q_{1} \sync{L} Q, Q_{2} \sync{L} Q \notin C$:

			\begin{itemize}
\item If $\alpha \notin L$ then $q[Q_{1} \sync{L} Q, C, \alpha] = q[Q_{2} \sync{L} Q, C, \alpha]$ because
for $i \in \{ 1, 2 \}$ it holds that $q[Q_{i} \sync{L} Q, C, \alpha]$ is equal to:

				\begin{itemize}
\item $q[Q_{i}, [Q']_{\cB}, \alpha] + q[Q, \{ \bar{Q} \}, \alpha]$ if $Q_{i} \in [Q']_{\cB}$ and $Q =
\bar{Q}$ (not to be examined if $\alpha = \tau$)

\item $q[Q_{i}, [Q']_{\cB}, \alpha]$ if $Q_{i} \notin [Q']_{\cB}$ and $Q = \bar{Q}$ (hence $Q_{i} \sync{L} Q
\notin C$ when $\alpha = \tau$)

\item $q[Q, \{ \bar{Q} \}, \alpha]$ if $Q_{i} \in [Q']_{\cB}$ and $Q \neq \bar{Q}$

\item $0$ if $Q_{i} \notin [Q']_{\cB}$ and $Q \neq \bar{Q}$
				\end{itemize}

with $(Q_{1}, Q_{2}) \in \cB$.

\item If $\alpha \in L$ then $q[Q_{1} \sync{L} Q, C, \alpha] = q[Q_{2} \sync{L} Q, C, \alpha]$ because for
$i \in \{ 1, 2 \}$ it holds that $q[Q_{i} \sync{L} Q, C, \alpha] = \frac{q[Q_{i}, [Q']_{\cB},
\alpha]}{q[Q_{i}, \alpha]} \cdot \frac{q[Q, \{ \bar{Q} \}, \alpha]}{q[Q, \alpha]} \cdot \min(q[Q_{i},
\alpha], q[Q, \alpha])$ with $(Q_{1}, Q_{2}) \in \cB$. Note that $\alpha \in L$ implies $\alpha \neq \tau$,
hence $q[Q_{1}, [Q']_{\cB}, \alpha] = q[Q_{2}, [Q']_{\cB}, \alpha]$ -- from which $q[Q_{1}, \alpha] =
q[Q_{2}, \alpha]$ also follows -- when $Q_{1}, Q_{2} \in [Q']_{\cB}$.
			\end{itemize}

\noindent
As for incoming conditions, we proceed like in the proof of the corresponding result of
Theorem~\ref{thm:congr_weak_exact_bisim}.

\item $\cB' = \cI_{\cC} \cup \{ (Q_{1} \, / \, L, Q_{2} \, / \, L) \mid (Q_{1}, Q_{2}) \in \cB \}$ is a weak
strict bisimulation too. The only interesting cases are those in which we consider $(Q_{1} \, / \, L, Q_{2}
\, / \, L) \in \cB'$, hence $(Q_{1}, Q_{2}) \in \cB$, and $C \in \cC / \cB'$ of the form $[Q]_{\cB} \, / \,
L$, i.e., $\{ Q' \, / \, L \mid (Q, Q') \in \cB \}$.

\noindent
As for outgoing conditions, there are three subcases based on $\alpha \in \cA$ such that either $\alpha \neq
\tau$, or $\alpha = \tau$ and $Q_{1} \, / \, L, Q_{2} \, / \, L \notin C$:

			\begin{itemize}
\item If $\alpha \in L$ then $q[Q_{1} \, / \, L, C, \alpha] = 0 = q[Q_{2} \, / \, L, C, \alpha]$.

\item If $\alpha = \tau$ then $q[Q_{1} \, / \, L, C, \alpha] = q[Q_{2} \, / \, L, C, \alpha]$ because
$q[Q_{i} \, / \, L, C, \tau] = \sum_{\alpha' \in L \cup \{ \tau \}} q[Q_{i},$ \linebreak $[Q]_{\cB},
\alpha']$ for $i \in \{ 1, 2 \}$ with $(Q_{1}, Q_{2}) \in \cB$. Note that $Q_{i} \, / \, L \notin C$ implies
$Q_{i} \notin [Q]_{\cB}$.

\item If $\alpha \notin L \cup \{ \tau \}$ then $q[Q_{1} / L, C, \alpha] = q[Q_{2} / L, C, \alpha]$ because
$q[Q_{i} / L, C, \alpha] = q[Q_{i}, [Q]_{\cB}, \alpha]$ for $i \in \{ 1, 2 \}$ with $(Q_{1}, Q_{2}) \in
\cB$.
			\end{itemize}

\noindent
As for incoming conditions, we proceed like in the proof of the corresponding result of
Theorem~\ref{thm:congr_weak_exact_bisim}.
\qedhere
		\end{itemize}
	\end{proof}

	\begin{thm}[Congruence of $\approx_{\rm e}$ over $\cC_{\rm nr}$]
	\label{thm:congr_weak_exact_bisim_prefix_choice}
Let $P_{1}, P_{2} \in \cC_{\rm nr}$. If $P_{1} \approx_{\rm e} P_{2}$ then:

		\begin{itemize}
\item $a \, . \, P_{1} \approx_{\rm e} a \, . \, P_{2}$ for all $a \in \cA \times \cR$.

\item $P_{1} + P \approx_{\rm e} P_{2} + P$ and $P + P_{1} \approx_{\rm e} P + P_{2}$ for all $P \in
\cC_{\rm nr}$.
		\end{itemize}
	\end{thm}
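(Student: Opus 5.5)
The quickest route avoids building bisimulations by hand. I would reduce the statement to Theorem~\ref{thm:congr_strong_exact_bisim_prefix_choice}, using the collapse $\sim_{\rm e} {=} \approx_{\rm e}$ over $\cC_{\rm nr}$ from Theorem~\ref{thm:taxonomy_nr}. Assume $P_{1}, P_{2} \in \cC_{\rm nr}$ with $P_{1} \approx_{\rm e} P_{2}$. Theorem~\ref{thm:taxonomy_nr} then gives $P_{1} \sim_{\rm e} P_{2}$. Theorem~\ref{thm:congr_strong_exact_bisim_prefix_choice} yields $a \, . \, P_{1} \sim_{\rm e} a \, . \, P_{2}$ and $P_{1} + P \sim_{\rm e} P_{2} + P$, $P + P_{1} \sim_{\rm e} P + P_{2}$ for all $P \in \cC_{\rm nr}$. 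Finally, the inclusion ${\sim_{\rm e}} \subseteq {\approx_{\rm e}}$ of Theorem~\ref{thm:taxonomy} brings us back to $\approx_{\rm e}$.

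The two closure facts this needs are immediate. $\cC_{\rm nr}$ is closed under prefix and choice: $a \, . \, P_{i}$ and $P_{i} + P$ with $P \in \cC_{\rm nr}$ contain no recursive constants. So the resulting components lie in $\cC_{\rm nr}$, and the theorems above apply to them. The inclusion step is also unproblematic, since ${\sim_{\rm e}} \subseteq {\approx_{\rm e}}$ holds over all of $\cC$.

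The main obstacle is making sure the equality $\sim_{\rm e} {=} \approx_{\rm e}$ in Theorem~\ref{thm:taxonomy_nr} really holds for the global relations restricted to $\cC_{\rm nr}$. Its proof fixes the $\approx_{\rm e}$-class and uses Lemma~\ref{lem:incoming_rate_nr}, so it works with witnesses over all of $\cC$. I would therefore re-verify the key argument directly: take a weak exact bisimulation $\cB$ and let $\cB_{\rm nr}$ be its restriction to pairs in $\cC_{\rm nr}$, together with the identity. For every pair in $\cB_{\rm nr}$, all incoming quantities $q[C \cap \mathit{ds}(P_{i}), P_{i}, \alpha]$ vanish by Lemma~\ref{lem:incoming_rate_nr}. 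The $\tau$-condition within the class then reduces to $-q[P_{1}, \tau] = -q[P_{2}, \tau]$. Hence $\cB_{\rm nr}$ satisfies every clause of Definition~\ref{def:strong_exact_bisim}, and $P_{1} \sim_{\rm e} P_{2}$.

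If this reduction were questioned, the fallback is a direct proof mirroring Theorem~\ref{thm:congr_strong_exact_bisim_prefix_choice}. Take $\cB' = (\cB \cup \{ (a \, . \, Q_{1}, a \, . \, Q_{2}) \mid (Q_{1}, Q_{2}) \in \cB \})^{+}$, and analogously for choice. Old pairs keep their conditions because merging classes only adds the incoming contributions of two $\cB$-classes. New pairs have equal total exit rates per type: $r$ on the prefix type, or $q[Q_{i}, \alpha] + q[Q, \alpha]$ for choice. They also have zero incoming rates, so the special $\tau$-clause again reduces to equality of $\tau$ exit rates, which holds in both cases.
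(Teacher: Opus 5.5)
Your proposal is correct, but it takes a different route from the paper. The paper does not go through the strong result. It shows directly that $\cB' = (\cB \cup \{ ((\alpha, r) \, . \, Q_{1}, (\alpha, r) \, . \, Q_{2}) \mid (Q_{1}, Q_{2}) \in \cB \})^{+}$, and its analogue for choice, is a weak exact bisimulation over $\cC_{\rm nr}$. It checks old pairs against the merged classes and new pairs clause by clause. The own-class $\tau$-clause is handled by deriving $q[Q_{1}, \tau] = q[Q_{2}, \tau]$ from the weak $\tau$-condition and Lemma~\ref{lem:incoming_rate_nr}. This is essentially your fallback.

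Your reduction does that $\tau$-argument once, in the restriction $\cB_{\rm nr}$. It then reuses Theorem~\ref{thm:congr_strong_exact_bisim_prefix_choice} together with ${\sim_{\rm e}} \subseteq {\approx_{\rm e}}$, which is shorter and more modular. Building $\cB_{\rm nr}$ as the non-recursive part of $\cB$ plus the identity is also the right way to make the proof of Theorem~\ref{thm:taxonomy_nr} yield an actual strong exact bisimulation. That proof only checks the pair itself against the $\approx_{\rm e}$-classes and leaves this step implicit. In exchange, the paper's direct argument is self-contained, and its incoming-condition part is reused verbatim in Theorem~\ref{thm:congr_weak_strict_bisim_prefix_choice}.

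Your observation can be pushed further. Over $\cC_{\rm nr}$, both $\sim_{\rm e}$ and $\approx_{\rm e}$ relate exactly the components with equal total exit rates per activity type. So the choice clause is just additivity of $q[\cdot, \alpha]$, and the prefix clause holds for arbitrary $P_{1}, P_{2} \in \cC_{\rm nr}$.
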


	\begin{proof}
Let $\cB$ be a weak exact bisimulation over $\cC_{\rm nr}$ such that $(P_{1}, P_{2}) \in \cB$. We prove
that:

		\begin{itemize}
\item $\cB' = (\cB \cup \{ ((\alpha, r) \, . \, Q_{1}, (\alpha, r) \, . \, Q_{2}) \mid (Q_{1}, Q_{2}) \in
\cB \})^{+}$ is a weak exact bisimulation over $\cC_{\rm nr}$ too, where $^{+}$ denotes transitive closure.
There are only two interesting cases:

			\begin{itemize}
\item If we consider $(Q'_{1}, Q'_{2}) \in \cB' \cap \cB$ -- which thus already satisfies $q[Q'_{1}, \alpha]
= q[Q'_{2}, \alpha]$ for all $\alpha \in \cA \setminus \{ \tau \}$ -- and $C = [(\alpha, r) \, . \,
Q_{1}]_{\cB} \cup [(\alpha, r) \, . \, Q_{2}]_{\cB}$ for some $(Q_{1}, Q_{2}) \in \cB$, then:

				\begin{itemize}
\item $q[C \cap \mathit{ds}(Q'_{1}), Q'_{1}, \alpha'] = q[C \cap \mathit{ds}(Q'_{2}), Q'_{2}, \alpha']$ for
all $\alpha' \in \cA$ such that either $\alpha' \neq \tau$, or $\alpha' = \tau$ and $Q'_{1}, Q'_{2} \notin
C$, because $q[C \cap \mathit{ds}(Q'_{i}), Q'_{i}, \alpha'] = q[[(\alpha, r) \, . \, Q_{1}]_{\cB} \cap
\mathit{ds}(Q'_{i}), Q'_{i}, \alpha'] + q[[(\alpha, r) \, . \, Q_{2}]_{\cB} \cap \mathit{ds}(Q'_{i}),
Q'_{i}, \alpha']$ for $i \in \{ 1, 2 \}$ with $(Q'_{1}, Q'_{2}) \in \cB$. Note that $Q'_{i} \notin C$
implies $Q'_{i} \notin [(\alpha, r) \, . \, Q_{1}]_{\cB}$ and $Q'_{i} \notin [(\alpha, r) \, . \,
Q_{2}]_{\cB}$.

\item $q[C \cap \mathit{ds}(Q'_{1}), Q'_{1}, \alpha'] - q[Q'_{1}, \alpha'] = q[C \cap \mathit{ds}(Q'_{2}),
Q'_{2}, \alpha'] - q[Q'_{2}, \alpha']$ if $\alpha' = \tau$ and $Q'_{1}, Q'_{2} \in C$ because $q[C \cap
\mathit{ds}(Q'_{i}), Q'_{i}, \tau] - q[Q'_{i}, \tau] = q[[(\alpha, r) \, . \, Q_{1}]_{\cB} \cap
\mathit{ds}(Q'_{i}), Q'_{i}, \tau] + q[[(\alpha, r) \, . \, Q_{2}]_{\cB} \cap \mathit{ds}(Q'_{i}), Q'_{i},
\tau] - q[Q'_{i}, \tau]$ for $i \in \{ 1, 2 \}$ with $(Q'_{1}, Q'_{2}) \in \cB$. Note that $Q'_{i} \in C$
implies either $Q'_{i} \in [(\alpha, r) \, . \, Q_{1}]_{\cB}$ or $Q'_{i} \in [(\alpha, r) \, . \,
Q_{2}]_{\cB}$.
				\end{itemize}

\item If we consider $((\alpha, r) \, . \, Q_{1}, (\alpha, r) \, . \, Q_{2}) \in \cB'$, hence $(Q_{1},
Q_{2}) \in \cB$, then the result follows from:

				\begin{itemize}
\item $q[(\alpha, r) \, . \, Q_{1}, \alpha] = r = q[(\alpha, r) \, . \, Q_{2}, \alpha]$.

\item $q[(\alpha, r) \, . \, Q_{1}, \alpha'] = 0 = q[(\alpha, r) \, . \, Q_{2}, \alpha']$ for all $\alpha'
\in \cA \setminus \{ \alpha \}$.

\item $q[C \cap \mathit{ds}((\alpha, r) \, . \, Q_{1}), (\alpha, r) \, . \, Q_{1}, \alpha'] = 0 = q[C \cap
\mathit{ds}((\alpha, r) \, . \, Q_{2}), (\alpha, r) \, . \, Q_{2}, \alpha']$ for all $\alpha' \in \cA$ and
$C \in \cC / \cB'$ such that either $\alpha' \neq \tau$, or $\alpha' = \tau$ and $(\alpha, r) \, . \, Q_{1},
(\alpha, r) \, . \, Q_{2} \notin C$, because we are considering components with no recursion.

\item $q[C \cap \mathit{ds}((\alpha, r) \, . \, Q_{1}), (\alpha, r) \, . \, Q_{1}, \alpha'] - q[(\alpha, r)
\, . \, Q_{1}, \alpha'] = q[C \cap \mathit{ds}((\alpha, r) \, . \, Q_{2}), (\alpha, r) \, . \, Q_{2},
\linebreak \alpha'] - q[(\alpha, r) \, . \, Q_{2}, \alpha']$ for $\alpha' = \tau$ and $C \in \cC / \cB'$
such that $(\alpha, r) \, . \, Q_{1}, (\alpha, r) \, . \, Q_{2} \in C$ because $q[C \cap
\mathit{ds}((\alpha, r) \, . \, Q_{i}), (\alpha, r) \, . \, Q_{i}, \tau] - q[(\alpha, r) \, . \, Q_{i},
\tau] = 0 - q[(\alpha, r) \, . \, Q_{i}, \tau]$ for $i \in \{ 1, 2 \}$ as we are considering components with
no recursion, where $q[(\alpha, r) \, . \, Q_{i}, \tau]$ is $r$ or $0$ depending on whether $\alpha = \tau$
or not.
				\end{itemize}
			\end{itemize}

\item $\cB' = (\cB \cup \{ (Q_{1} + Q, Q_{2} + Q) \mid (Q_{1}, Q_{2}) \in \cB \})^{+}$ and $\cB'' = (\cB
\cup \{ (Q + Q_{1}, Q + Q_{2}) \mid (Q_{1}, Q_{2}) \in \cB \})^{+}$ are weak exact bisimulations over
$\cC_{\rm nr}$ too. Due to the symmetry of the operational semantic rules for choice, without loss of
generality we can focus on a single relation, say $\cB'$. There are only two interesting cases:

			\begin{itemize}
\item If we consider $(Q'_{1}, Q'_{2}) \in \cB' \cap \cB$ -- which thus already satisfies $q[Q'_{1}, \alpha]
= q[Q'_{2}, \alpha]$ for all $\alpha \in \cA \setminus \{ \tau \}$ -- and $C = [Q_{1} + Q]_{\cB} \cup [Q_{2}
+ Q]_{\cB}$ for some $(Q_{1}, Q_{2}) \in \cB$, then:

				\begin{itemize}
\item $q[C \cap \mathit{ds}(Q'_{1}), Q'_{1}, \alpha] = q[C \cap \mathit{ds}(Q'_{2}), Q'_{2}, \alpha]$ for
all $\alpha \in \cA$ such that either $\alpha \neq \tau$, or $\alpha = \tau$ and $Q'_{1}, Q'_{2} \notin C$,
because $q[C \cap \mathit{ds}(Q'_{i}), Q'_{i}, \alpha] = q[[Q_{1} + Q]_{\cB} \cap \mathit{ds}(Q'_{i}),
Q'_{i}, \alpha] + q[[Q_{2} + Q]_{\cB} \cap \mathit{ds}(Q'_{i}), Q'_{i}, \alpha]$ for $i \in \{ 1, 2 \}$ with
$(Q'_{1}, Q'_{2}) \in \cB$. Note that $Q'_{i} \notin C$ implies $Q'_{i} \notin [Q_{1} + Q]_{\cB}$ and
$Q'_{i} \notin [Q_{2} + Q]_{\cB}$.

\item $q[C \cap \mathit{ds}(Q'_{1}), Q'_{1}, \alpha] - q[Q'_{1}, \alpha] = q[C \cap \mathit{ds}(Q'_{2}),
Q'_{2}, \alpha] - q[Q'_{2}, \alpha]$ if $\alpha = \tau$ and $Q'_{1}, Q'_{2} \in C$ because $q[C \cap
\mathit{ds}(Q'_{i}), Q'_{i}, \tau] - q[Q'_{i}, \tau] = q[[Q_{1} + Q]_{\cB} \cap \mathit{ds}(Q'_{i}), Q'_{i},
\tau] + q[[Q_{2} + Q]_{\cB} \cap \mathit{ds}(Q'_{i}), Q'_{i}, \tau] - q[Q'_{i}, \tau]$ for $i \in \{ 1, 2
\}$ with $(Q'_{1}, Q'_{2}) \in \cB$. Note that $Q'_{i} \in C$ implies either $Q'_{i} \in [Q_{1} + Q]_{\cB}$
or $Q'_{i} \in [Q_{2} + Q]_{\cB}$.
				\end{itemize}

\item If we consider $(Q_{1} + Q, Q_{2} + Q) \in \cB'$, hence $(Q_{1}, Q_{2}) \in \cB$, then the result
follows from:

				\begin{itemize}
\item $q[Q_{1} + Q, \alpha] = q[Q_{2} + Q, \alpha]$ for all $\alpha \in \cA \setminus \{ \tau \}$ because
$q[Q_{i} + Q, \alpha] = q[Q_{i}, \alpha] + q[Q, \alpha]$ for $i \in \{ 1, 2 \}$ with $(Q_{1}, Q_{2}) \in
\cB$.

\item $q[C \cap \mathit{ds}(Q_{1} + Q), Q_{1} + Q, \alpha] = 0 = q[C \cap \mathit{ds}(Q_{2} + Q), Q_{2} + Q,
\alpha]$ for all $\alpha \in \cA$ and $C \in \cC / \cB'$ such that either $\alpha \neq \tau$, or $\alpha =
\tau$ and $Q_{1} + Q, Q_{2} + Q \notin C$, because we are considering components with no recursion.

\item $q[C \cap \mathit{ds}(Q_{1} + Q), Q_{1} + Q, \alpha] - q[Q_{1} + Q, \alpha] = q[C \cap
\mathit{ds}(Q_{2} + Q), Q_{2} + Q, \alpha] - q[Q_{2} + Q, \alpha]$ for $\alpha = \tau$ and $C \in \cC /
\cB'$ such that $Q_{1} + Q, Q_{2} + Q \in C$ because for $i \in \{ 1, 2 \}$ it holds that $q[C \cap
\mathit{ds}(Q_{i} + Q), Q_{i} + Q, \tau] - q[Q_{i} + Q, \tau] = 0 - q[Q_{i} + Q, \tau]$ as we are
considering components with no recursion, where $q[Q_{i} + Q, \tau] = q[Q_{i}, \tau] + q[Q, \tau]$ with
$(Q_{1}, Q_{2}) \in \cB$. Note that $q[Q_{1}, \tau] = q[Q_{2}, \tau]$ because, if we denote by $C'$ the
equivalence class of $Q_{1}$ and $Q_{2}$ with respect to $\cB$, then it must be the case that $q[C' \cap
\mathit{ds}(Q_{1}), Q_{1}, \tau] - q[Q_{1}, \tau] = q[C' \cap \mathit{ds}(Q_{2}), Q_{2}, \tau] - q[Q_{2},
\tau]$, hence $0 - q[Q_{1}, \tau] = 0 - q[Q_{2}, \tau]$ by exploiting again the fact that we are considering
components with no recursion.
\qedhere
				\end{itemize}
			\end{itemize}
		\end{itemize}
	\end{proof}

	\begin{thm}[Congruence of $\approx_{\rm s}$ over $\cC_{\rm nr}$]
	\label{thm:congr_weak_strict_bisim_prefix_choice}
Let $P_{1}, P_{2} \in \cC_{\rm nr}$. If $P_{1} \approx_{\rm s} P_{2}$ then:

		\begin{itemize}
\item $a \, . \, P_{1} \approx_{\rm s} a \, . \, P_{2}$ for all $a \in \cA \times \cR$.

\item $P_{1} + P \approx_{\rm s} P_{2} + P$ and $P + P_{1} \approx_{\rm s} P + P_{2}$ for all $P \in
\cC_{\rm nr}$.
		\end{itemize}
	\end{thm}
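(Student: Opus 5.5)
We follow the pattern of the proofs of Theorems~\ref{thm:congr_strong_strict_bisim_prefix_choice} and~\ref{thm:congr_weak_exact_bisim_prefix_choice}. Let $\cB$ be a weak strict bisimulation over $\cC_{\rm nr}$ with $(P_{1}, P_{2}) \in \cB$. For prefix we consider $\cB' = (\cB \cup \{ ((\alpha, r) \, . \, Q_{1}, (\alpha, r) \, . \, Q_{2}) \mid (Q_{1}, Q_{2}) \in \cB \})^{+}$. For choice we consider $\cB' = (\cB \cup \{ (Q_{1} + Q, Q_{2} + Q) \mid (Q_{1}, Q_{2}) \in \cB \})^{+}$, and symmetrically $\cB''$ for $Q + Q_{i}$, treating only $\cB'$ by symmetry of the choice rules. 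We then show that each is a weak strict bisimulation. Since the conditions of Definition~\ref{def:weak_strict_bisim} are the union of the outgoing conditions of Definition~\ref{def:weak_ordinary_bisim} and the incoming ones (including the $\tau$-difference condition) of Definition~\ref{def:weak_exact_bisim}, we verify the two groups separately.

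\textbf{Outgoing conditions.} For prefix, the argument of Theorem~\ref{thm:congr_weak_ordinary_bisim} carries over verbatim. Pairs already in $\cB$ see a class of $\cB'$ as a union of $\cB$-classes, and $Q'_{i} \notin C$ implies non-membership in each constituent. New prefix pairs reach the class of $Q_{1}, Q_{2}$ with rate $r$ via $\alpha$ and have rate $0$ elsewhere. For choice we use $q[Q_{i} + Q, C, \alpha] = q[Q_{i}, C, \alpha] + q[Q, C, \alpha]$.

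The delicate case is $\alpha = \tau$ with $C$ the class containing $Q_{i}$ but not $Q_{i} + Q$. There weak ordinary bisimilarity alone fails, as Example~\ref{ex:no_congr_choice_weak_ordinary_bisim} shows, so we exploit non-recursion together with the strict (exact) $\tau$-condition. By Lemma~\ref{lem:incoming_rate_nr}, the $\tau$-difference condition for $(Q_{1}, Q_{2})$ in their class yields $q[Q_{1}, \tau] = q[Q_{2}, \tau]$, exactly as at the end of the proof of Theorem~\ref{thm:congr_weak_exact_bisim_prefix_choice}. Moreover, in $\cC_{\rm nr}$ a $\tau$-transition from $Q_{i}$ into its own $\cB$-class contributes to the outgoing rate. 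I would argue that the outgoing $\tau$-rates of $Q_{1}$ and $Q_{2}$ into their own class coincide. They split as the total $\tau$-rate minus the $\tau$-rates to other classes, and both terms agree by the outgoing conditions. Hence $q[Q_{i}, C, \tau]$ agrees for every class $C$, which settles this case.

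\textbf{Incoming conditions.} These are handled as in Theorem~\ref{thm:congr_weak_exact_bisim_prefix_choice}. Old pairs decompose over constituent $\cB$-classes. For new pairs $(\alpha, r) \, . \, Q_{i}$ and $Q_{i} + Q$, all incoming cumulative rates from within the derivative set vanish by Lemma~\ref{lem:incoming_rate_nr}, because the derivation graphs are acyclic. The $\tau$-difference condition then reduces to equality of the total $\tau$-rates. This holds for prefix since the rate is $r$ or $0$ on both sides. For choice, $q[Q_{i} + Q, \tau] = q[Q_{i}, \tau] + q[Q, \tau]$, combined with $q[Q_{1}, \tau] = q[Q_{2}, \tau]$ derived above.

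\textbf{Main obstacle.} I expect the outgoing $\tau$-condition for choice to be the main obstacle, namely showing that the own-class $\tau$-flow of $Q_{i}$ becomes a cross-class flow for $Q_{i} + Q$. I would first try the total-rate subtraction argument sketched above. As a fallback, I would use that over $\cC_{\rm nr}$ the relation $\approx_{\rm s}$ coincides with $\sim_{\rm s}$ by Theorem~\ref{thm:taxonomy_nr}. Combined with Theorem~\ref{thm:congr_strong_strict_bisim_prefix_choice} and the inclusion $\sim_{\rm s} \subseteq \approx_{\rm s}$ of Theorem~\ref{thm:taxonomy}, this gives the whole statement directly, and it is likely the cleanest route.
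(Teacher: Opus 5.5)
Your proposal is correct. Your main route is the paper's route. You use the same transitive-closure relations $\cB'$ for prefix and choice. You check the outgoing conditions as in the proof for $\approx_{\rm o}$, and the incoming conditions via Lemma~\ref{lem:incoming_rate_nr} as in Theorem~\ref{thm:congr_weak_exact_bisim_prefix_choice}.

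Where you go beyond the paper is the case $\alpha = \tau$ with $C$ the $\cB'$-class that contains $Q_{1}, Q_{2}$ but not $Q_{i} + Q$. The paper disposes of this case with the identity $q[Q_{i} + Q, C, \alpha] = q[Q_{i}, C, \alpha] + q[Q, C, \alpha]$ ``with $(Q_{1}, Q_{2}) \in \cB$''. That tacitly assumes $q[Q_{1}, C, \tau] = q[Q_{2}, C, \tau]$ for the own class, which Definition~\ref{def:weak_strict_bisim} does not give directly. Your subtraction argument is exactly what is needed there:
\begin{itemize}
\item equal total $\tau$-rates, from the $\tau$-difference condition and Lemma~\ref{lem:incoming_rate_nr};
\item equal cross-class $\tau$-rates, from the outgoing conditions;
\item $\cB'$-classes being unions of $\cB$-classes.
\end{itemize}
One wording slip: the total-rate term agrees by the incoming $\tau$-difference condition, not ``by the outgoing conditions'', as you yourself derived one sentence earlier.

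Your fallback is a genuinely different and shorter route. You pass to $\sim_{\rm s}$ via Theorem~\ref{thm:taxonomy_nr}, apply Theorem~\ref{thm:congr_strong_strict_bisim_prefix_choice}, and return via $\sim_{\rm s} \subseteq \approx_{\rm s}$. It is valid given those results and avoids any new case analysis. The cost is that it leans on one step in the proof of Theorem~\ref{thm:taxonomy_nr}: the claim $q[P_{1}, C, \tau] = 0 = q[P_{2}, C, \tau]$ for the own class, which the paper justifies only with ``because $P_{1}, P_{2} \in \cC_{\rm nr}$''.

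That step is itself justified by your subtraction argument. The argument shows that all non-recursive members of a class share the same own-class $\tau$-rate $t$. If $t > 0$, you could follow $\tau$-transitions inside the class forever, which is impossible in a finite acyclic derivation graph. So $t = 0$, and the own-class $\tau$-flow you worried about is in fact null. Both routes therefore rest on the same fact. The direct route proves only the equality it needs. The fallback uses the stronger vanishing statement once and then reuses the strong congruence result.
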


	\begin{proof}
Let $\cB$ be a weak strict bisimulation over $\cC_{\rm nr}$ such that $(P_{1}, P_{2}) \in \cB$. We prove
that:

		\begin{itemize}
\item $\cB' = (\cB \cup \{ ((\alpha, r) \, . \, Q_{1}, (\alpha, r) \, . \, Q_{2}) \mid (Q_{1}, Q_{2}) \in
\cB \})^{+}$ is a weak strict bisimulation over $\cC_{\rm nr}$ too, where $^{+}$ denotes transitive closure.

\noindent
As for outgoing conditions, there are only two interesting cases:

			\begin{itemize}
\item If we consider $(Q'_{1}, Q'_{2}) \in \cB' \cap \cB$ and $C = [(\alpha, r) \, . \, Q_{1}]_{\cB} \cup
[(\alpha, r) \, . \, Q_{2}]_{\cB}$ for some $(Q_{1}, Q_{2}) \in \cB$, then $q[Q'_{1}, C, \alpha'] =
q[Q'_{2}, C, \alpha']$ for all $\alpha' \in \cA$ such that either $\alpha' \neq \tau$, or $\alpha' = \tau$
and $Q'_{1}, Q'_{2} \notin C$, because $q[Q'_{i}, C, \alpha'] = q[Q'_{i}, [(\alpha, r) \, . \, Q_{1}]_{\cB},
\alpha'] + q[Q'_{i}, [(\alpha, r) \, . \, Q_{2}]_{\cB}, \alpha']$ for $i \in \{ 1, 2 \}$ with $(Q'_{1},
Q'_{2}) \in \cB$. Note that $Q'_{i} \notin C$ implies $Q'_{i} \notin [(\alpha, r) \, . \, Q_{1}]_{\cB}$ and
$Q'_{i} \notin [(\alpha, r) \, . \, Q_{1}]_{\cB}$.

\item If we consider $((\alpha, r) \, . \, Q_{1}, (\alpha, r) \, . \, Q_{2}) \in \cB'$, hence $(Q_{1},
Q_{2}) \in \cB'$, then $q[(\alpha, r) \, . \, Q_{1}, C, \alpha'] \linebreak = q[(\alpha, r) \, . \, Q_{2},
C, \alpha']$ for all $\alpha' \in \cA$ and $C \in \cC / \cB'$ such that either $\alpha' \neq \tau$, or
$\alpha' = \tau$ and $(\alpha, r) \, . \, Q_{1}, (\alpha, r) \, . \, Q_{2} \notin C$, because:

				\begin{itemize}
\item If $\alpha' = \alpha$ and $C$ contains $Q_{1}$ and $Q_{2}$, then $q[(\alpha, r) \, . \, Q_{i}, C,
\alpha'] = r$ for $i \in \{ 1, 2 \}$.

\item If $\alpha' \neq \alpha$ or $C$ does not contain $Q_{1}$ and $Q_{2}$, then $q[(\alpha, r) \, . \,
Q_{i}, C, \alpha'] = 0$ for $i \in \{ 1, 2 \}$.
				\end{itemize}
			\end{itemize}

\noindent
As for incoming conditions, we proceed like in the proof of the corresponding result of
Theorem~\ref{thm:congr_weak_exact_bisim_prefix_choice}.

\item $\cB' = (\cB \cup \{ (Q_{1} + Q, Q_{2} + Q) \mid (Q_{1}, Q_{2}) \in \cB \})^{+}$ and $\cB'' = (\cB
\cup \{ (Q + Q_{1}, Q + Q_{2}) \mid (Q_{1}, Q_{2}) \in \cB \})^{+}$ are weak strict bisimulations over
$\cC_{\rm nr}$ too. Due to the symmetry of the operational semantic rules for choice, without loss of
generality we can focus on a single relation, say $\cB'$.

\noindent
As for outgoing conditions, there are only two interesting cases:

			\begin{itemize}
\item If we consider $(Q'_{1}, Q'_{2}) \in \cB' \cap \cB$ and $C = [Q_{1} + Q]_{\cB} \cup [Q_{2} + Q]_{\cB}$
for some $(Q_{1}, Q_{2}) \in \cB$, then $q[Q'_{1}, C, \alpha] = q[Q'_{2}, C, \alpha]$ for all $\alpha \in
\cA$ such that either $\alpha \neq \tau$, or $\alpha = \tau$ and $Q'_{1}, Q'_{2} \notin C$, because
$q[Q'_{i}, C, \alpha] = q[Q'_{i}, [Q_{1} + Q]_{\cB}, \alpha] + q[Q'_{i}, [Q_{2} + Q]_{\cB}, \alpha]$ for $i
\in \{ 1, 2 \}$ with $(Q'_{1}, Q'_{2}) \in \cB$. Note that $Q'_{i} \notin C$ implies $Q'_{i} \notin [Q_{1} +
Q]_{\cB}$ and $Q'_{i} \notin [Q_{2} + Q]_{\cB}$.

\item If we consider $(Q_{1} + Q, Q_{2} + Q) \in \cB'$, hence $(Q_{1}, Q_{2}) \in \cB$, then $q[Q_{1} + Q,
C, \alpha] = q[Q_{2} + Q, C, \alpha]$ for all $\alpha \in \cA$ and $C \in \cC / \cB'$ such that either
$\alpha \neq \tau$, or $\alpha = \tau$ and $Q_{1} + Q, Q_{2} + Q \notin C$, because $q[Q_{i} + Q, C, \alpha]
= q[Q_{i}, C, \alpha] + q[Q, C, \alpha]$ for $i \in \{ 1, 2 \}$ with $(Q_{1}, Q_{2}) \in \cB$.
			\end{itemize}

\noindent
As for incoming conditions, we proceed like in the proof of
Theorem~\ref{thm:congr_weak_exact_bisim_prefix_choice}.
\qedhere
		\end{itemize}
	\end{proof}

%
%
\section{Conclusion}
\label{sec:conclusion}
%
%

In this paper we have collected within a single, uniform framework the three strong bisimilarities and the
three weak bisimilarities definable over PEPA that respectively induce ordinary, exact, and strict lumpings.
Then we have organised them into a lumpability-driven taxonomy exhibiting all and only the inclusions
holding among them and shown how the taxonomy changes in three special cases: components whose underlying
CTMCs are time reversible, components with no $\tau$-activities, and components with no recursion. Finally
we have proved that all the six bisimilarities are congruences with respect to the cooperation and hiding
operators, while some of them are not congruences with respect to the prefix and/or choice operators. In
that case we have singled out either a set of components over which congruence with respect to those
operators is achieved, or the coarsest congruence with respect to them that is contained in the considered
bisimilarity.


As for future work, on the theoretical side we would like to develop sound and complete axiomatizations for
the six bisimilarities. The laws for strong ordinary bisimilarity over PEPA are already
known~\cite{hillston:book}, as well as those of the three strong bisimilarities over a reversible stochastic
process algebra~\cite{BR23}. Equational laws are very useful to precisely understand which components are
identified by a behavioral equivalence. They can also be employed as rewriting rules for algebraically
manipulating the components themselves in a way that preserves their semantics according to the considered
equivalence.

On the application side, the main direction stemming from our systematisation of knowledge concerns
noninterference analysis of stochastic systems. Noninterference, originally formalised by Goguen and
Meseguer~\cite{GM82}, requires that low-level agents be unable to deduce any information about the behaviour
of high-level agents, thus ruling out information leakage between the two levels. In the literature, weak
behavioral equivalences have been employed to define noninterference properties according to a common
scheme: a system is deemed secure when its behaviour in isolation, i.e., when it is prevented from
interacting on confidential actions, is observationally equivalent to its behaviour when cooperating with an
arbitrary high-level attacker. If the weak equivalence is stochastic, such properties capture not only
functional information flows but also timing covert channels, whereby the attacker infers confidential
information from the timing of the observable behaviour, thus protecting the system against timing attacks.

Persistent Stochastic Non-Interference (PSNI)~\cite{PSNI_strong} instantiates this scheme with lumpable
bisimilarity, i.e., weak ordinary bisimilarity in our terminology, whereas its recent refinement named Exact
Persistent Stochastic Non-Interference (EPSNI)~\cite{valuetools25-pepa} adopts exact lumpable bisimilarity,
i.e., weak exact bisimilarity. Related lines of work have also addressed interference-sensitive behavioural
relations and probabilistic energy-aware modelling for mobile ad-hoc
networks~\cite{BugliesiGMRH12,GallinaHMR11}, further witnessing the relevance of quantitative
noninterference analysis beyond the setting considered here. The other equivalences studied in this paper,
most notably the strict ones, have never been exploited in this context.

In the light of the equivalence spectrum presented here, it would be natural to investigate the taxonomy
that arises at the level of the noninterference properties induced by these equivalences, i.e., to what
extent the inclusions and incomparabilities of Figure~\ref{fig:spectrum} transfer to the corresponding
security properties. The works closest to this programme are those addressing noninterference in the context
of reversible computation, which cover nondeterministic processes~\cite{EABR25}, probabilistic
processes~\cite{EAB24}, deterministically timed processes~\cite{EAB25b}, and stochastically timed
processes~\cite{EAB25a}. Among them, the stochastically timed setting of~\cite{EAB25a} is the closest to
ours, although it relies on Interactive Markov Chains~\cite{hermanns:book} rather than on PEPA and makes use
of a stochastic variant of branching bisimilarity~\cite{GW96} rather than weak bisimilarity. In this
respect, the fact that strict and exact lumpabilities coincide over reversible CTMCs suggests that the
strict bisimilarities introduced here may be natural candidates for the noninterference analysis of
reversible stochastic systems modeled according to an integrated view of activities.

\section*{Acknowledgement}
This study was carried out within the PE0000014 -- ``Security and Rights in the CyberSpace (SERICS)'' and
received funding from the European Union Next-GenerationEU -- National Recovery and Resilience Plan (NRRP)
-- MISSION 4, COMPONENT 2, INVESTMENT 1.3 -- CUP N. H73C22000890001. This work has been also partially
supported by the Research Project INDAM GNCS 2025 -- CUP E53C24001950001 -- ``Modelli e Analisi per sistemi
Reversibili e Quantistici (MARQ)'' and the Research Project PRIN 2020 -- CUP 20202FCJMH -- ``Noninterference
and Reversibility Analysis in Private Blockchains (NiRvAna)''. This manuscript reflects only the authors'
views and opinions; neither the European Union nor the European Commission can be considered responsible for
them.

\bibliographystyle{alphaurl}
\bibliography{performance,security,concurrency}

\end{document}